\documentclass[sn-vancouver,Numbered]{sn-jnl}
\usepackage{graphicx}%
\usepackage{multirow}%
\usepackage{amsmath,amssymb,amsfonts}%
\usepackage{amsthm}%
\usepackage{mathrsfs}%
\usepackage[title]{appendix}%
\usepackage{xcolor}%
\usepackage{textcomp}%
\usepackage{manyfoot}%
\usepackage{booktabs}%
\usepackage{algorithm}%
\usepackage{algorithmicx}%
\usepackage{algpseudocode}%
\usepackage{listings}%
\newtheorem{lemma}{Lemma}

\newtheorem{claim}{Claim}
\newtheorem{corollary}{Corollary}
\theoremstyle{thmstyleone}%
\newtheorem{theorem}{Theorem}%
\usepackage{verbatim}
\theoremstyle{thmstyletwo}%

\theoremstyle{thmstylethree}%

\begin{document}
	
	\title[Article Title]{Short Cycles Decide P-versus-NPC Status of Hamiltonicity on Bisplit Graphs }
	
	
	\author*[1]{\fnm{Mahendra Kumar} \sur{R}}\email{coe18d004@iiitdm.ac.in}
	\author[2]{\fnm{Renjith} \sur{P}}\email{renjith@nitc.ac.in}
	\author[1]{\fnm{Aadhavan} \sur{S}}\email{aadhav1395@gmail.com}
	\author*[1]{\fnm{Sadagopan} \sur{N}}\email{sadagopan@iiitdm.ac.in}

	\affil[1,3,4]{\orgdiv{Computer Science and Engineering}, \orgname{Indian Institute of Information Technology Design and Manufacturing, Kancheepuram}, \orgaddress{ \city{Chennai}, \postcode{600 127}, \state{Tamil Nadu}, \country{India}}}
	
	\affil[2]{\orgdiv{Computer Science and Engineering}, \orgname{National Institute of Technology Calicut}, \orgaddress{ \city{Kozhikode}, \postcode{673 601}, \state{Kerala}, \country{India}}}

	
	\abstract{
		A connected graph $G$ is said to be a bisplit graph if the vertex set of $G$ can be partitioned into a stable set and a complete bipartite graph. We establish the following dichotomy with chordality being the parameter; for chordal bisplit graphs, Hamiltonian cycle (HCYCLE) and Hamiltonian path (HPATH) problems are polynomial-time solvable, and for chordal bipartite bisplit graphs, HCYCLE (HPATH) is NP-complete. We further strengthen the result of \cite{muller1996Hamiltonian} and show that HCYCLE (HPATH) is polynomial-time solvable on  $P_5$-free chordal bipartite graphs (bipartite chain graphs) and NP-complete on $P_{10}$-free chordal bipartite graphs. By using our polynomial results on HCYCLE (HPATH) as a framework, we solve many variants and generalizations of HCYCLE (HPATH), which are also reported in this paper.}

	\keywords{Bisplit graphs, Hamiltonian cycle (path), Chordal graphs, Chordal bipartite graphs}

	\maketitle
	
	\section{Introduction}\label{sec1} 
		For a connected graph, the Hamiltonian cycle (path) problem aims to decide whether there is a cycle (path) that visits each node exactly once in the graph. It is one of the classical problems in computing, and it is known that the computational complexity of the Hamiltonian Cycle (HCYCLE) and the Hamiltonian Path (HPATH) problems on split graphs are NP-complete, which are graphs whose vertex set can be partitioned into a clique and a stable set \cite{hpsplit}. Further, many interesting dichotomy results for the Hamiltonian Cycle (Path) problem are reported in \cite{hcsplit,hpsplit}.   Motivated by the work of Renjith et al., in this paper, we ask, what is the computational complexity of  HCYCLE (HPATH)  on bisplit graphs, which are the bipartite analogue of split graphs wherein the vertex set is partitioned into a biclique (complete bipartite graph) and a stable set. The objective of this paper is to perform a fine-grained analysis of HCYCLE (HPATH) in bisplit graphs and discover dichotomy results similar to \cite{hcsplit,hpsplit}.

	A connected graph $G$ is said to be a bisplit graph if its vertex set can be partitioned into a stable set and a complete bipartite graph \cite{bisplit}. Due to its nice structure, it has attracted many researchers working in algorithmic graph theory, and further, it is known that Domination \cite{domi}, Feedback Vertex Set \cite{fvs} are NP-complete on bisplit graphs, and Independent set \cite{golumbie1980algorithmic} and Coloring \cite{color} are polynomial-time solvable in bisplit graphs.

	There is one more motivation for this study.   It is important to highlight that chordal graphs \cite{dirac1961rigid} (graphs that forbid induced cycles of length at least four) and chordal bipartite graphs \cite{fulkerson1972incidence}(bipartite graphs that forbid induced cycles of length at least six),  have a nice structural characterization with respect to minimal vertex separators and a special ordering, namely, perfect elimination ordering \cite{golumbie1980algorithmic} among its vertices (edges). These graphs are the most sought-after graph classes for problems that are known to be NP-complete in general graphs for the reason that these graphs help to understand the gap between NP-complete and polynomial-time solvable instances. Surprisingly, HCYCLE (HPATH) is NP-complete on chordal graphs \cite{bertossi1986Hamiltonian} and chordal bipartite graphs \cite{muller1996Hamiltonian}.   As part of our fine-grained analysis, and to strengthen the existing results on HCYCLE (HPATH), we impose the 'bisplit' property on chordal and chordal bipartite graphs. 
	
	In this paper, we wish to perform a fine-grained analysis of HCYCLE (HPATH) for bisplit graphs with chordality as the parameter. That is, on chordal bisplit and chordal bipartite bisplit graphs. Discovering the interplay between chordality and bisplit property for HCYCLE (HPATH) shall be the primary focus of this paper.   We believe such a study will narrow the gap between the P-versus-NPC boundary in HCYCLE (HPATH). 

Firstly, we show that for chordal bisplit graphs, HCYCLE (HPATH) is polynomial-time solvable, and HCYCLE (HPATH) is NP-complete on chordal bipartite bisplit graphs. These results clearly strengthen the results of \cite{Krishnamoorthy:1975:NPB:990518.990521,munro}. This brings an interesting dichotomy result for HCYCLE (HPATH) on bisplit graphs, with chordality being the parameter. That is, the presence of only $C_3$'s in bisplit graphs, yields a polynomial result for HCYLCE (HPATH), whereas, the presence of only $C_4$'s in bisplit graphs makes the problem NP-complete. Thus, we observe that the 'short cycle' decides the P-versus-NPC status of HCYCLE (HPATH) for bisplit graphs.

To further strengthen the complexity result of HCYCLE (HPATH) on chordal bipartite bisplit graphs, we show that even if the stable set is adjacent to one of the bipartitions of biclique, HCYCLE (HPATH) remains NP-complete. This line of study creates an avenue for further research, and we ask: 

\emph{- What is the complexity status of HCYCLE (HPATH) if convex ordering or nested neighborhood ordering is imposed on one of the partitions?}

 On the other hand, imposing convex property yields a subclass of convex bipartite graphs, which are well-studied graph classes in the literature \cite{hcconvex}, and interestingly, HCYCLE in convex bipartite graphs is polynomial-time solvable, and HPATH in convex bipartite graphs is open (some special cases are polynomial-time solvable).   On the other hand, if we impose nested neighborhood ordering on one of the partitions, we obtain bipartite chain graphs, and further, we establish in this paper that such graphs are $P_5$-free chordal bipartite graphs. We revisit and reprove some of the structural results of $P_5$-free chordal bipartite graphs in a slightly different way with the objective of using our results as a framework to solve many variants and generalizations of HCYCLE (HPATH) restricted to $P_5$-free chordal bipartite graphs. We show that HCYCLE (HPATH) in $P_5$-free chordal bipartite graphs is polynomial-time solvable, and it is NP-complete on $P_{10}$-free chordal bipartite graphs. Our discovery clearly strengthens the results of \cite{muller1996Hamiltonian} and narrows the gap between tractable-versus-intractable instances of chordal bipartite graphs for Hamiltonicity problems.  

Using our framework, we solve Hamiltonicity variants and generalizations in $P_5$-free chordal bipartite graphs, namely Pancyclicity, Hamiltonian connected, Homogeneously traceable, Hypo-Hamiltonian problems, which are known to be NP-complete on chordal bipartite graphs. A connected graph $G$ is pancyclic \cite{bondy1971pancyclic} if $G$ contains cycles of all lengths $l$, $3 \leq l \leq |V(G)|$, and $G$ is said to be bipancyclic \cite{bipancyclic} if $G$ contains a cycles of all even length $2l$, $2 \leq l \leq \frac{|V(G)|}{2}$. A connected graph $G$ is said to be Hamiltonian connected \cite{kewen2008hamiltonianhamiltonconnected} if there exists a Hamiltonian path between every pair of vertices and $G$ is said to be homogeneously traceable \cite{skupien1984homogeneously} if there exists a Hamiltonian path beginning at every vertex of $G$. A hypo-Hamiltonian graph is a non-Hamiltonian graph $G$ such that $G - v$ is Hamiltonian for every vertex $v \in V (G)$. Further, we observe that these variants are NP-complete on $P_{10}$-free chordal bipartite graphs. \\\\	
\noindent\textbf{Organization of the paper:} We next present preliminaries necessary to understand this article. In Section 2, we discuss our hardness results on HCYCLE (HPATH) on chordal bipartite bisplit graphs. The polynomial results on chordal bisplit graphs are discussed in Section 3. In Section 4, we revisit the study on $P_5$-free chordal bipartite graphs and present our framework for Hamiltonicity, its variants, and generalizations. The hardness result of Hamiltonicity for $P_{10}$-free chordal bipartite bisplit graphs is also discussed in Section 4. We conclude this article with some interesting directions for further research.
	\subsection{A few Notation and Definitions}
	In this paper, we work with simple, connected, undirected, and unweighted graphs. We follow the notation and definitions as defined in \cite{golumbie1980algorithmic,dbwest2003}. For a graph $G$, let $V(G)$ denote the vertex set and $E(G)$ denote the edge set. The notation $\{u,v\}$ represents an edge incident on the vertices $u$ and $v$. The neighborhood of a vertex $v$ of $G$, $N_G(v)$, is the set of vertices adjacent to $v$ in $G$. The degree of a vertex $u$ is denoted by $d_G(u)=|N_G(u)|$. We use $d(u)$ and $d_G(u)$ interchangeably if the graph under consideration is clear from the context. The neighborhood of a set $S\subseteq V(G)$ in $G$ is $N_G(S)=\{u~|~u\notin S,v\in S,\{u,v\}\in E(G)\}$.

	A graph $H$ is said to be an induced subgraph of $G$ if for all $u, v \in V (H)$, $\{u, v\} \in E(H)$ if and only if $\{u, v\} \in E(G)$. The graph induced on $V(G)\setminus\{u\}$ is denoted by $G-u$. A path $P$ is a graph on $V(P) = \{u=u_1,u_2,\ldots,v=u_n\}$, $n \geq 1$ and $E(P) =\{\{u_i,u_{i+1}\}~|~1\leq i \leq n-1\}$.  We also denote by $P_{uv}$ or $P_n$, and whose length is $|V(P)|$. A graph $G$ is said to be connected if a path exists between every pair of vertices. A cycle $C$ is a graph on $V(C) = \{u_1,u_2,\ldots,u_n\}$, $n \geq 3$ and $E(C) =\{\{u_i,u_{i+1}\}~|~1\leq i \leq n-1\}\cup \{\{u_n,u_1\}\}$, and $C_k,k\geq3$ denote a cycle on $k$ vertices. A graph $G(X,Y)$ is said to be a biclique if it can be partitioned into two independent sets $X$, and $Y$ such that $E(G)=\{\{u,v\}~|~u\in X, v \in Y\}$, we denote by $K_{m,n}$, $m=|X|, n=|Y|$. For some positive integers $i$ and $j$, a maximal biclique denoted by $K_{i,j}$, is a biclique such that there are no strict supergraphs $K_{i+1,j}$ or $K_{i,j+1}$. A connected graph $G(K,I)$ is said to be a split graph if it can be partitioned into a clique $K$ and an independent set $I$. An edge $e=\{u,v\}$ of a split graph is said to be a clique edge if $u,v\in K$. A connected graph $G(K_1,K_2,I)$ is said to be a bisplit graph if it can be partitioned into a complete bipartite graph ($K_1\cup K_2$ induces a biclique) and an independent set $I$. An edge $e=\{u,v\}$ of a bisplit graph is said to be a biclique edge if $u\in K_1,v\in K_2$. A chordal graph $G$ is said to be strongly chordal if each cycle $C$ of even length has an odd chord in $C$, i.e., an edge joining a pair of vertices in $C$ whose distance is odd. A strongly chordal split graph satisfies both strongly chordal and split properties. For a strongly chordal split graph $G(K,I)$, the underlying bipartite graph $H(X,Y)$ is defined as $X=K$, $Y=I$, and $E(H)=\{\{u,v\}~|~u\in K, v \in I, \{u,v\}\in E(G)\}$. We observe from \cite{muller1996Hamiltonian} that $G(K,I)$ is strongly chordal split if and only if the underlying bipartite graph $H(X,Y)$ of $G(K,I)$ is chordal bipartite.

	\section{Hardness Results: HCYCLE (HPATH) on Chordal Bipartite Bisplit Graphs}
	In this section, we present our first hardness result, which is HCYCLE on chordal bipartite bisplit graphs is NP-complete, which further strengthens the result of \cite{muller1996Hamiltonian}. To establish our result, we present a polynomial-time reduction from HCYCLE in strongly chordal split graphs to HCYCLE in chordal bipartite bisplit graphs.   We observe that the hardness result reported in \cite{muller1996Hamiltonian} for HCYCLE in strongly chordal split graphs is true for the case $|K|=|I|$. However, to present our hardness result, we need the case $|K|>|I|$. We shall first show that on strongly chordal split graphs when  $|K|>|I|$, HCYCLE is NP-complete through Lemma \ref{scpghpe}. To show Lemma \ref{scpghpe}, we use HPATH on chordal bipartite graphs when $|X|=|Y|$ as a candidate problem whose complexity is known to be NP-complete \cite{muller1996Hamiltonian}.\\
	\begin{lemma}\label{scpghpe}
		For strongly chordal split graphs when $|K|=|I|$, HPATH is NP-complete.
	\end{lemma}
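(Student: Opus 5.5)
Membership in NP is immediate, since a Hamiltonian path is a polynomial-size certificate that can be checked in linear time. For hardness we reduce from HPATH on chordal bipartite graphs $H(X,Y)$ with $|X|=|Y|$, which is NP-complete by \cite{muller1996Hamiltonian}. Given such an $H$, we build the split graph $G(K,I)$ with $K=X$ made into a clique, $I=Y$ kept independent, and every edge of $H$ retained between $K$ and $I$. The underlying bipartite graph of $G$ is exactly $H$, which is chordal bipartite. By the observation from \cite{muller1996Hamiltonian} recalled in the preliminaries, $G$ is therefore strongly chordal split, and $|K|=|X|=|Y|=|I|$. The construction is clearly polynomial.

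Next we show that $H$ has a Hamiltonian path if and only if $G$ has one.

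The forward direction is trivial, since $E(H)\subseteq E(G)$.

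For the converse, let $P$ be a Hamiltonian path of $G$. The vertices of $I$ are pairwise nonadjacent, so consecutive vertices of $I$ on $P$ are separated by at least one vertex of $K$. With $|I|=n$ vertices of $I$ on the path, there are at least $n-1$ such separating positions. We count the edges of $P$:
\begin{itemize}
\item $P$ has $2n-1$ edges in total.
\item Each vertex of $I$ is incident to at most two path edges, and every such edge goes to $K$. Let $e_{KI}$ denote the number of $K$--$I$ edges on $P$. If $P$ contains a clique edge, the vertices of $I$ can no longer alternate perfectly with those of $K$.
\end{itemize}

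We make this precise by viewing $P$ as a sequence of $2n$ vertices containing $n$ vertices of $I$, no two of them consecutive. The only way to place $n$ pairwise nonadjacent positions among $2n$ is to take every other position, starting with either the first or the second. Hence $P$ alternates strictly between $I$ and $K$. It follows that every edge of $P$ is a $K$--$I$ edge, and all of these are edges of $H$. So $P$ is a Hamiltonian path of $H$.

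The main subtlety is this exact-alternation argument. It relies crucially on $|K|=|I|$; with $|K|>|I|$, clique edges could appear on the path, which is why the later lemma for $|K|>|I|$ needs a different argument. Here, however, the pigeonhole count is tight and the proof is complete.
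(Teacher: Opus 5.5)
\textbf{There is a genuine gap: the alternation claim is false, and the reduction as you set it up is unsound.} It is not true that $n$ pairwise non-consecutive positions among $2n$ must be every other position. The $n$ vertices of $K$ on $P$ determine $n+1$ gaps: before the first, between consecutive ones, and after the last. Each gap holds at most one vertex of $I$, and only $n$ gaps need to be filled, so exactly one gap is empty. That gives $n+1$ patterns, not $2$. When the empty gap is internal, $P$ starts and ends in $I$ and uses exactly one clique edge. In $H$ this only yields a cover by two disjoint paths, not a Hamiltonian path.

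Here is a concrete counterexample. Let $H$ be the tree with $X=\{a,b,c\}$, $Y=\{1,2,3\}$ and edges $a1,a2,a3,b2,c3$. It is chordal bipartite with $|X|=|Y|$, and it has no Hamiltonian path because it has three leaves $1,b,c$. Yet in your split graph $(1,a,3,c,b,2)$ is a Hamiltonian path, using the clique edge $\{c,b\}$. So your closing remark is backwards: at $|K|=|I|$ the count is \emph{not} tight, because there is one slack vertex of $K$.

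The missing ingredient is a gadget that pins one end of the path to the $K$ side, and this is what the paper's construction does. It adds a new vertex $s$ to the clique side, adjacent to every other vertex of $K$. It also adds a new vertex $t$ to $I$ whose only neighbour is $s$. (The paper writes out only the new edges at $s$; the clique edges on $X$ must be present as well, as in your construction.) Then $|K|=|X|+1=|Y|+1=|I|$. Moreover, $\{s,t\}$ is an isolated edge of the underlying bipartite graph, so that graph is still chordal bipartite.

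Since $t$ is pendant, every Hamiltonian path has the form $(t,s,x,\ldots)$ with $x\in X$. The remaining $2n$ vertices therefore start on the $K$ side, so the first gap is empty and all the others must be filled. Hence the path after $s$ strictly alternates between $X$ and $Y$ and uses no clique edge, i.e.\ it is a Hamiltonian path of $H$. For the forward direction, a Hamiltonian path of $H$ has one end in $X$ (because $|X|=|Y|$), and you attach $s$ and then $t$ at that end. With this modification, the rest of your write-up goes through unchanged: NP membership, and strong chordality via the underlying bipartite graph.
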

	\begin{proof}
		
		We reduce an instance $G(X,Y)$ of HPATH on chordal bipartite graphs when $|X|=|Y|$ to the corresponding instance $H(K,I)$ of HPATH on strongly chordal split graphs when $|K|=|I|$. Our construction is as follows:  $V(H)=K\cup I$; $K=X \cup\{s\}$, $I=Y\cup \{t\}$ and $E(H)= E(G)\cup E'$, $E'=\{\{s,t\}\}\cup \{\{s,v\}\mid v\in K, s\ne v\}$. We now show that $H$ is a strongly chordal split graph. Recall that the underlying bipartite graph of any strongly chordal split graph is chordal bipartite. Observe that in the underlying bipartite graph of $H$, there is no induced cycle of length at least six containing $s$ or $t$. Therefore, $H$ is chordal bipartite. Thus, $H$ is a strongly chordal split graph. Further, it is easy to see that $|K|=|I|$. For the necessary part, if $G$ has a Hamiltonian path $P$, then we extend $P$ by adding $s$, followed by $t$ to get a Hamiltonian path $P'$ in $H$. Conversely, if $H$ has a Hamiltonian path $P'$, then it must be the case that $P'$ starts at $t$ and visits $s$ followed by other vertices of $H$, which are precisely $V(G)$. Clearly, $P'$ without the vertices $s$ and $t$ is a Hamiltonian path $P$ in $G$. \end{proof}
	Using Lemma \ref{scpghpe}, we now prove that on strongly chordal split graphs when $|K|=|I|+1$, HCYCLE is NP-complete.\\

	\begin{lemma} \label{scpghc}
		For strongly chordal split graphs when $|K|=|I|+1$, HCYCLE is NP-complete.
	\end{lemma}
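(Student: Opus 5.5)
We reduce from HPATH on strongly chordal split graphs with $|K|=|I|$, which is NP-complete by Lemma \ref{scpghpe}. Let $H(K,I)$ be such an instance with $|K|=|I|=n$. We build $H'(K',I')$ by adding one new clique vertex $w$, so $K'=K\cup\{w\}$ and $I'=I$. The vertex $w$ is made adjacent to every vertex of $K$, which keeps $K'$ a clique, and to every vertex of $I$. Then $|K'|=|I'|+1$ and $H'$ is a split graph. Membership in NP is immediate.

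\textbf{Strong chordality.} We use the stated equivalence: $H'$ is strongly chordal split if and only if its underlying bipartite graph is chordal bipartite. The underlying bipartite graph of $H'$ is that of $H$ together with a vertex $w$ on the $K$-side that is adjacent to all of $I$. Suppose an induced cycle of length at least six passes through $w$. Its two neighbours of $w$ on the cycle lie in $I$ and are at distance two from each other. The cycle also contains further $I$-vertices, and $w$ is adjacent to all of them, so the cycle has a chord. Hence every induced cycle of length at least six avoids $w$ and lies in the chordal bipartite underlying graph of $H$. So $H'$ is strongly chordal split.

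\textbf{Forward direction.} Suppose $H$ has a Hamiltonian path $P$ from $u$ to $v$. Since $w$ is adjacent to every vertex of $H$, closing $P$ through $w$ gives the Hamiltonian cycle $u P v\, w\, u$ of $H'$.

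\textbf{Backward direction.} Suppose $H'$ has a Hamiltonian cycle $C$. Deleting $w$ from $C$ leaves a path that covers exactly $V(H)$. All of its edges are edges of $H$, because $w$ was the only vertex added and no edges were added among old vertices. This path is therefore a Hamiltonian path of $H$.

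The only step needing any care is the strong chordality check. It reduces to noting that a vertex universal to one side of the bipartite graph cannot lie on a chordless cycle of length at least six, since such a cycle would contain at least three vertices of that side.
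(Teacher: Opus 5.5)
Your proposal is correct and follows the paper's proof. The paper also adds a single new clique vertex $s$ adjacent to every vertex of $K\cup I$, closes a Hamiltonian path through $s$ for one direction, and deletes $s$ from a Hamiltonian cycle for the other. Your explicit check that the new vertex, being universal to $I$, cannot lie on an induced cycle of length at least six in the underlying bipartite graph is a useful addition, since the paper leaves the strong chordality of the constructed graph unverified.
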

	\begin{proof}
		We present a polynomial-time reduction from HPATH on strongly chordal split graphs when $|K|=|I|$ to HCYCLE on strongly chordal split graphs when $|K|=|I|+1$. Let $G(K,I)$ be an instance of HPATH on the strongly chordal split graph. We now construct the corresponding instance of HPATH on a strongly chordal split graph $H(K',I')$. $V(H)=K'\cup I'$, $K'=K\cup \{s\}$, $I'=I$ and $E(H)= E(G)\cup \{ \{s,v_i\}\mid v_i\in K', s\ne v_i, 1\leq i\leq |K'|\} \cup \{ \{s,u_j\}\mid u_j\in I', 1\leq j\leq |I'|\}$.
		For the necessary part, suppose $G$ has a Hamiltonian path $P$. By our construction, the endpoints of $P$ can be extended to make it adjacent to $s$; thereby, we obtain a Hamiltonian cycle in $H$. Conversely, if $H$ has a Hamiltonian cycle $C$, then $C$ without the vertex $s$ gives the corresponding Hamiltonian path in $G$. 
	\end{proof}

	We now make use of Lemma \ref{scpghc} to prove the hardness results of HCYCLE and HPATH on chordal bipartite bisplit graphs. \\
	\begin{theorem}\label{hcbi}
		For chordal bipartite bisplit graphs, the Hamiltonian cycle problem is NP-complete.
	\end{theorem}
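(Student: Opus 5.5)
The plan is a polynomial-time reduction from HCYCLE on strongly chordal split graphs with $|K|=|I|+1$, which is NP-complete by Lemma~\ref{scpghc}. Membership in NP is immediate, since a Hamiltonian cycle is a certificate checkable in polynomial time, so everything rests on the reduction.

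\textbf{Construction.} Given $G(K,I)$ with $|K|=|I|+1$, build $H$ as follows. Start from the underlying bipartite graph of $G$, that is, delete all clique edges inside $K$ and keep all edges between $K$ and $I$. Then add one new vertex $w$ adjacent to every vertex of $K$.

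\textbf{$H$ is a chordal bipartite bisplit graph.} We check three properties.
\begin{itemize}
\item \emph{Bisplit.} Take $K_1=K$ and $K_2=\{w\}$. Then $K_1\cup K_2$ induces the biclique $K_{|K|,1}$, and $I$ is still a stable set. $H$ is connected because $w$ sees all of $K$ and each vertex of $I$ has a neighbour in $K$; the latter holds since $G$ is connected.
\item \emph{Bipartite.} $H$ is bipartite with sides $K$ and $I\cup\{w\}$.
\item \emph{Chordal bipartite.} The underlying bipartite graph of $G$ is chordal bipartite by the characterization quoted from \cite{muller1996Hamiltonian}, since $G$ is strongly chordal split. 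So any induced cycle of length at least six in $H$ would have to pass through $w$. Such a cycle contains at least three vertices of $K$, all adjacent to $w$. Only two of them can be cycle-neighbours of $w$, so $w$ has a chord to a third, a contradiction.
\end{itemize}
This chordality argument is the step needing the most care, but it is short.

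\textbf{Equivalence of Hamiltonicity.} Suppose $G$ has a Hamiltonian cycle $C$. Because $I$ is stable, each vertex of $I$ on $C$ lies between two vertices of $K$. Since $|K|=|I|+1$, the $|K|$ gaps between consecutive $K$-vertices on $C$ contain exactly $|I|$ vertices of $I$. Hence exactly one gap is empty, meaning $C$ uses exactly one clique edge $\{a,b\}$. Replacing $\{a,b\}$ by the path $a,w,b$ gives a Hamiltonian cycle of $H$, because every other edge of $C$ is a $K$–$I$ edge present in $H$. Conversely, a Hamiltonian cycle of $H$ visits $w$ between two vertices $a,b\in K$. Replacing $a,w,b$ by the clique edge $\{a,b\}$ of $G$ gives a Hamiltonian cycle of $G$. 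The construction is clearly polynomial, which completes the reduction.
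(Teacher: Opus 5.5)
Your reduction is correct, and it takes a genuinely different and lighter route than the paper's. The paper starts from the same source problem (Lemma~\ref{scpghc}) but builds a doubled instance:
\begin{itemize}
\item two copies $V_1,V_2$ of $K$, joined into the balanced biclique $K_{|K|,|K|}$;
\item two copies $V_3,V_4$ of $I$, with $V_4$ attached to $V_1$ and $V_3$ attached to $V_2$ according to the $K$--$I$ edges of $G$.
\end{itemize}
In the forward direction, a Hamiltonian cycle of $G$ minus its unique clique edge is copied into both halves, and the copies are spliced together by two biclique edges. In the converse, the count $|V_1|=|V_4|+1$ forces exactly two biclique edges. So $V_1\cup V_4$ carries a Hamiltonian path with both ends in $V_1$, which closes up in $G$ via a clique edge. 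The paper's chordality check for this doubled graph is also somewhat involved.

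You keep a single copy and add one apex $w$ dominating $K$, so the biclique is the star $K\cup\{w\}$. This makes both key steps essentially one line each. For chordality, a long induced cycle must pass through $w$ and then meets at least three of its neighbours in $K$. For the equivalence, you only swap the subpath $a,w,b$ for the clique edge $\{a,b\}$ in one direction and back in the other.

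Your instances are also more special in two useful ways. First, $I$ sees only $K_1=K$, so the same construction already gives Theorem~\ref{hcbioneside}, which the paper proves with a separate gadget. Second, your biclique is a star, which is exactly the biclique shape forced in the chordal case of Section~3. That sharpens the $C_3$-versus-$C_4$ dichotomy: the biclique has the same shape on both sides of the P-versus-NPC boundary.

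What the paper's doubling buys is hardness on instances whose biclique is large and balanced on both sides. Its result therefore does not depend on accepting a star $K_{m,1}$ as the biclique. Under that reading, every bipartite graph with a vertex dominating the other side already counts as bisplit.
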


\begin{proof}
	We present a polynomial-time reduction that reduces an instance of HCYCLE on strongly chordal split graphs $G(K,I)$ when $|K|=|I|+1$ to the corresponding chordal bipartite bisplit graph $H(K'_1,K'_2,I'_3,I'_4)$ instance. Let $x_1,x_2,\ldots,x_{|K|}$ represents the vertices of $K$, and $y_1,y_2,\ldots,y_{|I|}$ represents the vertices of $I$ in $G$.  Mapping of $G$ to $H$ is as follows: $V(H)=V_1\cup V_2\cup V_3\cup V_4$; $V_1=\{x^1_i\mid x_i\in K, ~1\leq i \leq |K|\}$, $V_2=\{x^2_i\mid x_i\in K, ~1\leq i \leq |K|\}$, $V_3=\{y^1_j\mid y_j\in I, ~1\leq j \leq |I|\}$, $V_4=\{y^2_j\mid y_j\in I, ~1\leq j \leq |I|\}$ and $E(H)=E'\cup E'', E'=\{\{x^1_i,y^2_j\},\{x^2_i,y^1_j\}\mid  \{x_i,y_j\}\in E(G)\}$, $E''=\{\{x^1_i,x^2_j\}\mid x^1_i\in V_1,~x^2_j\in V_2, ~1\leq i \leq |K|,1\leq j \leq |K|\}$.  Note that $H$ is a bisplit graph with $V_1\cup V_2$ as biclique and $V_3 \cup V_4$ as an independent set. We now show that $H$ is chordal bipartite. Suppose that there exists an induced cycle of length at least six $C_l$, $l\geq 6$. Since $V_1\cup V_2$ is a biclique, $C_l$ can contain at least two vertices from $V_1$ and at most one vertex from $V_2$ or vice versa. Without loss of generality, assume that $C_l$ contains at least two vertices from $V_1$ and at most one vertex from $V_2$. This implies that at least five vertices must be from  $V_1 \cup V_4$. Clearly, this shows that the underlying bipartite graph of $G$ is not a chordal bipartite, a contradiction. Therefore, $C_l$, $l\geq 6$ does not exist in $H$. Thus, $H$ is chordal bipartite. We claim that $G$ is a yes-instance of HCYCLE if and only if $H$ is a yes-instance of HPATH.
	
	\emph{{Necessity:}} Suppose $G$ has a Hamiltonian $C$. Let $E(C)$ denote the edge set of $C$.  Since $|K|=|I|+1$, $C$ must have exactly one clique edge $e=\{x_i,x_j\}$.   Clearly, $C$ without $e$ is a path $P_{x_ix_j}$ spanning $V(G)$. For simplicity, we use $P$ to refer $P_{x_ix_j}$. We use the same order in which the path $P$ visits the vertices of $G$ to construct a subpath $P'=(x^1_i,\ldots x^1_j)$ that spans $V_1\cup V_4$. An edge $\{x_l,y_s\}$ in $P$ is modified to the corresponding edge $\{x^1_l,y^2_s\}$ in $P'$. Similarly, we obtain a subpath $P''=(x^2_i,\ldots x^2_j)$ that spans $V_2\cup V_3$ using. Observe that the endpoints of $P'$ are from $V_1$ and the endpoints of $P''$ are from $V_2$. Since $V_1\cup V_2$ is a biclique, the endpoints of $P'$ and $P''$ can be extended to obtain a cycle  $C'$ in $H$. The constructed $C'$ is as follows; $C'=(x^1_i,\ldots,x^1_j,x^2_i,\ldots,x^2_j,x^1_i)$.   
	
	\emph{{Sufficiency:}}  Suppose $H$ contains a Hamiltonian cycle $C'$, we now construct the corresponding Hamiltonian cycle $C$ in $G$. Let $X_1=V_1 \cup V_4$ and $X_2=V_2 \cup V_3$. Observe that any Hamiltonian cycle  must use  edges of type $\{x^1_i,x^2_j\}$ or $\{x^1_i,x^2_i\}$ (biclique edges) to alternate between $X_1$ and $X_2$.  Since $|V_1| = |V_4|+1$ and $|V_2| = |V_3|+1$, the number of biclique edges is bounded by two, say $e_1$ and $e_2$. Clearly, $C'$ without $e_1$ and $e_2$ gives two subpaths $P'=(x^1_i,\ldots,x^1_j)$ and $P''=(x^2_i,\ldots,x^2_j)$ such that $P'$ spans the vertices of $X_1$ and $P''$ spans the vertices of $X_2$. Without loss of generality, we consider $P'$ to construct the corresponding Hamiltonian cycle $C$ in $G$.  By our construction the graph induced on $X_1$ is same as $G$ and $\{x_i,x_j\}\in E(G)$. Thus, $P'$ along with $\{x_i,x_j\}$ is the desired Hamiltonian cycle in $G$.  	
\end{proof}
	
	\noindent
	\textbf{Insights into the reduction instances of Theorem \ref{hcbi}:} 
	We observe that the diameter of a bisplit graph is at most four. Since $H$ has two universal vertices (one at each partition), the reduction instances are diameter three chordal bipartite bisplit graphs. Thus, we conclude that for chordal bipartite bisplit graphs with diameter three, HCYCLE is NP-complete.   Further, the $H$ has an induced path $P_k$ of arbitrary length. Thus, we ask, what is the complexity status of HCYCLE in $P_k$-free chordal bipartite bisplit graphs, $k\geq 5$. We shall revisit this question in Section \ref{p5}. \\
	\begin{theorem} \label{hpbi}
		For chordal bipartite bisplit graphs, the Hamiltonian path problem is NP-complete.
	\end{theorem}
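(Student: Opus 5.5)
We reduce HCYCLE on chordal bipartite bisplit graphs, shown NP-complete in Theorem \ref{hcbi}, to HPATH on chordal bipartite bisplit graphs, using the standard pendant-vertex gadget adapted to keep the bisplit and chordal bipartite structure. Membership in NP is immediate, so only hardness needs work. Let $G(K_1,K_2,I)$ be an instance produced by the reduction of Theorem \ref{hcbi}. There $K_1=V_1$, $K_2=V_2$, $V_1\cup V_2$ is a biclique, and every vertex of $V_1$ is adjacent to every vertex of $V_2$. Fix a vertex $u\in V_1$ and build $H$ as follows. Add a twin $u'$ of $u$, meaning $N_H(u')=N_G(u)$, and put $u'$ into the stable set. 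Then add two new stable vertices $a$ and $b$ with $N_H(a)=\{u\}$ and $N_H(b)=\{u'\}$.

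We must check that $H$ is bisplit and chordal bipartite. The biclique is still $V_1\cup V_2$, and $I\cup\{u',a,b\}$ is stable: $u'$ is adjacent only to the neighbours of $u$, which lie in $V_2\cup I$-side vertices of the opposite colour class and never in $I\cup\{a,b\}$ except through $u$'s neighbourhood. Here we need $u$ to have no neighbour in the stable set. If it does, we should choose $u$ in a way that avoids this, or place $u'$ in $V_1$ itself; since $N(u')=N(u)\supseteq V_2$, the set $V_1\cup\{u'\}\cup V_2$ is still a biclique. That second option is cleaner, so I would use it. Bipartiteness is preserved because $u'$ copies the colour of $u$ and the pendants take the opposite colour. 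For chordal bipartiteness, pendant vertices lie on no cycle. A false twin cannot lie on an induced cycle of length at least six together with its twin, and if it lies on one without its twin, replacing $u'$ by $u$ gives an induced cycle in $G$. So $H$ has no induced $C_l$ with $l\ge 6$.

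For correctness, suppose $G$ has a Hamiltonian cycle $C$. Let $w$ be a neighbour of $u$ on $C$ and delete the edge $\{u,w\}$. Since $w\in N(u)=N(u')$, the path $a,u,\ldots,w,u',b$ is a Hamiltonian path of $H$. Conversely, $a$ and $b$ have degree one, so any Hamiltonian path of $H$ must have endpoints $a$ and $b$. It therefore has the form $a,u,P,w,u',b$, where $P$ spans the rest of $V(G)$. Since $w\in N(u')=N(u)$, the path $u,P,w$ closed by the edge $\{w,u\}$ is a Hamiltonian cycle of $G$.

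The main obstacle is bookkeeping rather than combinatorics: we must confirm that the added vertices keep the partition into a biclique plus a stable set and create no induced long cycle. The twin argument handles the cycles, and the placement of $u'$ inside the biclique side handles the partition. If that placement caused any issue, the fallback would be to apply the same gadget to a vertex of $V_2$ instead.
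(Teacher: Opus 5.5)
Your final construction, with the twin $u'$ placed in $K_1$ and pendants $a,b$ on $u$ and $u'$ in the stable set, is exactly the paper's reduction with $w,w',s,t$, and your correctness argument is the paper's; your false-twin argument for chordal bipartiteness is more careful than the paper's one-line justification. Drop the first placement of $u'$ in the stable set outright rather than calling it less clean, since $\{u',b\}$ would then be an edge inside the stable set.
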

	\begin{proof}
		The reduction is from the Hamiltonian cycle problem on chordal bipartite bisplit graphs. We present a polynomial-time reduction that reduces an instance of a chordal bipartite bisplit graph $G(K_1,K_2,I)$ to the corresponding chordal bipartite bisplit graph $H(K'_1,K'_2,I')$ instance. We map an instance of $G$ to the corresponding instance of $H$ as follows: Let $w\in K_1$. We create a vertex $w'$ in $K'_1$, which is a copy of $w$.  $V(H)=K'_1\cup K'_2 \cup I'$; $K'_1=K_1\cup \{w'\}$, $K'_2=K_2$, $I'=I\cup \{s,t\}$. Let $w\in K_1$, and let $S=N_G(w)$. $E(H)=E(G)\cup E'\cup \{\{w,s\},\{w',t\}\}$, $E'=\{\{w',z\}\mid z\in S\}$. It is easy to see that $H$ is a bisplit graph. Since $w'\in K'_1$ and $K'_1\cup K'_2$ is a biclique, $w$ does not create a cycle of length at least six in $H$. Thus, $H$ is chordal bipartite and bisplit. We claim that $G$ is a yes-instance of the Hamiltonian cycle problem if and only if $H$ is a yes-instance of the Hamiltonian path problem.
		
		For the necessary part, suppose $G$ has a Hamiltonian cycle $C$. Let $C=(w,x,\ldots,x',w)$. Since $N_G(w)=N_G(w')$, $\{x',w'\}\in E(H)$. Clearly, $(s,w,x,\ldots,x',w',t)$ is a Hamiltonian path in $H$. Conversely, since $s$ and $t$ are pendant vertices, any Hamiltonian path $P$ in $H$ must start at $s$ and end at $t$ and vice versa. Let $P=(s,w,x,\ldots,x',w',t)$. Observe that $\{x',w\}\in E(G)$. Clearly, $P$ without $s,t$ and $w'$ and along with the edge $\{x',w\}$ is a Hamiltonian cycle in $G$.
	\end{proof}	 
	\noindent
	{\bf Remarks:} Results on chordal bipartite bisplit graphs reveal that the presence of only $C_4$'s makes the computation of HCYCLE (HPATH) difficult. Further, we know from the result of Dhanalakshmi et al. \cite{dhana}, that on bisplit graphs with only $C_k$, for fixed $k \geq 6$, computation of HCYCLE (HPATH) is polynomial-time solvable (strictly chordality $k$ graphs).   It is now natural to ask what is the computational complexity of HCYCLE (HPATH) in chordal bisplit graphs (bisplit with  only $C_3$'s), which we shall address next.
	\section{Hamiltonian cycle (path) in Chordal Bisplit Graphs}
	In this section, we shall investigate the structure of chordal bisplit graphs, which are bisplit graphs and satisfy the chordality property, i.e., the graph is either acyclic or any induced cycle is of length three. Using our structural results, we present a polynomial-time algorithm for HCYCLE (HPATH) in chordal bisplit graphs. For a chordal bisplit graph $G$ with bipartition $K_1,K_2,I$, let $K_1=\{x_1,x_2,\ldots,x_m\}$, $K_2=\{y_1,y_2,\ldots,y_n\}$ and $I=\{z_1,z_2,\ldots,z_l\}$ where $K_1\cup K_2$ is a biclique and $I$ is an independent set. Observe that for chordal bisplit graphs with bipartition $K_1,K_2,I$, if both $|K_1|\geq 2$ and $|K_2|\geq 2$ then there is an induced cycle of length four. Therefore, one of them, say, $|K_1|\leq 1$. This implies that the structure of the underlying complete bipartite graph is a star. Let $\{x,y_1,y_2,\ldots, y_n\}$ induces a star with $x$ as the root of the star and $\{y_1,y_2,\ldots, y_n\}$ as the leaves.  To compute HCYCLE (HPATH), we have to understand the structure of the graph induced on $K_2\cup I$. We shall next present the structure of $K_2\cup I$ for the existence of HCYCLE (HPATH).\\
\begin{lemma}\label{degree}
	Let $G=(K_1=\{x\},K_2,I)$ be a chordal bisplit graph. For all $z\in I$, if $|N_G(z) \cap K_2|\geq2$, then $\{z,x\}\in E(G)$.
\end{lemma}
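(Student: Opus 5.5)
We argue by contradiction, using only the chordality of $G$ and the fact that $K_2$ is an independent set. Suppose some $z\in I$ has two distinct neighbours $y_a,y_b\in K_2$, but $\{z,x\}\notin E(G)$. We will exhibit an induced cycle of length four, which is impossible in a chordal graph.

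Consider the cycle $C=(x,y_a,z,y_b,x)$. Its four edges are present:
\begin{itemize}
\item $\{x,y_a\}$ and $\{x,y_b\}$ lie in $E(G)$ because $\{x\}\cup K_2$ induces a biclique, namely the star rooted at $x$;
\item $\{z,y_a\}$ and $\{z,y_b\}$ lie in $E(G)$ by the choice of $y_a,y_b\in N_G(z)\cap K_2$.
\end{itemize}

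Next we check that $C$ has no chord. The only possible chords are $\{y_a,y_b\}$ and $\{x,z\}$. The first is absent because $K_2$ is one side of the complete bipartite graph on $K_1\cup K_2$, so it is an independent set. The second is absent by our assumption. Hence $C$ is an induced $C_4$ in $G$.

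This contradicts the chordality of $G$, so $\{z,x\}\in E(G)$. There is no real obstacle here. The only point that needs care is that $K_2$ is independent in the bisplit partition, which the definition of a biclique guarantees.
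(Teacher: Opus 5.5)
Your proof is correct and is essentially the paper's argument: if $\{z,x\}\notin E(G)$, then $x$, $z$ and two neighbours of $z$ in $K_2$ induce a $C_4$, contradicting chordality. You also spell out why neither possible chord is present ($K_2$ is independent, and $\{x,z\}$ is excluded by assumption), which the paper leaves implicit.
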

\begin{proof}
	Suppose that there exists a vertex $z\in I$ such that $|N_G(z) \cap K_2|\geq2$ and $\{z,x\}\notin E(G)$.   Then, for some $y,w \in (N_G(z) \cap K_2)$, $\{z,y,w,x\}$ induces a $C_4$,  contradicting the chordality property of $G$.   \end{proof}
\begin{lemma}\label{k2structure}
	Let $G=(K_1=\{x\},K_2,I)$ be a connected chordal bisplit graph. Then, the graph induced on $K_2\cup I$ is either a tree or a forest.
\end{lemma}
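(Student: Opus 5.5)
We must show that the graph $G[K_2\cup I]$ is acyclic. Since $K_2$ and $I$ are both independent sets ($K_2$ is one side of the biclique, $I$ is stable), $G[K_2\cup I]$ is bipartite with bipartition $(K_2,I)$. Hence any cycle in it has even length at least four, and every vertex on it has two neighbours on the cycle from the opposite side. It therefore suffices to show that $G[K_2\cup I]$ contains no cycle at all. A graph with no cycle is a forest, and a tree in the connected case, which is exactly the statement.

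Suppose, for contradiction, that $G[K_2\cup I]$ contains a cycle. Choose a shortest cycle $C$ in it; then $C$ is an induced cycle of $G[K_2\cup I]$, and its length is even, say $2k$ with $k\ge 2$. Write $C=(y_{1},z_{1},y_{2},z_{2},\ldots,y_{k},z_{k},y_{1})$ with $y_i\in K_2$ and $z_i\in I$. Each $z_i$ has two neighbours in $K_2$, namely $y_i$ and $y_{i+1}$. So by Lemma \ref{degree}, $\{z_i,x\}\in E(G)$ for every $i$. Moreover $x$ is adjacent to every $y_i$, because $\{x\}\cup K_2$ induces a star. Thus $x$ is adjacent to all vertices of $C$.

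The contradiction is cleanest in the following two cases.

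\emph{Case $k\ge 3$.} Consider $C$ itself in $G$. The vertex $x$ is not on $C$, and $C$ is induced in $G[K_2\cup I]$. Hence $C$ is an induced cycle of $G$ of length $2k\ge 6$, which contradicts chordality.

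\emph{Case $k=2$.} Then $C=(y_1,z_1,y_2,z_2,y_1)$ is a $4$-cycle. Its only possible chords are $\{y_1,y_2\}$ and $\{z_1,z_2\}$. Neither exists, since $K_2$ and $I$ are independent. So $C$ is an induced $C_4$ in $G$, again contradicting chordality.

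In fact no case distinction is really needed. A shortest cycle in the bipartite graph $G[K_2\cup I]$ is chordless in $G$, because $x\notin V(C)$ and all other vertices of $G$ outside $K_2\cup I$ are absent ($K_1=\{x\}$). It has length at least $4$, which contradicts the chordality of $G$. In particular, Lemma \ref{degree} is not even required for the argument. Therefore $G[K_2\cup I]$ is acyclic, i.e., a forest, and it is a tree when connected.

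The only step needing care is justifying that a shortest cycle of $G[K_2\cup I]$ is an induced cycle of $G$. Any chord of such a cycle would be an edge of $G$ between two vertices of $K_2\cup I$, and hence an edge of $G[K_2\cup I]$. That edge would split the cycle into a strictly shorter cycle, which is impossible by minimality. Since $x$ is not a vertex of $C$, $x$ cannot contribute chords. So the cycle of length $\ge 4$ is induced in $G$, which contradicts chordality.
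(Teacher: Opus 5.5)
Your proof is correct and follows the paper's argument: $G[K_2\cup I]$ is a bipartite induced subgraph of a chordal graph, so a shortest cycle in it would be an induced cycle of length at least four in $G$. You just spell out what the paper's one-line proof leaves implicit, and, as you note yourself, the detour through Lemma \ref{degree} and the adjacency of $x$ to $C$ is unnecessary.
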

\begin{proof} Since $G$ is chordal, and $K_2\cup I$ induces a biparite graph, it follows that $K_2\cup I$ is either a tree or a forest. 
\end{proof}
\begin{theorem}\label{HCBCG}
	Let $G=(K_1=\{x\},K_2,I)$ be a connected chordal bisplit graph. Then $G$ has a Hamiltonian cycle if and only if the graph induced on $K_2\cup I$ is a path $P$ whose endpoints are adjacent to $x$.
\end{theorem}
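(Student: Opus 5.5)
Both directions rest on one observation: $x$ is a single vertex whose removal leaves exactly the graph $G[K_2\cup I]$. So any Hamiltonian cycle of $G$, with $x$ deleted, becomes a Hamiltonian path of $G[K_2\cup I]$. By Lemma \ref{k2structure} this graph is a forest. I will combine these two facts.

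\emph{Sufficiency.} Suppose $G[K_2\cup I]$ is a path $P=(u_1,\ldots,u_k)$ with $\{x,u_1\},\{x,u_k\}\in E(G)$. Then $(x,u_1,\ldots,u_k,x)$ is a Hamiltonian cycle, since $V(G)=\{x\}\cup K_2\cup I$.

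One degenerate case must be handled: a cycle needs at least three vertices, so we need $k\geq 2$. If $k=1$ the statement degenerates. The graph is then $K_2$, which is not Hamiltonian under the paper's definition of a cycle ($n\geq 3$). I would note that the path $P$ must have at least two vertices, so that $u_1\neq u_k$ and the two edges at $x$ are distinct. In the write-up I would state the theorem for $|V(G)|\geq 3$.

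\emph{Necessity.} Let $C$ be a Hamiltonian cycle of $G$, and let $a,b$ be the two neighbours of $x$ on $C$. Then $C-x$ is a path $Q$ from $a$ to $b$ that spans $K_2\cup I$, and every edge of $Q$ lies in $G[K_2\cup I]$. Thus $Q$ is a spanning path, in particular a spanning tree, of the forest $F=G[K_2\cup I]$. Here I invoke Lemma \ref{k2structure}. Because $F$ has a spanning tree, $F$ is connected, so it is a tree. A tree has only one spanning tree, namely itself, so $F=Q$ and $F$ has no edges outside $Q$. Hence $G[K_2\cup I]$ is exactly the path $Q$, and its endpoints $a,b$ are adjacent to $x$ via the two cycle edges at $x$.

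\emph{Main obstacle.} The argument is short; the only substantive step is the one in necessity. A Hamiltonian path of $G[K_2\cup I]$ does not by itself make that graph a path. The acyclicity from Lemma \ref{k2structure} (chordal plus bipartite forces a forest) is what rules out extra edges. I would make that step explicit with the counting argument: a tree on $k$ vertices has exactly $k-1$ edges, and $Q$ already uses all $k-1$ of them, so no edge of $F$ lies outside $Q$. Lemma \ref{degree} is not needed here. It only explains the induced structure: an internal path vertex in $I$ with two neighbours in $K_2$ is forced to be adjacent to $x$.
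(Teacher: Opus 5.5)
Your proof is correct and follows the same route as the paper: remove $x$ from the Hamiltonian cycle, use Lemma \ref{k2structure} to conclude that the forest $G[K_2\cup I]$, which then has a spanning path, must equal that path, and for the converse close the path through $x$. Your edge count ($k-1$ edges in a tree on $k$ vertices) supplies the justification that the paper only asserts. Your caveat about $|K_2\cup I|=1$ is also valid and missing from the paper: there $G$ is a single edge, so the ``if'' direction fails as literally stated.
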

\begin{proof}
	Let $C$ be a Hamiltonian cycle in $G$. It is clear from Lemma \ref{k2structure} that the graph induced on $K_2\cup I$ is either a tree or a forest. Since $G$ has a Hamiltonian cycle, it must be the case that the graph induced on $K_2\cup I$ is a path. This implies that $C$ without $x$ is a path spanning $K_2\cup I$ with the property that the endpoints of $P$ are adjacent to $x$. Conversely, $P$ is a path on $K_2\cup I$ whose endpoints are adjacent to $x$. Clearly, $P$ along with $x$, is the desired Hamiltonian cycle in $G$.
\end{proof}

\noindent We shall now analyze the structure of $K_2\cup I$ to show the existence of a Hamiltonian path. We observe that the structure of $K_2\cup I$ is a special tree of Type- $H_1,H_2,H_3,H_4$ which are illustrated in  Figure \ref{hpathcb}.\\
\begin{figure} 
	\begin{center}
		\includegraphics[width=130mm,scale=0.5]{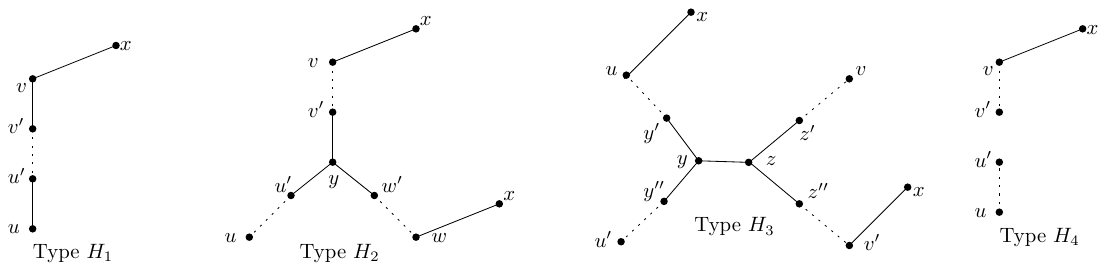}
		\caption{\small \sl   Possible structures of $K_2\cup I$ for the existence of a Hamiltonian path \label{hpathcb}}
	\end{center}
\end{figure}
\begin{theorem}\label{hpcbsp}
	Let $G=(K_1=\{x\},K_2,I)$ be a chordal bisplit graph. Then, $G$ has a Hamiltonian path if and only if the graph induced on $K_2\cup I$ is isomorphic to $H_1, H_2$, $H_3$, or $H_4$.
\end{theorem}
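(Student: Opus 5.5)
We cannot see Figure~\ref{hpathcb}, so the first step is to fix what $H_1,\dots,H_4$ must be, and derive them as the complete list of possible shapes of $F=G[K_2\cup I]$. By Lemma~\ref{k2structure}, $F$ is a forest. Since $x$ is the only vertex outside $F$, a Hamiltonian path $P$ of $G$ is determined by where $x$ lies on it.
\begin{itemize}
\item If $x$ is an endpoint of $P$, then $P-x$ is a Hamiltonian path of $F$. Hence $F$ is a single path, and one of its endpoints is adjacent to $x$.
\item If $x$ is internal, then $P-x$ splits into two vertex-disjoint paths covering $F$. One endpoint of each is adjacent to $x$. Thus $F$ is either two disjoint paths, or one tree that is a path plus one extra edge joining the two pieces.
\end{itemize}
I expect $H_1,\dots,H_4$ to be exactly these shapes:
\begin{itemize}
\item a path with an endpoint adjacent to $x$;
\item two disjoint paths, each with an endpoint adjacent to $x$;
\item a spider with three legs (a subdivided claw), with two leg ends adjacent to $x$, or with one leg end and the branch vertex side arranged suitably;
\item a double-broom-like tree obtained by joining two paths via an edge between interior vertices.
\end{itemize}
I will match these to the figure's labels.

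For necessity, take a Hamiltonian path $P$ of $G$ and split on the position of $x$. If $x$ is an endpoint, $F$ contains a Hamiltonian path. Since $F$ is a forest (Lemma~\ref{k2structure}) and $F$ is an induced subgraph whose edges are only tree edges, $F$ must equal that path; otherwise an extra edge closes a cycle. This gives type $H_1$. If $x$ is internal with neighbours $a,b$ on $P$, then $P-x=P_1\cup P_2$ with $a\in P_1$ and $b\in P_2$, and $F$ is $P_1\cup P_2$ plus possibly some edges between them. Any two such edges, or any chord within $P_i$, would create a cycle in the forest $F$. So at most one edge $\{u,v\}$ with $u\in P_1$, $v\in P_2$ is present. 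The position of $u$ and $v$ relative to the endpoints gives the remaining cases: no extra edge; an extra edge at an endpoint of one path, which yields a longer path or a spider; or interior on both, which yields a tree with two branch vertices.

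Here I would use Lemma~\ref{degree} to constrain adjacency to $x$. Every vertex of $I$ with at least two neighbours in $K_2$ is adjacent to $x$, and every $y\in K_2$ is adjacent to $x$. In each case the required endpoints $a,b$ must be adjacent to $x$. Vertices in $K_2$ always are, while leaves in $I$ may not be, and this fixes the side conditions in the figure.

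For sufficiency, for each type I simply exhibit the path. I concatenate the pieces through $x$ using the assumed adjacencies, exactly as in the proof of Theorem~\ref{HCBCG}.

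The main obstacle is the bookkeeping in the internal-$x$ case: enumerating the placements of the single cross edge and showing they collapse to precisely the four pictured types with the correct adjacency-to-$x$ conditions. I would handle this by classifying according to the number of vertices of degree $3$ in $F$ (zero, one or two; degree $4$ is impossible since $P_1\cup P_2$ plus one edge raises at most two degrees) together with the number of components (one or two). This yields four cases, which should match $H_1$ to $H_4$.
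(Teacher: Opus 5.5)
Your plan is essentially the paper's proof. You split the Hamiltonian path at $x$, use that $G[K_2\cup I]$ is a forest (Lemma~\ref{k2structure}) so at most one edge joins the two pieces, classify by whether that edge joins two endpoints, an endpoint and an interior vertex, or two interior vertices, or is absent ($H_1,\dots,H_4$), and get sufficiency by concatenating the pieces through $x$. Your explicit treatment of $x$ as an endpoint and your use of Lemma~\ref{degree} to track adjacency to $x$ are, if anything, more careful than the paper.

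One caution for when you fix the side conditions. In the endpoint--endpoint case, both neighbours of $x$ on the Hamiltonian path can be interior vertices of the resulting path $F$. For example, take $F=(z_1,y_1,z_2,y_2,z_3)$ with $y_1,y_2\in K_2$ and $z_1,z_3\in I$ not adjacent to $x$; $G$ still has the Hamiltonian path $(z_1,y_1,x,z_2,y_2,z_3)$. So ``a path with an endpoint adjacent to $x$'' is too strong a requirement for the path type. The paper's converse for $H_1$, which starts with $(x,v,\ldots)$, makes the same assumption.
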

\begin{proof}
	\emph{{Necessity:}} Let $P$ be a Hamiltonian path in $G$. Since $P$ is a spanning path, $x$ must appear somewhere in $P$. Let $P=(\ldots,x',x,x'',\ldots)$, and $x',x''$ be the neighbors of $x$ in $P$. We now construct two paths from $P$, $P'=(\ldots,x')$ and  $P''=(x'',\ldots)$.  Note that the vertices of $P'$ and $P''$ are from $K_2\cap I$. From Lemma \ref{k2structure}, it is clear that the graph $H$ induced on $K_2\cup I$ is either a tree or a forest. Observe that in $H$, the endpoints of $P'$ and $P''$ can be pendant vertices. This shows that the number of pendant vertices in $H$ is bounded, which is at most four. We shall now show that the graph $H$ is isomorphic to any one of $H_1, H_2$, $H_3$, or $H_4$. Case 1: $H$ is a tree. Since $H$ is connected, there must exist exactly one edge $e=\{s,t\}$ in $H$ between a vertex of $P'$ and a vertex of $P''$,  $s \in V(P'), t \in V(P'')$. If both $s$ and $t$ are endpoints, then $H$ isomorphic to $H_1$.  Similarly, if $s$ is an endpoint and $t$ is an intermediate vertex, then $H$ is isomorphic to $H_2$. Suppose both $s$ and $t$ are intermediate vertices, then $H$ is isomorphic to $H_3$. Case 2: $H$ is a forest. Since $H$ is disconnected, an edge $e$ in $H$ that connects the vertices of $P'$ and $P''$ does not exist. This implies that $H$ is a graph on two disconnected paths. Thus, $H$ is isomorphic to $H_4$. 
	
	Conversely, suppose $K_2\cup I$ is isomorphic to any one of (i) $H_1$, (ii) $H_2$, (iii) $H_3$, or (iv) $H_4$.	We construct the Hamiltonian path $P$ as follows. (i) $P=(x,v,v',\ldots,u',u)$. (ii) $P=(u, u',y,v',\ldots,v,x,w,\ldots,w')$. (iii) $P=(u',\ldots,y'',y,y',\ldots,u,x,v',\ldots,z'',z,z',\ldots,v)$. (iv) $P=(v',\ldots, v,x,u,\ldots,u')$. 
\end{proof}
\noindent	\textbf{Remarks:} Theorem \ref{HCBCG} and Theorem \ref{hpcbsp} are constructive in nature. Therefore, the Hamiltonian cycle (path) in chordal bisplit graphs can be obtained in polynomial time. Interestingly, chordal bisplit graphs, when they are Hamiltonian, also have all possible cycles of length $3$ to $n$. Such graphs are known as pancyclic graphs in the literature \cite{bondy1971pancyclic}.\\
\begin{theorem}
	For a chordal bisplit graph $G=(K_1=\{x\},K_2,I)$ with $n$ vertices, $G$ is pancyclic if and only if $G$ is Hamiltonian.
\end{theorem}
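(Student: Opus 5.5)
Pancyclic trivially implies Hamiltonian, since a cycle of length $n$ is a Hamiltonian cycle. So the work is in the converse. Assume $G$ is Hamiltonian. By Theorem \ref{HCBCG}, the graph induced on $K_2\cup I$ is a path $P=(p_1,p_2,\ldots,p_{n-1})$ whose endpoints $p_1$ and $p_{n-1}$ are adjacent to $x$. The plan is to exhibit, for each $l$ with $3\leq l\leq n$, a cycle of the form $(x,p_i,p_{i+1},\ldots,p_{i+l-2},x)$. Such a cycle needs a subpath of $P$ on $l-1$ consecutive vertices whose two endpoints are both neighbours of $x$.

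\textbf{Which vertices of $P$ are adjacent to $x$.} Every vertex of $K_2$ is adjacent to $x$, since $\{x\}\cup K_2$ is a biclique. The vertices of $P$ alternate between $K_2$ and $I$, because $K_2\cup I$ is bipartite with both $K_2$ and $I$ independent. Lemma \ref{degree} shows that any internal vertex of $P$ lying in $I$ has two neighbours in $K_2$, and is therefore adjacent to $x$. The endpoints are adjacent to $x$ by hypothesis. Hence every vertex of $P$ is adjacent to $x$, so $x$ is universal.

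\textbf{Building the cycles.} For each $l$, I take the cycle $(x,p_1,p_2,\ldots,p_{l-1},x)$. It is a cycle because $\{p_j,p_{j+1}\}\in E(G)$ for consecutive vertices and $x$ is adjacent to both $p_1$ and $p_{l-1}$. For $l=3$ it is the triangle $x,p_1,p_2$. For $l=n$ it is the Hamiltonian cycle itself. This covers all lengths from $3$ to $n$.

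\textbf{Main obstacle.} The delicate step is justifying that the internal $I$-vertices of $P$ really have two $K_2$-neighbours on $P$. This needs two facts: $K_2\cup I$ induces a path, and edges join only $K_2$ to $I$. Together they force the strict alternation and hence the degree condition, so that Lemma \ref{degree} applies. Degenerate cases also need a check. If $n-1\leq 2$, then $P$ is a single edge or vertex: the case $n=3$ is a triangle, and the smaller cases are not Hamiltonian. The endpoints of $P$ may lie in $I$, but they are adjacent to $x$ by hypothesis. With these checked, the construction above completes the proof.
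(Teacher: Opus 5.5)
Your proof is correct and follows the paper's argument essentially verbatim. You use Theorem \ref{HCBCG} to get the path $P$ on $K_2\cup I$, then alternation along $P$ plus Lemma \ref{degree} to show $x$ is adjacent to every vertex of $P$, and then close each prefix $(p_1,\ldots,p_{l-1})$ through $x$ to obtain a cycle of length $l$ for every $3\leq l\leq n$. Your first paragraph slightly misattributes one step: the two $K_2$-neighbours of an internal $I$-vertex come from the alternation along $P$, and Lemma \ref{degree} then gives its adjacency to $x$. Your ``main obstacle'' paragraph states this correctly and makes precise a step the paper leaves implicit.
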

\begin{proof}
	Since $G$ is pancyclic, $G$ contains cycles of all lengths, including $n$, and hence $G$ is Hamiltonian. Conversely, since $G$ has a Hamiltonian cycle, as per Theorem \ref{HCBCG}, $K_2\cup I$ is a path $P$ whose endpoints are adjacent to $x$. Further, for all $z\in I$ that are the internal vertices of $P$, $|N_G(z)\cap K_2|\geq 2$. This shows that for all $z \in I$, $\{x,z\}\in E(G)$ (Lemma \ref{degree}). Recall that $\{x\} \cup K_2$ induces a star with $x$ as the root in $G$. Clearly, $x$ is adjacent to all the vertices of $P$. To construct a cycle $C_i$, we consider a subpath $P_{i-1}$ on first $i-1$ vertices of $P$ and $x$. \\
	$C_3=(x,P_2,x)$\\
	$C_4=(x,P_3,x)$\\
	$\vdots$\\
	$C_{n-1}=(x,P_{n-2},x)$\\
	$C_{n}=(x,P_{n-1},x)$\\
	Clearly, we obtain all possible cycles in $G$. Thus $G$ is pancyclic.
\end{proof}
	\begin{theorem}
		\label{thmhomogeneouslychordal}
		For a chordal bisplit graph $G=(K_1=\{x\},K_2,I)$, $G$ is Hamiltonian if and only if $G$ is homogeneously traceable.
	\end{theorem}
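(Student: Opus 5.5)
The plan is to prove the two directions separately. The forward direction is the standard argument for any graph. The converse uses the structure from Lemma \ref{k2structure} together with the characterization in Theorem \ref{HCBCG}.

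\emph{Hamiltonian implies homogeneously traceable.} Let $C$ be a Hamiltonian cycle of $G$ and let $v\in V(G)$ be arbitrary. Deleting one of the two edges of $C$ incident on $v$ leaves a Hamiltonian path of $G$ with $v$ as an endpoint. Since $v$ was arbitrary, $G$ is homogeneously traceable. Nothing specific to bisplit graphs is used here.

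\emph{Homogeneously traceable implies Hamiltonian.} We aim to verify the condition of Theorem \ref{HCBCG}: that $K_2\cup I$ induces a path whose endpoints are adjacent to $x$.

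First, we show $K_2\cup I$ induces a path. Take a Hamiltonian path $Q$ starting at $x$, which exists by hypothesis. Then $Q-x$ is a path spanning $K_2\cup I$. By Lemma \ref{k2structure}, the graph $H$ induced on $K_2\cup I$ is a forest, and a forest containing a spanning path is that path. So $H=P=(u,\ldots,v)$.

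Next, we show both endpoints $u$ and $v$ are adjacent to $x$. This is the main step. Every vertex of $K_2$ is adjacent to $x$, because $\{x\}\cup K_2$ is a star. So it suffices to treat an endpoint, say $v$, that lies in $I$, and assume for contradiction that $\{v,x\}\notin E(G)$. Then $v$ has no neighbor in $I$ and none equal to $x$, and it is a leaf of the path $H$. Hence $v$ is a pendant vertex of $G$, whose unique neighbor is its successor $v'$ on $P$. A pendant vertex must be an endpoint of every Hamiltonian path. Now take a Hamiltonian path starting at $v'$, which exists by homogeneous traceability. Its other endpoint must be $v$, and the vertex preceding $v$ on it must be $v'$. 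That forces the path to be just $(v',v)$, i.e.\ $|V(G)|=2$, contradicting $x\notin\{v,v'\}$. Therefore $v$, and symmetrically $u$, is adjacent to $x$, and Theorem \ref{HCBCG} yields a Hamiltonian cycle.

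Degenerate cases need a brief check. If $P$ has a single vertex, then $G$ is $K_2$ and is neither Hamiltonian nor homogeneously traceable in the intended sense. In the case $|K_1|=0$, the star is empty, and we would dismiss it via connectivity or the convention of the paper. I expect the pendant-vertex argument to be the only real obstacle. Everything else follows directly from Lemma \ref{k2structure} and Theorem \ref{HCBCG}.
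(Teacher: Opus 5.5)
Your argument is correct. It reaches the same structure as the paper (a path on $K_2\cup I$ closed up through $x$), but by a different and more careful route.

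The paper relies on the classification of Theorem~\ref{hpcbsp}. It states without argument that homogeneous traceability forces $K_2\cup I$ to be of type $H_1$ and that $x$ is universal to $K_2\cup I$. It then adds the edge $\{x,u\}$ to the Hamiltonian path $(x,v,\ldots,u)$.

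You avoid the $H_1$--$H_4$ case analysis altogether. The Hamiltonian path starting at $x$, together with Lemma~\ref{k2structure}, makes $K_2\cup I$ a path. The pendant-vertex argument, applied to the Hamiltonian path starting at $v'$, shows that both ends of that path are adjacent to $x$. Theorem~\ref{HCBCG} then finishes.

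Your version gives an actual proof of the step the paper only asserts. It also needs only adjacency of $x$ to the two endpoints, not universality. Universality does follow, via Lemma~\ref{degree} for the internal vertices of $I$ together with your endpoint argument, but it is more than required.

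One correction on the degenerate case. The single-edge graph $G=(\{x\},\{y\},\emptyset)$ \emph{is} homogeneously traceable under the paper's definition: $(x,y)$ and $(y,x)$ are Hamiltonian paths beginning at each vertex. Yet it has no Hamiltonian cycle. So it is a genuine exception to the statement, not something the definition excludes. The clean fix is to assume $|V(G)|\geq 3$. That is exactly what guarantees, in your argument, that $P$ has at least two vertices, that $v$ has a successor $v'$, and that the forced path $(v',v)$ is too short to be Hamiltonian. Your remark about $|K_1|=0$ is unnecessary, since the statement fixes $K_1=\{x\}$.
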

	\begin{proof}
		It is easy to see that the Hamiltonian graphs are homogeneously traceable. Conversely, suppose $G$ is homogeneously traceable.   Since $G$ has a Hamiltonian path that starts at every vertex, it must be the case that the graph induced on $K_2\cup I$ is isomorphic to $H_1$ (mentioned in Theorem \ref{hpcbsp}). Further, $x$ must be universal to $K_2\cup I$. Since $K_2\cup I$ is a path $P=(x,v,v',\ldots,u',u,)$, $P$ with the edge $\{x,u\}$ is a Hamiltonian cycle in $G$. 
	\end{proof}

	\section{Reducing P-vs-NPC gap even further}
	Having seen the impact of chordality on bisplit graphs for Hamiltoncity, the presence of only $C_4$'s makes the problem hard, and the presence of only $C_k$'s, for any fixed $k$ other than four, makes the problem easy.  It is natural to investigate NP-complete instances (chordal bipartite bisplit) even further to understand the property or structure that makes the problem difficult.   We next show that (i) if there is no ordering on $K_1$ (one of the partitions) and all of $I$ is adjacent to only $K_1$, then the Hamiltonicity remains NP-complete, however, (ii) if we impose nested neighborhood ordering on $K_1$ ($K_2$), then the problem becomes polynomial-time solvable.   Interestingly, bisplit graphs with nested neighborhood ordering have the property that they are $P_5$-free chordal bipartite (also known as bipartite chain graphs).   In this section, we narrow the complexity gap further and show that Hamiltonicity on $P_{10}$-free chordal bipartite graphs is NP-complete and polynomial-time solvable on $P_5$-free chordal bipartite graphs.  
	\subsection{Another Hardness Result on chordal bipartite bisplit}
	In this section, we show that HCYCLE is NP-complete on chordal bipartite bisplit if no ordering on $K_1$ and all of $I$ is adjacent to only $K_1$.\\
	\begin{theorem}\label{hcbioneside}
		Let $G(K_1,K_2,I)$ be a chordal bipartite bisplit graph such that $N(I)\cap K_2=\emptyset$ and no ordering on $K_1$. Then, the Hamiltonian cycle problem is NP-complete.
\end{theorem}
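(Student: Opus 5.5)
Membership in NP is immediate. For hardness I will reduce from HCYCLE on strongly chordal split graphs with $|K|=|I|+1$, which is NP-complete by Lemma \ref{scpghc}. The idea is to let the independent set of the split graph play the role of $I$ and the clique play the role of $K_1$; the clique edges are simulated by a biclique partner $K_2$ that is adjacent only to $K_1$.

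\textbf{Construction.} Let $G(K,I)$ with $K=\{x_1,\ldots,x_{|K|}\}$ and $I=\{y_1,\ldots,y_{|I|}\}$ be the split instance. Build $H(K_1,K_2,I')$ with $K_1=K$, $I'=I$, and $K_2=\{w_1,\ldots,w_{|I|+1}\}$ a set of fresh vertices. Each $w_j$ is made adjacent to every vertex of $K_1$, and each $y\in I'$ keeps exactly its $G$-edges to $K_1$. Then $N(I')\cap K_2=\emptyset$, and $K_1\cup K_2$ is a biclique, so $H$ is bisplit of the required form.

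\textbf{Chordal bipartiteness.} The graph induced on $K_1\cup I'$ is exactly the underlying bipartite graph of $G$, which is chordal bipartite because $G$ is strongly chordal split. Each $w_j$ is universal to the side $K_1$. An induced cycle of length at least six that passes through some $w_j$ contains at least three vertices of $K_1$, all adjacent to $w_j$, and this forces a chord. So no such cycle exists, and $H$ is chordal bipartite.

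\textbf{Equivalence.} The sides of $H$ are $K_1$ and $K_2\cup I'$, with $|K_1|=|I|+1$ and $|K_2\cup I'|=2|I|+1$. These sides are unequal, so $H$ has no Hamiltonian cycle, and this size choice fails. I will therefore rebalance by taking $|K_2|=1$, so that $|K_2\cup I'|=|I|+1=|K_1|$.

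With $K_2=\{w\}$ universal to $K_1$, the claim is that $H$ is Hamiltonian if and only if $G$ is.
\begin{itemize}
\item[($\Rightarrow$ from $G$)] Because $|K|=|I|+1$, a Hamiltonian cycle of $G$ uses exactly one clique edge $\{x_i,x_j\}$. Replace that edge by the path $x_i,w,x_j$ to get a Hamiltonian cycle of $H$.
\item[($\Leftarrow$ from $H$)] A Hamiltonian cycle of $H$ passes through $w$ between two vertices $x_i,x_j\in K_1$. Replacing $x_i,w,x_j$ by the clique edge $\{x_i,x_j\}$ gives a Hamiltonian cycle of $G$.
\end{itemize}
Here the structure of the split cycle does the work: the single clique edge corresponds exactly to the single detour through $K_2$.

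\textbf{Main obstacle.} The delicate step is meeting the side conditions of the statement. First, $H$ must be connected, which holds since $G$ is connected and $w$ sees all of $K_1$. Second, the biclique must be genuine. With $|K_2|=1$ it is a star, which is still a complete bipartite graph. If a larger $K_2$ is desired, I would add pairs consisting of a $K_2$-vertex and a pendant-like gadget, but that breaks the rule $N(I)\cap K_2=\emptyset$, so I would keep $|K_2|=1$. Finally, ``no ordering on $K_1$'' is only meant informally: the neighbourhoods of $I'$ in $K_1$ are arbitrary chordal bipartite neighbourhoods, not nested or convex. The reduction is clearly polynomial.
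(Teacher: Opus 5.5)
Your reduction is correct. It starts from the same source problem as the paper, HCYCLE on strongly chordal split graphs with $|K|=|I|+1$ (Lemma~\ref{scpghc}), but uses a different and much leaner construction.

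\textbf{The paper's construction.} The paper builds $K_1$ from a copy of $K$ plus $|K|-1$ padding vertices, which are adjacent only to $K_2$. It takes $K_2$ to be $|K|$ vertices universal to $K_1$, and attaches $I$ only to the original copies of $K$. The unique clique edge $\{x_i,x_j\}$ of a Hamiltonian cycle of $G$ then becomes a long alternating excursion through $K_2$ and the padding vertices. The converse needs a degree count that the paper does not spell out: the padding vertices consume $2(|K|-1)$ of the $2|K|$ cycle edges at $K_2$. So exactly two cycle edges join $\{x^1_1,\ldots,x^1_{|K|}\}$ to $K_2$, which forces the copy of $K\cup I$ to appear as one subpath with both ends in $K$.

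\textbf{Your construction.} Yours is the degenerate case with no padding and $K_2=\{w\}$. The excursion shrinks to the single vertex $w$, so both directions of the equivalence are immediate. Your chordal-bipartiteness argument is also more explicit than the paper's ``clearly'': any induced cycle through $w$ has at least three vertices in $K_1$, all adjacent to $w$.

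\textbf{Trade-off.} The paper's version shows that hardness does not depend on the biclique being a star. Yours gives a shorter, fully rigorous proof.

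\textbf{Correction to your closing remark.} A larger $K_2$ does not force a violation of $N(I)\cap K_2=\emptyset$. For each additional vertex of $K_2$, add one vertex to $K_1$ adjacent only to $K_2$; this rebalances the two sides. It is exactly the padding the paper uses.
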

\begin{proof}
We reduce an instance $G(K,I)$ of HCYCLE on strongly chordal split graph when $|K|=|I|+1$ to the corresponding instance $H(K'_1,K'_2,I')$ of HCYCLE on chordal bipartite bisplit graph such that $G$ has a Hamiltonian cycle if and only if $H$ has a Hamiltonian cycle. We construct $H$ as follows. Let $x_1,x_2,\ldots,x_{|K|}$ represents the vertices of $K$, and $y_1,y_2,\ldots,y_{|I|}$ represents the vertices of $I$ in $G$.  We shall now define $V(H)$,  $V(H)=V_1\cup V_2\cup V_3$; $V_1=\{x^1_i\mid x_i\in K, ~1\leq i \leq |K|\} \cup \{x^1_{|K|+i}\mid 1\leq i\leq
|K|-1\}$, $V_2=\{x^2_i\mid x_i\in K, ~1\leq i \leq |K|\}$, $V_3=\{y^1_j\mid y_j\in I, ~1\leq j \leq |I|\}$, and $E(H)=E'\cup E'', E'=\{\{x^1_i,y^1_j\}\mid  \{x_i,y_j\}\in E(G)\}$, $E''=\{\{x^1_i,x^2_j\}\mid x^1_i\in V_1,~x^2_j\in V_2, ~1\leq i \leq |V_1|,1\leq j \leq |V_2|\}$.  Note that $H$ is a bisplit graph with $V_1\cup V_2$ as biclique and $V_3$ as an independent set. Clearly, $H$ is chordal bipartite. Thus, $H$ is a chordal bipartite bisplit with $I$ adjacent to one biclique partition. We claim that $G$ is a yes-instance of HCYCLE if and only if $H$ is a yes-instance of HPATH.

Since the proof of this theorem is similar to the proof presented in Theorem \ref{hcbi}, we present a sketch of this proof. For the forward direction, suppose $G$ has a Hamiltonian $C$. It is clear that $C$ must have exactly one clique edge, say $e=\{x_i,x_j\}$.   Clearly, $C$ without the clique edge $e$ is a path $P_{x_ix_j}$ spanning $V(G)$. We shall now construct the path $P'$ as follows. We use the same order in which the path $P_{x_ix_j}$  visits the vertices of $G$ to visit the vertices of $V_1\cup V_3$, in particular $\{x^1_1,\ldots,x^1_{|K|}\}\cup V_3$. Note that the clique edge $\{x_i,x_j\}$ in $P$ is modified to a $P_3=\{x^1_i,x^2_j,x^1_j\}$.  Since $V_1 \cup V_2$ is a biclique, the leftover vertices of $V_1 \cup V_2$ are visited by alternating $V_1 \cup V_2$. Observe that the endpoints of the constructed path $P'$ are in $V_1$ and $V_2$. Since $V_1 \cup V_2$ is a biclique, we obtain the Hamiltonian cycle $C'$. Conversely, suppose $H$ contains a Hamiltonian cycle $C'$, we construct the corresponding Hamiltonian cycle $C$ in $G$. By our construction, $V_3$ is adjacent to only the first half of $V_1$ and $|V_1|=|V_3|+1$. This shows that in any Hamiltonian cycle, the vertices $\{x^1_1,\ldots,x^1_{|K|}\}\cup V_3$ must form a subpath $P'$. Since $|V_1|=|V_3|+1$, the end points of $P'$ must be in $V_1$.  We know that $V_1$ is a clique in $G$. Since the endpoints of $P'$ are in $K$, $P'$ along with the clique edge a Hamiltonian cycle in $G$. 
\end{proof}
	
	\subsection{Hardness Result on $P_{10}$-free chordal bipartite graphs}
		\begin{theorem} \label{p10cycle}
		For $P_{10}$-free chordal bipartite graphs, the Hamiltonian cycle problem is NP-complete.
	\end{theorem}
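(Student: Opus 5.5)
The plan is to reuse the reduction of Theorem \ref{hcbioneside}, or a slight variant of it, and to add a bound on the length of induced paths in the constructed graph. By Theorem \ref{hcbioneside}, HCYCLE is already NP-complete on chordal bipartite bisplit graphs $H(K_1,K_2,I)$ with $N(I)\cap K_2=\emptyset$, and membership in NP is immediate. So it suffices to show that the instances $H$ produced there, starting from a strongly chordal split graph $G(K,I)$ with $|K|=|I|+1$, contain no induced $P_{10}$. The correctness of the equivalence between Hamiltonian cycles of $G$ and of $H$ is inherited verbatim from that proof. What is new is the structural claim.

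The structural analysis would proceed in three steps.

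\textbf{Step 1 (at most one vertex of $V_2$).} Let $Q$ be an induced path in $H$. Since $V_1\cup V_2$ is a biclique, $Q$ cannot contain two vertices of $V_1$ and two vertices of $V_2$, as these would induce a $C_4$. Moreover, every vertex of $V_2$ is adjacent to every vertex of $V_1$. So if $Q$ contains a vertex $w\in V_2$, then $Q$ contains at most two vertices of $V_1$, namely the neighbours of $w$ on $Q$. Hence $Q$ is short unless $Q\cap V_2=\emptyset$.

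\textbf{Step 2 (reduction to $V_1\cup V_3$).} If $Q\cap V_2=\emptyset$, then $Q$ lives in the graph induced on $V_1\cup V_3$. This graph is the underlying bipartite graph of $G$, together with the isolated padding vertices $x^1_{|K|+i}$, which have no neighbour outside $V_2$. So the question becomes: how long can an induced path be in the underlying bipartite graph $B(K,I)$ of the source instance?

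\textbf{Step 3 (bounding $B(K,I)$).} Here I would not use an arbitrary strongly chordal split graph. Instead I would trace the hard instances back through Lemma \ref{scpghc} and Lemma \ref{scpghpe} to M\"uller's chordal bipartite instances \cite{muller1996Hamiltonian}. I would then check that, in those instances and after adding the universal vertices $s$ (and the pendant $t$), every induced path of the underlying bipartite graph has at most $9$ vertices.

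The vertices $s$ added in Lemma \ref{scpghpe} and Lemma \ref{scpghc} are adjacent to all of $I$ in the bipartite sense, which short-cuts long induced paths. An induced path of $B$ through $s$ can have at most two vertices of $I$, so it is very short. Therefore the real content of Step 3 is a bound for M\"uller's gadget graphs themselves.

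If M\"uller's gadgets turn out to contain long induced paths, the fallback is to modify the reduction: add one more universal-type vertex on each side, or merge $V_2$ into a vertex set complete to $V_3$ as well. The aim is to make every pair of same-side vertices within distance two through a common neighbour that is not on $Q$. The correspondence of Hamiltonian cycles would then be rechecked by the same counting argument as in Theorem \ref{hcbi}, namely that exactly two biclique edges are used.

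The main obstacle is Step 3: bounding induced paths inside the hard bipartite core, which the insight after Theorem \ref{hcbi} warns can a priori be arbitrarily long. I would first try the direct inspection of M\"uller's variable and clause gadgets. The expectation is that the gadgets have bounded radius around a few high-degree vertices and that the chordal bipartite edges between them create chords cutting any induced path. This should give a bound of $9$ vertices, matching the $P_{10}$ threshold.
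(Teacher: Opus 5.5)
Your Steps 1 and 2 are correct, but they only reduce the theorem to Step 3, and Step 3 is asserted rather than proved. Since $V_1\cup V_3$ induces the underlying bipartite graph $B(K,I)$ of the source instance, and $B(K,I)$ contains M\"uller's chordal bipartite instance as an induced subgraph, your $H$ can be $P_{10}$-free only if M\"uller's hard instances already are. The entire content of the theorem is therefore the ``expectation'' at the end of your proposal.

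Nothing in the construction of Theorem~\ref{hcbioneside}, or in Lemmas~\ref{scpghpe} and~\ref{scpghc}, controls induced paths in $B(K,I)$. For example, take a strongly chordal split graph whose $K$--$I$ graph is a long path. It is a Hamiltonian instance with $|K|=|I|+1$, and it yields arbitrarily long induced paths in $V_1\cup V_3$. The paper itself remarks after Theorem~\ref{hcbi} that the analogous reduction instances, whose $V_1\cup V_4$ is the same graph $B(K,I)$, contain induced paths of arbitrary length.

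Your hope that $s$ short-cuts long paths is also mistaken. In Lemma~\ref{scpghpe} the only neighbour of $s$ in $I$ is $t$. In Lemma~\ref{scpghc}, $s$ is complete to $I'$, but a vertex that is not on $Q$ can never create a chord of $Q$.

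The same observation defeats your fallback. As long as the old graph remains an induced subgraph, all its induced paths survive, so adding universal-type vertices or supplying common neighbours off $Q$ cannot shorten anything. Making $V_2$ complete to $V_3$ is impossible in any case: $V_2$ and $V_3$ are both adjacent to $V_1$, so they lie on the same side of the bipartition.

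The missing idea is that a bound on induced paths has to be inherited from a source class that already has one, through a construction that mirrors \emph{all} adjacencies of the source. The paper does this as follows.
\begin{itemize}
\item It reduces from HCYCLE on general split graphs, which are $2K_2$-free and hence $P_5$-free.
\item Each vertex $u_i$ becomes a $14$-vertex gadget $X_i$. Its degree-two vertices $x^i_3,x^i_5,x^i_{10},x^i_{12}$ force any Hamiltonian cycle to traverse $X_i$ in a single pass.
\item For every edge $\{u_i,u_j\}$ of $G$, clique edges included, the sets $\{x^k_2,x^k_4,x^k_6,x^k_{14}\}$, $k\in\{i,j\}$, are joined completely to the sets $\{x^k_7,x^k_9,x^k_{11}\}$, $k\in\{i,j\}$.
\item Claim~\ref{p10claim} then argues that an induced path on at least ten vertices must meet at least five gadgets whose corresponding vertices induce a $P_5$ in $G$, which is impossible.
\end{itemize}
In your construction, the clique edges of $K$ are only simulated by the biclique $V_1$--$V_2$. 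That biclique supplies no chords to paths avoiding $V_2$, so the $P_5$-freeness of the split source is lost exactly where you need it.
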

	\begin{proof}
		We present a deterministic polynomial-time reduction that reduces an instance of a split graph $G(K,I)$ with $n\geq3$ vertices to the corresponding $P_{10}$-free chordal bipartite instance $H(X,Y)$. We use gadgets to construct $H$. The gadget construction is as follows. 	  For each vertex $u_i$ in $G$, we create a vertex gadget $X_i$ in $H$ as follows:  $V(X_i)=\{x^i_{l} ~|~1\leq l \leq14\}$  and $E(X_i)=\{\{w,z\}~|~w\in\{x^i_2,x^i_4,x^i_6,x^i_{14}\},z\in\{x^i_1,x^i_7,x^i_9,x^i_{11}\}\}\cup \{\{x^i_{8},x^i_{7}\},\{x^i_{8},x^i_{9}\},$ $ \{x^i_{8},x^i_{11}\},\{x^i_{10},x^i_{9}\},\{x^i_{10},x^i_{11}\},\{x^i_{12},x^i_{11}\},\{x^i_{12},x^i_{13}\},\{x^i_{3},x^i_{2}\},\{x^i_{3},x^i_{4}\},\{x^i_{5},x^i_{4}\}, \{x^i_{5},x^i_{6}\},\\\{x^i_{13},x^i_{2}\},\{x^i_{13},x^i_{14}\}\}$. The gadget construction is illustrated in Figure \ref{fig:p10gad}. We shall now define the vertex set of $H$, $V(H)=\bigcup_{i=1}^{n}V(X_i)$. The edge set $E(H)=E'\cup E''$ which is defined as follows,  $E'=\bigcup_{i=1}^{n}E(X_i)$ and  $E''=\{\{w,z\}~|~w\in\{x^i_2,x^i_4,x^i_6,x^i_{14},x^j_2,x^j_4,x^j_6,x^j_{14}\},z\in\{x^i_7,x^i_9,x^i_{11},x^j_7,x^j_9,x^j_{11}\}\}$.	Note that $E''$ defines the edges between $\{x^i_2,x^i_4,x^i_6,x^i_{14},x^j_2,x^j_4,x^j_6,x^j_{14}\}$ and $\{x^i_7,x^i_9,x^i_{11},x^j_7,x^j_9,x^j_{11}\}$,  which form a complete bipartite subgraph in $H$.  An example is illustrated in Figure \ref{fig:p10con}. \\
 		\begin{figure}[hbt] 
			\begin{center}
				\includegraphics[width=70mm,scale=0.5]{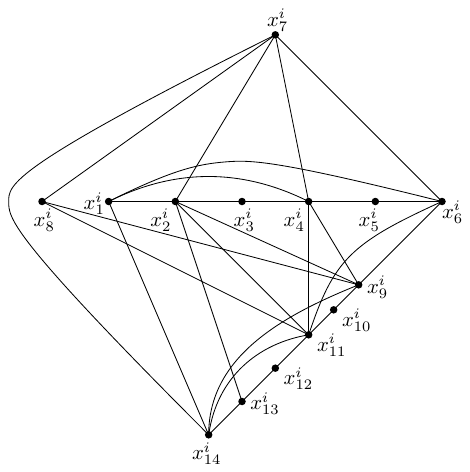}
				\caption{\small \sl   The vertex gadget $X_i$ in $H$ corresponding to the vertex $u_i$ in $G$\label{fig:p10gad}} 
			\end{center}
		\end{figure}\\
		\noindent We shall now show that $H$ is a $P_{10}$-free chordal bipartite graph. We first show that $H$ is bipartite. By the construction, note that any pair of odd-numbered vertices is not adjacent, and so is even-numbered vertices. This shows that the odd-numbered vertices (resp. even-numbered vertices) form an independent set. Therefore, $H$ is bipartite. 
		
		We now show that $H$ is a chordal bipartite graph. Let $Z_1=\{x^j_2,x^j_4,x^j_6,x^j_{14}\}$ or $Z_2=\{x^j_7,x^j_9,x^j_{11}\}$. Let $C_k, k\geq6$ be a cycle of length at least six in $H$. If $C_k$ involves the vertices of $X_i$ alone, then there is a chord in $C_k$ by our construction. Therefore, each gadget $X_i$ is chordal bipartite by the construction. We now consider the following cases to show that there is always a chord in $C_k$. 
		
		\textbf{Case 1:} $C_k$ with the vertices of $X_i$ and $X_j$. Note that $C_k$ must contain at least one vertex from $X_j$, particularly from $Z_1$ or $Z_2$. 
		
		\textbf{Case 1.1:} $C_k$ contains exactly one vertex $w$ from $X_j$. Without loss of generality, assume that $w\in Z_1$. By the construction, $N_G(w)\cap V(X_i)=\{x^i_7,x^i_9,x^i_{11}\}$. In $C_k$, $w$ is adjacent to any two of $\{x^i_7,x^i_9,x^i_{11}\}$. Let $p$ and $q$ be the neighbors of $w$ in $C_k$ ($p,q\in \{x^i_7,x^i_9,x^i_{11}\}$).  It is clear that other vertices (at least five vertices) of $C_k$ must be part of $X_i$. By our construction, any path $P_l$ of length at least five connecting $p$ and $q$ in $X_i$ has either a chord from $w$ to internal vertices of $P_l$ or chord from $p$ to internal vertices of $P_l$ or chord from $q$ to internal vertices of $P_l$. This shows that $C_k$ is a cycle of length at least six with a chord in it.
		
		\textbf{Case 1.2:} $C_k$ contains at least two vertices $w_1$ and $w_2$ from  $X_j$. 
		Case 1.2.1: $w_1$ and $w_2$ are adjacent in $C_k$.  It must be the case that $w_1\in Z_1$ and $w_2\in Z_2$ or vice versa. Observe that, in $C_k$, $w_1$ is adjacent to any one of $\{x^i_7,x^i_9,x^i_{11}\}$, say $p$ and $w_2$ is adjacent to any one of $\{x^i_2,x^i_4,x^i_{6},x^i_{14}\}$, say $q$. Since $p$ and $q$ are adjacent in $X_i$, the edge $\{p,q\}$ is a chord in $C_k$.  This shows that $C_k$ has a chord in it. Case 1.2.2: $w_1$ and $w_2$ are non-adjacent in $C_k$. $C_k$ must contain at least two vertices from $Z_1$.  Without loss of generality, assume that $w_1,w_2\in Z_1$. Observe that, in $C_k$, $w_1$ (resp. $w_2$) is adjacent to any one of $\{x^i_7,x^i_9,x^i_{11}\}$ say $p$ (resp. $q$). The edge $\{w_1,q\}$ (resp. $\{w_2,p\}$) is a chord in $C_k$. 
		
		\textbf{Case 2:} $C_k$ with the vertices of $X_i, X_j$ and $X_l$. Case 2.1: $X_i$ and $X_l$ are adjacent. Case 2.1.1: $C_k$ contains exactly one vertex $w$ from $X_j$. Without loss of generality, we assume that $w\in Z_1$. This shows that $w$ in $C_k$ must be adjacent to one of $\{x^i_7,x^i_9,x^i_{11}\}$, say $p$ and one of $\{x^l_7,x^l_9,x^l_{11}\}$, say $q$.  Suppose that $C_k$ contains only $q$ from $X_l$ then it must be adjacent to one of $\{x^i_2,x^i_4,x^i_{6},x^i_{14}\}$ in $X_i$, say $r$. Clearly, the edge $\{p,r\}$ is a chord in $C_k$. Now we consider the case where $C_k$ contains one more vertex other than $q$ from $X_l$, say $s$. It must be the case that $s$ is one of $\{x^l_2,x^l_4,x^l_{6},x^l_{14}\}$. The edge $\{p,s\}$ in $C_k$ is a chord. Case 2.1.2: $C_k$ contains two vertices $w_1$ and $w_2$ from $X_j$. Without loss of generality, assume that $w_1\in Z_1$. Suppose $C_k$ contains exactly one vertex from $X_l$, then $w_2\in Z_1$. The edge between the neighbor of $w_1$ in $V(C_k)\cap X_i$ and $w_2$ is a chord in $C_k$. Assume that $C_k$ contains two vertices from $X_l$. In this case $w_2\in Z_2$ and $\{w_1,w_2\}$ is a chord in $C_k$.   
		
		\textbf{Case 2.2:} $X_i$ and $X_l$ are non-adjacent. Since $X_i$ and $X_l$ are non-adjacent, there must exist at least two vertices from $X_j$. An argument similar to that of Case 2.1.2 is true for this case as well.
		
		\textbf{Case 3:} $C_k$ with the vertices of at least four gadgets in the following order $X_i, X_j, X_r$ and $X_s$. It is known that $G$ is a split graph and hence $G$ does not contain an induced $C_4$. Therefore, these four gadgets cannot induce a cycle of length at least six in $H$. Now, we shall look at the other possible cases. Case 3.1: $X_i$ and $X_r$ are adjacent. Case 3.2: $X_i$ and $X_r$ are non-adjacent. The proof is similar to that of Case 2.2 and Case 2.2. All the cases clearly show that $C_k$ has a chord in it. Note that the above arguments are true for any $C_k,k\geq 6$ in $H$. Therefore, $H$ is a chordal bipartite graph.
		We now show that $H$ is $P_{10}$-free.
		\begin{claim} \label{p10claim}
			 If $H$ has an induced path $P_r$, $r\geq 10$, then $G$ has an induced path $P_s$, $s\geq 5$.\end{claim}
		\begin{proof}
			 Let $P_r$ denote an induced path of length at least ten in $H$. We say two gadgets ($X_i,X_{i+1}$)  are adjacent if $\{u,v\}\in E(G), u\in X_i$ and $v\in X_{i+1}$. By our construction, we know that each gadget is $P_{10}$-free. This shows that any $P_r$ is such that the internal vertices of $P_r$ spans more than one gadget in $H$. We now claim that $P_r$ must span at least five gadgets in $H$. Let $Q$ be the collection of gadgets that contribute vertices to $P_r$. We now show that $|Q|\geq5$, i.e., $Q=\{X_i,X_{i+1},X_{i+2},X_{i+3},X_{i+4}\}$. We shall now analyze the contribution of $X_{i}$ and $X_{i+1}$ to $P_{r}$. Observe that $X_{i}$ can contribute at most one vertex from $\{x^i_2,x^i_4,x^i_{6},x^i_{14}\}$ (resp. $\{x^i_7,x^i_{9},x^i_{11}\}$) to $P_{r}$. Without loss of generality, assume that $x^i_6$ and $x^i_7$ are part of $P_{r}$. By our construction, $x^i_6$ and $x^i_7$ are adjacent. Thus we obtain an induced path $P=(x^i_6,x^i_7)$.  The path $P$ can be extended to by adding $x^i_5 (x^i_8)$, i.e, $P=(x^i_5,x^i_6,x^i_7)$. Note that $P$ cannot be extended further using the gadget $X_i$. Now the possible vertices from $X_{i+1}$ to extend $P$ are one of $\{x^{i+1}_2,x^{i+1}_4,x^{i+1}_{6},x^{i+1}_{14}\}$. Suppose it uses two vertices from $\{x^{i+1}_2,x^{i+1}_4,x^{i+1}_{6},x^{i+1}_{14}\}$ then by our construction $x^i_7$ will be adjacent to both of them, which forms a chord. Thus $P$ is extended as follows $P=(x^i_5,x^i_6,x^i_7,x^{i+1}_6)$. Note that further extension using the vertices of $X_{i+1}$ is not possible as $x^i_6$ is adjacent to all of $\{x^{i+1}_7,x^{i+1}_{9},x^{i+1}_{11}\}$, and also $x^{i+1}_1$ and $x^{i+1}_5$ have adjacency only within the same gadget. This shows that two gadgets cannot form $P_r$ in $H$.
			
		We now consider the contribution of $X_{i+2}$ to $P_r$. Suppose that the gadgets $X_i$ and $X_{i+2}$ are adjacent, $P=(x^i_5,x^i_6,x^i_7,x^{i+1}_6)$ cannot be extended as $x^i_6$ and $x^{i+1}_6$ are adjacent to all of $\{x^{i+2}_7,x^{i+2}_{9},x^{i+2}_{11}\}$. Therefore, the gadgets $X_i$ and $X_{i+2}$ must be non-adjacent. Since $X_i$ and $X_{i+2}$ are non-adjacent, two vertices from $X_{i+2}$ can be used, in particular one of $\{x^{i+2}_2,x^{i+2}_4,x^{i+2}_{6},x^{i+2}_{14}\}$ and one of $\{x^{i+2}_7,x^{i+2}_9,x^{i+2}_{11}\}$ to extend $P$, $P=(x^i_5,x^i_6,x^i_7,x^{i+1}_6,x^{i+2}_7,x^{i+2}_6)$. Note that $P$ cannot be extended further using the gadget $X_{i+2}$. Next, we consider the gadget $X_{i+3}$ and it must be non-adjacent to $X_i$ and $X_{i+1}$. Suppose $X_{i+3}$ is adjacent to $X_i$ or $X_{i+1}$, by our construction $P$ cannot be extended further as it forms a chord.  To extend the path $P$, the gadget $X_{i+3}$ must be adjacent to $X_{i+2}$.  Since $X_{i+2}$ and $X_{i+3}$ are adjacent and $P$ contains two vertices from $X_{i+2}$, only one vertex from $\{x^{i+3}_7,x^{i+3}_{9},x^{i+3}_{11}\}$ can be added to extend the path $P$, i.e $P=(x^i_5,x^i_6,x^i_7,x^{i+1}_6,x^{i+2}_7,x^{i+2}_6,x^{i+3}_{11})$. By our construction, the vertices $x^{i+3}_{12}$ and $x^{i+3}_{13}$ are used to extend the path $P=(x^i_5,x^i_6,x^i_7,x^{i+1}_6,x^{i+2}_7,x^{i+2}_6,x^{i+3}_{11},x^{i+3}_{12},x^{i+3}_{13})$. Note that $P$ cannot be extended further as the neighbors of $x^{i+3}_{13}$ forms a chord. Observe that the  length of $P$ is nine. In order to extend $P$ further, we must use the gadget $X_{i+4}$ and it must be non-adjacent to $X_i,X_{i+1}$ and $X_{i+2}$. Thus $|Q|\geq 5$. Clearly, the vertices correspond to the gadgets in $Q$ induce a $P_s$, $s\geq5$ in $G$. This implies that $G$ has an induced $P_s$, $s\geq5$ in $G$. \end{proof}	
		By Claim \ref{p10claim}, it is clear that if $H$ has an induced $P_{r}$, $r\geq 10$ then $G$ has an induced $P_s$, $s\geq5$. Note that $G$ (split graphs) is $P_5$-free. Therefore, $H$ is $P_{10}$-free.

		We claim that $G$ is a yes-instance of the Hamiltonian cycle problem if and only if $H$ is a yes-instance of the Hamiltonian cycle problem.

		\begin{figure} [!h]
			\begin{center}
				\includegraphics[width=130mm,scale=0.5]{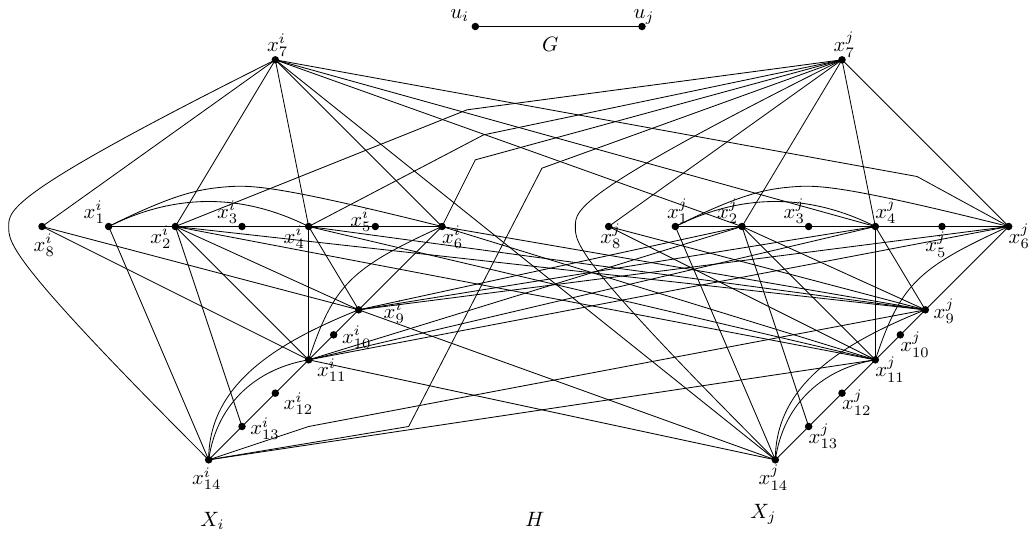}
				\caption{\small \sl   The edge gadget corresponding to the edge $\{u_i,u_j\}$ in $G$ \label{fig:p10con}} 
			\end{center}
		\end{figure}
		\emph{Necessity:} Assume that a Hamiltonian cycle $C$ exists in $G$. We shall now construct the corresponding Hamiltonian cycle $C'$ in $H$. For each gadget $X_i$ in $H$, there exists a path $P_{x^i_7x^i_6}$ that spans the vertices of $X_i$ in the following order, $P_{x^i_7x^i_6}=(x^i_7,x^i_{8},x^i_9,x^i_{10},x^i_{11},x^i_{12},x^i_{13},x^i_{14},x^i_1,x^i_2,x^i_3,x^i_4,x^i_5,x^i_6)$. For each edge $\{u_i,u_{i+1}\}$ in $C$, we make use of $P_{x^i_7x^i_6}$, the edge $\{x^i_6,x^{i+1}_7\}$, and $P_{x^{i+1}_7x^{i+1}_6}$ to construct $C'$ in $H$. For example, $C=(\ldots,u_i,u_{i+1},\ldots)$ in $G$ is transformed to $C'=(\ldots,P_{x^i_7x^i_6},P_{x^{i+1}_7x^{i+1}_6},\ldots)$ in $H$. The transformation of each edge of $C$ in $G$ gives us the desired Hamiltonian cycle $C'$ in $H$. 
		
		\emph{Sufficiency:}
		Suppose a Hamiltonian cycle $C'$ exists in $H$. We now argue that for each gadget $X_i$, $C'$ must visit all of $X_i$ in one pass before it goes to the next gadget, and there is no partial visit for any gadget. Note that for any  gadget $X_i$, the degree of the following vertices $\{x^i_3,x^i_5, x^i_{10}$, $x^i_{12}\}$ are two. Therefore, the subpaths $P_{x^i_2x^i_6}=(x^i_2,x^i_3,x^i_4,x^i_5,x^i_6)$  and $P_{x^i_9x^i_{13}}=(x^i_9,x^i_{10},x^i_{11},x^i_{12},x^i_{13})$ must appear in order in any Hamiltonian cycle, in $C'$ as well.  We now analyze the possible appearance of $x^i_{1},x^i_{7},x^i_{8}$ and $x^i_{14}$ in $C'$. Observe that $x^i_{8}$ cannot be adjacent to $x^i_{11}$ in $C'$. Therefore, $x^i_{8}$  must be adjacent to $x^i_7$ and $x^i_9$. Thus the path $P_{x^i_9x^i_{13}}$ is extended to; $P_{x^i_7x^i_{13}}=(x^i_7,x^i_{8},x^i_9,x^i_{10},x^i_{11},x^i_{12},x^i_{13})$. On the similar line, the vertex $x^i_1$ must be adjacent to any two of $\{x^i_{2},x^i_{6},x^i_{14}\}$ in $C'$. Suppose that $x^i_{1}$ is adjacent to both $x^i_{2}$ and $x^i_{6}$, then $P_{x^i_2x^i_6}$ cannot be extended further. Thus, we have the following cases.
		
		\textbf{Case 1:}  $x^i_{1}$ is adjacent to $x^i_2$ and $x^i_{14}$ in $C'$. The path $P_{x^i_2x^i_6}$ is extended to $P_{x^i_{14}x^i_6}=(x^i_{14},x^i_1,x^i_2,x^i_3,x^i_4,x^i_5,x^i_6)$.  It is now clear that $x^i_{13}$ must be adjacent to $x^i_{14}$ in $C'$. Thus we obtain the spanning path $P_{x^i_7x^i_{6}}$ spanning $X_i$ in the following order $P_{x^i_7x^i_{6}}=(x^i_7,x^i_{8},x^i_9,x^i_{10},x^i_{11},x^i_{12},x^i_{13},x^i_{14},x^i_1,x^i_2,x^i_3,x^i_4,x^i_5,x^i_6)$. This shows that the entry to the gadget is through $x^i_7$ (resp. $x^i_6)$  and exit is through $x^i_6$ (resp. $x^i_7)$.
		
		\textbf{Case 2:}  $x^i_{1}$ is adjacent to $x^i_6$ and $x^i_{14}$ in $C'$. In this case, we obtain the following two spanning paths spanning $X_i$:
		
		$P_{x^i_7x^i_{2}}=(x^i_7,x^i_{8},x^i_9,x^i_{10},x^i_{11},x^i_{12},x^i_{13},x^i_{14}, x^i_{1},x^i_6,x^i_5,x^i_4,x^i_3,x^i_2)$ and
		
		$P_{x^i_7x^i_{14}}=(x^i_7,x^i_{8},x^i_9,x^i_{10},x^i_{11},x^i_{12},x^i_{13},x^i_2,x^i_3,x^i_4,x^i_5,x^i_6,x^i_1,x^i_{14})$. 
		
		This shows that $C'$ enters the gadget $X_i$ through $x^i_7$ and exit through one of $x^i_{2}$ or $x^i_{14}$.  		It is clear from both cases that each gadget must be visited completely before the cycle visits the next gadget. Thus, vertices in $X_i$ are visited in one pass, followed by the vertices in $X_{i+1}$. To obtain the desired Hamiltonian cycle in $G$, we replace the gadget $X_i$ with the corresponding vertex $u_i$ in $G$.

	\end{proof}
\noindent \textbf{Remarks:} Theorem \ref{p10cycle} shows that for $P_{10}$-free chordal bipartite graphs with $n\geq 42$, the Hamiltonian cycle problem is NP-complete. For $n<42$, we can trivially solve the Hamiltonian cycle problem by exploring all possible solutions.

Using the above construction with the Hamiltonian path in split graphs as a candidate problem, we can show that the Hamiltonian path problem is NP-complete on $P_{10}$-free chordal bipartite graphs.
	\subsection{Polynomial results on $P_5$-free chordal bipartite graphs}\label{p5}
	In this section, we present a linear-time algorithm for HCYCLE (HPATH), which acts as a framework to solve a sequence of problems that are variants and generalizations of HCYCLE (HPATH). To obtain a solution to these variants and generalizations problems, we carry out a small computation, which reduces these problems to an instance of HCYCLE (HPATH). Further, we use  HCYCLE (HPATH) solution to solve these variants and generalizations. 
	
	In \cite{bipartitechain}, it is shown that $P_5$-free bipartite graphs are bipartite chain graphs. This shows that $P_5$-free chordal bipartite bisplit graphs are bipartite chain graphs. It is known that HCYCLE \cite{hcconvex} and HPATH \cite{BChpath} are linear-time solvable in bipartite chain graphs. However, the algorithms available in the literature for $P_5$-free chordal bipartite bisplit graphs cannot be used as a framework to solve the variants and generalizations of HCYCLE (HPATH) that are reported in this paper. With this motivation, we revisit and discover a new approach to solve HCYCLE (HPATH), using which we obtain linear-time algorithms for variants and generalizations in $P_5$-free chordal bipartite bisplit graphs. 
	
	\subsubsection{Structural Results on $P_{5}$-free Chordal Bipartite Graphs}
	\label{stresults}
	For a chordal bipartite graph $G$ with bipartition $(A,B)$, let $A=\{x_1,x_2,\ldots,x_m\}$ and $B=\{y_1,y_2,\ldots,y_n\}$.   For a $P_{5}$-free chordal bipartite graph, we observe that $A=A_{1} \cup A_{2}$ and $B=B_{1} \cup B_{2}$, $A_{1}=\{x_{1},x_{2},\ldots,x_{i}\}$, $A_{2}=\{x_{i+1},x_{i+2},\ldots,x_{m}\}$ and $B_{1}=\{y_1,y_2,\ldots,y_j\}$, $B_{2}=\{y_{j+1},y_{j+2},\ldots,y_{n}\}$, such that $(A_1,B_1)$ is a biclique and $A_2$ and $B_2$ are independent sets.  The base biclique $(A_1,B_1)$ is a maximal clique $K_{i,j}$ such that $|i-j|$ is minimum over all such maximal bicliques. The procedure to find the base biclique $(A_1,B_1)$ is as follows. First, we arrange the vertices of $A$ in non-increasing order. Let $x_1$ be the first vertex. Observe that $x_1$ together with its neighbours induces a biclique $K_{1,d_G(x_1)}$. Similarly, fix the first two vertices and their neighbors to find a biclique $K_{2,d_G(x_2)}$. By following this procedure, we find the base biclique $K_{i,j}$ such that $|i-j|$ is minimum.\\ 
	\begin{lemma}
		\label{lem1}
		Let $G$ be a $P_5$-free chordal bipartite graph. Then, $\forall x \in A_{2},\exists y_{k} \in B_{1}$ such that $y_{k} \not\in N_G(x)$ and $\forall y \in B_{2},\exists x_{k} \in A_{1}$ such that $x_{k} \not\in N_G(y)$.
	\end{lemma}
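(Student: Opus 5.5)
The plan is to derive the statement directly from the maximality of the base biclique $(A_1,B_1)$. By construction, $(A_1,B_1)$ is a maximal biclique $K_{i,j}$: no vertex of $A\setminus A_1=A_2$ can be added to $A_1$, and no vertex of $B\setminus B_1=B_2$ can be added to $B_1$, while keeping the induced subgraph a complete bipartite graph.

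For the first part, fix $x\in A_2$ and suppose, for contradiction, that $x$ is adjacent to every vertex of $B_1$, i.e. $B_1\subseteq N_G(x)$.
\begin{itemize}
\item Since $A$ is an independent set, $A_1\cup\{x\}$ together with $B_1$ induces a complete bipartite graph.
\item This is the strict supergraph $K_{i+1,j}$ of $(A_1,B_1)$, contradicting maximality.
\end{itemize}
Hence some $y_k\in B_1$ satisfies $y_k\notin N_G(x)$. The second part is the symmetric argument. If $y\in B_2$ were adjacent to all of $A_1$, then $(A_1,B_1\cup\{y\})$ would be a $K_{i,j+1}$, again contradicting maximality. So some $x_k\in A_1$ is non-adjacent to $y$.

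The main obstacle is justifying that the procedure described just before the lemma really produces a maximal biclique, with $A_1$ and $B_1$ exactly the two sides. Here I would use the standard fact, cited from \cite{bipartitechain}, that a $P_5$-free (chordal) bipartite graph is a bipartite chain graph. In such a graph the neighborhoods of the vertices of $A$ are nested: ordering them by non-increasing degree gives $N(x_1)\supseteq N(x_2)\supseteq\cdots\supseteq N(x_m)$.

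With this ordering, the procedure takes $A_1=\{x_1,\ldots,x_i\}$ and $B_1=N(x_i)=\bigcap_{t\le i}N(x_t)$. Maximality then follows in two steps.
\begin{itemize}
\item \textbf{No vertex of $B_2$ can be added.} Any $y\notin B_1$ is, by definition of $B_1$, non-adjacent to $x_i$.
\item \textbf{No vertex of $A_2$ can be added.} A vertex $x_t$ with $t>i$ and $N(x_t)\supseteq B_1$ has $N(x_t)=N(x_i)$ by nestedness. The construction should group such twins into $A_1$, or equivalently take $A_1=\{x : N(x)\supseteq B_1\}$.
\end{itemize}
I would state this convention explicitly, because the minimization of $|i-j|$ only chooses among maximal bicliques and does not affect maximality. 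Once $(A_1,B_1)$ is confirmed to be a maximal biclique, the contradiction arguments above complete the proof.
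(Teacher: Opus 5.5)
Your proposal is correct and matches the paper's proof. The paper's proof is exactly this two-line maximality argument: if $x\in A_2$ were adjacent to all of $B_1$ (resp.\ $y\in B_2$ to all of $A_1$), then $(A_1\cup\{x\},B_1)$ (resp.\ $(A_1,B_1\cup\{y\})$) would be a larger biclique, contradicting maximality of the base biclique. The paper takes that maximality as part of the definition of the base biclique, so your extra verification is a sound supplement rather than a required step. That verification uses nested neighbourhoods, and it includes your correct observation that twins of $x_i$ must be absorbed into $A_1$ for the degree-ordering procedure to output a maximal biclique.
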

	\begin{proof}
		Suppose there exists $x$ in $A_2$ such that for all $y_k$ in $B_1$, $xy_k \in E(G)$.  Then $(A_1 \cup \{x\},B_1)$ is the base biclique, contradicting the fact that $(A_1,B_1)$ is maximum.  A similar argument is true for $y \in B_2$. Therefore, the lemma follows.  
	\end{proof}
	\begin{lemma}
		\label{lem2}
		Let $G$ be a $P_5$-free chordal bipartite graph. Then,  $\forall x \in A_{2}$, $N_G(x) \subset B_{1}$ and $ \forall y \in B_{2}$, $N_G(y) \subset A_{1}$. 
	\end{lemma}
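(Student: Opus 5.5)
We need to show that every $x\in A_2$ has all its neighbours in $B_1$ (and symmetrically for $y\in B_2$). Since $A_2$ and $B_2$ are defined to be independent sets in the decomposition $A=A_1\cup A_2$, $B=B_1\cup B_2$, it suffices to rule out an edge $\{x,y\}$ with $x\in A_2$, $y\in B_2$. The plan is to argue by contradiction: assume such an edge exists and use Lemma \ref{lem1} together with $P_5$-freeness to build a forbidden configuration.

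Suppose $x\in A_2$, $y\in B_2$ and $\{x,y\}\in E(G)$. By Lemma \ref{lem1} there is $y_k\in B_1$ with $\{x,y_k\}\notin E(G)$ and $x_k\in A_1$ with $\{x_k,y\}\notin E(G)$. Since $(A_1,B_1)$ is a biclique, $\{x_k,y_k\}\in E(G)$.

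We next distinguish cases on whether $x$ has some neighbour in $B_1$ and whether $y$ has some neighbour in $A_1$.

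\begin{itemize}
\item \emph{Both exist.} Let $y'\in B_1\cap N_G(x)$ and $x'\in A_1\cap N_G(y)$. Then $(y_k,x_k,y',x,y)$ is a $P_5$. We check inducedness using bipartiteness: the only possible chords are $\{y_k,x\}$ (absent by choice), $\{x_k,y\}$ (absent by choice) and $\{y_k,y\}$ or $\{y',y\}$ (same side, so absent). This contradicts $P_5$-freeness.
\item \emph{$y$ has no neighbour in $A_1$ but $x$ does.} Using $y'\in B_1\cap N_G(x)$, the same path $(y_k,x_k,y',x,y)$ works. Here $\{x_k,y\}\notin E$ holds automatically since $N_G(y)\cap A_1=\emptyset$.
\item \emph{$x$ has no neighbour in $B_1$ but $y$ does.} This case is symmetric to the previous one, using $(x_k,y_k,x',y,x)$.
\item \emph{Neither exists.} Then $\{x,y\}$ together with all other neighbours of $x,y$ lies entirely in $A_2\cup B_2$, apart from the edge itself. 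Here we rely on the connectivity assumed throughout the paper. A shortest path from $x$ to $A_1\cup B_1$ must pass through $A_2\cup B_2$ vertices, and its last vertex before the base biclique, say $a\in A_2$ adjacent to some $b\in B_1$, gives an induced path $(\ldots,a,b,x_k',\ldots)$. Extending this by Lemma \ref{lem1} (a vertex of $B_1$ missed by $a$) yields an induced path of length at least five.
\end{itemize}

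The main obstacle is the last case, and more generally whether the decomposition really forces $A_2\cup B_2$ vertices to attach to the biclique. I expect the cleanest route is to take a shortest path from the edge $\{x,y\}$ into $A_1\cup B_1$ and show that its final segment, together with a non-neighbour supplied by Lemma \ref{lem1} and one more biclique vertex, always gives an induced $P_5$. The minimality of the path ensures there are no chords back to earlier vertices.

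Given that every neighbour of $x\in A_2$ lies in $B_1$, we obtain $N_G(x)\subseteq B_1$. Strictness of the inclusion follows directly from Lemma \ref{lem1}, since $y_k\in B_1\setminus N_G(x)$. The statement for $y\in B_2$ follows by the symmetric argument.
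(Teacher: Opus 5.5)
Your proof is correct and is essentially the paper's: the two induced $P_5$'s you build from the Lemma~\ref{lem1} non-neighbours $x_k,y_k$ are exactly the paths in the paper's Cases~3 and~2 (used when $x$ has a neighbour in $B_1$, respectively when $y$ has one in $A_1$), and strictness comes from maximality as in its Case~1. Your ``neither'' case is genuinely needed, because the paper merely asserts $N(y_b)\cap A_1\neq\emptyset$ ``since $G$ is connected'', and your shortest-path sketch supplies this step correctly: take a shortest path $p_0,\ldots,p_k$ from $\{x,y\}$ into $A_1\cup B_1$, so $k\geq 2$; then $p_{k-2},p_{k-1},p_k$, followed by any biclique vertex opposite $p_k$ and a Lemma~\ref{lem1} non-neighbour of $p_{k-1}$, form an induced $P_5$, with minimality of the path ruling out the chord from $p_{k-2}$ --- equivalently, you could simply choose the offending edge to be $\{p_{k-2},p_{k-1}\}$ from the start.
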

	\begin{proof}
		On the contrary,  $\exists x_{a} \in A_{2}$, $N(x_{a}) \not\subset B_{1}$. Case 1: $N(x_{a}){=}B_{1}$. Then, $(A_{1} \cup \{x_{a}\}, B_{1})$ is the  base biclique, a contradiction.  Case 2: $N(x_{a}) \cap B_{2} \neq \emptyset$.  This implies that there exists $y_{b} \in B_2$ such that $y_b \in N(x_{a})$.  Since $G$ is connected, $N(y_{b}) \cap A_{1} \neq \emptyset$, say $x_{c} \in N(y_{b}) \cap A_{1}$.  In $G$, $P(x_{a},x_{k})=(x_{a},y_{b},x_{c},y_{k},x_{k})$ is an induced $P_{5}$.  Note that, due to the maximality of $(A_1,B_1)$, as per Lemma \ref{lem1}, we find $x_{k} \not\in N(y_{b}), y_{k} \not\in N(x_{a})$.  This contradicts that $G$ is $P_5$-free.  Case 3: $N(x_{a}) \cap B_{1} \neq \emptyset$ and $N(x_{a}) \cap B_{2} \neq \emptyset$. Observe that $ \exists x_{a} \in A_{2}$ such that $y_{b},y_{c} \in N(x_{a})$ and $y_{b} \in B_{1}$, $y_{c} \in B_{2}$.  In $G$, $P(y_{c},y_{k})=(y_{c},x_{a},y_{b},x_{k},y_{k}$), $x_{k} \not\in N(y_{c})$, $y_{k} \not\in N(x_{a})$ is an induced $P_{5}$.  Note that the existence of $x_k,y_k$ is due to Lemma \ref{lem1}. This contradicts that $G$ is $P_5$-free. Similarly, $ \forall y \in B_{2}$, $N(y) \subset A_{1}$, can be proved. 
	\end{proof}
	\begin{lemma}
		\label{lem3}
		Let $G$ be a $P_5$-free chordal bipartite graph.  
		For $x_i,x_j \in A_{2}$, $i \not= j$, if $d(x_i) \leq d(x_j)$, then $N(x_i) \subseteq N(x_j)$.   Similarly, for $y_i,y_j \in B_{2}$, if $d(y_i) \leq d(y_j)$, $N(y_i) \subseteq N(y_j)$.
	\end{lemma}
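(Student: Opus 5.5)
The plan is a proof by contradiction that produces an induced $P_5$, in the same style as the proof of Lemma \ref{lem2}. By Lemma \ref{lem2}, for $x_i,x_j\in A_2$ both $N(x_i)$ and $N(x_j)$ lie inside $B_1$. Since $(A_1,B_1)$ is a biclique, every vertex of $A_1$ is adjacent to every vertex of $B_1$, so we can join neighbourhoods through $A_1$.

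Assume $d(x_i)\le d(x_j)$ but $N(x_i)\not\subseteq N(x_j)$. Then there is $y_a\in N(x_i)\setminus N(x_j)$. Because $|N(x_j)|\ge |N(x_i)|$ and $N(x_i)\neq N(x_j)$, there is also $y_b\in N(x_j)\setminus N(x_i)$; this is the only place the degree hypothesis is used. So the two neighbourhoods are incomparable.

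Now pick any $x_c\in A_1$, which is adjacent to both $y_a$ and $y_b$. Consider
\[
P=(x_i,\,y_a,\,x_c,\,y_b,\,x_j).
\]
The edges of $P$ are present by the choices above. For the non-edges: $x_i,x_c,x_j$ are pairwise nonadjacent, and so are $y_a,y_b$, since each set lies in one side of the bipartition. Also $x_iy_b\notin E$ and $x_jy_a\notin E$ by construction. Hence $P$ is an induced $P_5$, contradicting that $G$ is $P_5$-free.

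The symmetric statement for $y_i,y_j\in B_2$ follows the same way. Lemma \ref{lem2} gives $N(y)\subset A_1$ for $y\in B_2$, and a common neighbour is taken from $B_1$.

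The main obstacle is making sure $A_1\neq\emptyset$ (respectively $B_1\neq\emptyset$), so that a common neighbour $x_c$ exists. This holds because the base biclique is built from $x_1$ and its neighbourhood in a connected graph, so both of its sides are nonempty.
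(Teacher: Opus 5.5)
Your proof is correct, but it is organized differently from the paper's. The paper does not use Lemma~\ref{lem2} or the biclique $(A_1,B_1)$ here. Instead it splits on whether $N(x_i)\cap N(x_j)$ is empty:
\begin{itemize}
\item If there is a common neighbour $y_a$, it takes private neighbours $y_b\in N(x_i)\setminus N(x_j)$ and $y_c\in N(x_j)\setminus N(x_i)$. It then exhibits the induced $P_5$ $(y_c,x_j,y_a,x_i,y_b)$, centred at a vertex of $B$.
\item If the neighbourhoods are disjoint, it uses connectivity to join $y_a$ to $y_b$ by an induced path and attaches $x_i$ and $x_j$ at the two ends.
\end{itemize}
You instead combine $N(x_i)\cup N(x_j)\subseteq B_1$ with the completeness of $(A_1,B_1)$. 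This gives one vertex $x_c\in A_1$ adjacent to both private neighbours, so the single path $(x_i,y_a,x_c,y_b,x_j)$ covers both of the paper's cases.

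Your route gains brevity and rigour. The paper's disjoint case is only sketched: it never checks that the connecting path avoids $x_i$, $x_j$ and their other neighbours. The clean fix would be a shortest $x_i$--$x_j$ path, which has at least five vertices when there is no common neighbour. In your path, every adjacency and non-adjacency is verified directly.

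The paper's route gains independence from the decomposition. It uses only $P_5$-freeness, bipartiteness and connectivity. So it actually shows nested neighbourhoods for any two same-side vertices with comparable degrees, not just those in $A_2$ or $B_2$. Your argument, by contrast, is only as solid as Lemma~\ref{lem2}, which carries the real structural content. The statement of that lemma is true, since connected $P_5$-free bipartite graphs are chain graphs, but its proof in the paper is loose in places, e.g.\ inferring $N(y_b)\cap A_1\neq\emptyset$ from connectivity alone.
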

	\begin{proof}
		Let us assume to the contrary that $N(x_i) \not\subseteq N(x_j)$, i.e., $N(x_i)  \setminus  N(x_j) \neq \emptyset$. 
		
		\textbf{Case 1:} $N(x_i) \cap N(x_j)\neq \emptyset$. This implies that $\exists y_{a}\in B$ such that  $y_{a} \in N(x_i) \cap N(x_j)$.  Since $N(x_i) \not\subseteq N(x_j)$, $\exists y_{b}\in B$ such that  $y_{b} \not\in N(x_i) \cap N(x_j)$ and $y_{b} \in N(x_i)$.  Since $d(x_j) \geq d(x_i)$, vertex $x_j$ is adjacent to at least one more vertex $y_{c} \in B$ such that $y_{c} \not\in N(x_i)$.  The path $P(y_{c},y_{b})=(y_{c},x_j,y_{a},x_i,y_{b}$) is an induced $P_{5}$.  This is a contradiction. 
		
		 \textbf{Case 2:} $N(x_i) \cap N(x_j)= \emptyset $, $ | N(x_i) | \geq 1$ and $ | N(x_j) | \geq 1$. We observe that  $\exists y_{a},y_{b}\in B$ such that  $y_{a} \in N(x_i)$, $y_{a} \not\in N(x_j)$ and $y_{b} \in N(x_j)$, $y_{b} \not\in N(x_i)$.  Since $G$ is a connected graph,  $|P(y_{a},y_{b})| \geq 3$ is an induced path with at least three vertices. The path $P(x_i,x_j)=(x_i,P(y_{a},y_{b}),x_j$) is an induced path with at least five vertices. This is a contradiction. Similarly, for all pairs of distinct vertices $y_i,y_j \in B_{2}$ with $d(y_i) \leq d(y_j)$, $N(y_i) \subseteq N(y_j)$ can be proved.   
	\end{proof}
	\begin{theorem}
		\label{thm1}
		Let $G$ be a $P_5$-free chordal bipartite graph with $(A_1,B_1)$ being the base biclique.  Let $A_{2}=(u_{1},u_{2},...,u_{p})$ and $B_{2}=(v_{1},v_{2},...,v_{q})$ are orderings of vertices.  If $d_{G}(u_1) \leq d_{G}(u_2) \leq d_{G}(u_3) \leq \ldots \leq d_{G}(u_p)$, then $N(u_{1}) \subseteq N(u_{2}) \subseteq N(u_{3}) \subseteq \ldots \subseteq N(u_{p})$.  Further, if $d_{G}(v_1) \leq d_{G}(v_2) \leq d_{G}(v_3) \leq \ldots \leq d_{G}(v_q)$, then $N(v_{1}) \subseteq N(v_{2}) \subseteq N(v_{3}) \subseteq \ldots \subseteq N(v_{q})$.
	\end{theorem}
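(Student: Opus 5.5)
This theorem follows directly from Lemma \ref{lem3}, so the plan is to turn its pairwise statement into a chain and handle ties in degree. Fix the ordering $u_1,\ldots,u_p$ of $A_2$ with $d_G(u_1)\leq\cdots\leq d_G(u_p)$. For each consecutive pair $(u_k,u_{k+1})$ we have $d_G(u_k)\leq d_G(u_{k+1})$ with $u_k\neq u_{k+1}$, and both lie in $A_2$. Lemma \ref{lem3} then gives $N(u_k)\subseteq N(u_{k+1})$. Chaining these $p-1$ inclusions yields $N(u_1)\subseteq N(u_2)\subseteq\cdots\subseteq N(u_p)$, and by transitivity also $N(u_a)\subseteq N(u_b)$ for all $a\leq b$.

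Ties need a separate check. If $d_G(u_k)=d_G(u_{k+1})$, Lemma \ref{lem3} applies in both directions, so $N(u_k)=N(u_{k+1})$. The chain is therefore consistent no matter how equal-degree vertices are ordered. The same argument, using the second half of Lemma \ref{lem3} on $B_2=(v_1,\ldots,v_q)$, gives $N(v_1)\subseteq\cdots\subseteq N(v_q)$.

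The only step where I would be careful is checking that Lemma \ref{lem3} really covers every consecutive pair. Its Case 2 needs a connecting path between the two neighbourhoods. If that proof seemed shaky, I would use Lemma \ref{lem2} as a fallback: it gives $N(u_k),N(u_{k+1})\subseteq B_1$. Each vertex of $B_1$ is adjacent to every vertex of $A_1$, so for $y_a\in N(u_k)\setminus N(u_{k+1})$ and $y_b\in N(u_{k+1})\setminus N(u_k)$, any $x\in A_1$ gives the path $(u_k,y_a,x,y_b,u_{k+1})$. This is an induced $P_5$, contradicting the hypothesis. The two neighbourhoods are therefore comparable under inclusion, and comparing degrees fixes the direction. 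With that settled, the rest is the routine transitivity argument above.
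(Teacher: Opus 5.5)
Your proof is correct and matches the paper's. The paper phrases it as an induction on $|A_2|$, but the content is the same: apply Lemma~\ref{lem3} to consecutive pairs $(u_k,u_{k+1})$, chain the inclusions, and do likewise for $B_2$. Your Lemma~\ref{lem2}-based fallback is also sound, since $(u_k,y_a,x,y_b,u_{k+1})$ with $x\in A_1$ is indeed an induced $P_5$, and it justifies pairwise comparability more directly than Case~2 of Lemma~\ref{lem3}.
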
 
	\begin{proof} 
		We shall prove by mathematical induction on $|A_2|$.  Base Case: $|A_{2}|=2, A_2=(u_{1},u_{2})$ such that $d(u_{1}) \leq d(u_2)$.  By Lemma \ref{lem3}, $N(u_1) \subseteq N(u_2)$.  Induction step: Consider $A_2=(u_{1},u_{2},u_{3},\ldots,u_{p}), p \geq 3$.  Consider the vertex $u_{p} \in A_{2}$ such that $d(u_p) \geq d(u_{p-1})$.  By Lemma \ref{lem3}, $N(u_{p-1}) \subseteq N(u_{p})$ is true.  By the hypothesis, $N(u_{1}) \subseteq N(u_{2}) \subseteq N(u_{3}) \subseteq \ldots \subseteq N(u_{p-1})$.  By combining the hypothesis and the fact that $N(u_{p-1}) \subseteq N(u_{p})$, our claim follows.  Similarly for $B_{2}$ as well.
	\end{proof}
\noindent	We refer to the above ordering of vertices as {\em Nested Neighbourhood Ordering (NNO)} of $G$.  From now on, we shall arrange the vertices in $A_2$ in the non-decreasing order of their degrees to work with {\em NNO} of $G$.    
	\subsection{Hamiltonicity in $P_{5}$-free chordal bipartite graphs}
	\label{hamil} 
	For a connected graph $G$ and the set $S \subset V(G)$, $c(G-S)$ denotes the number of connected components in the graph induced on the set $V(G) \setminus S$. It is well-known, due to Chv{\'a}tal \cite{dbwest2003} that if a graph $G$ has a Hamiltonian cycle, then for every $S\subset V(G), c(G-S) \leq |S|$. Similarly, if a graph $G$ has a Hamiltonian path, then for every $S\subset V(G), c(G-S) \leq |S|+1$. \\
	\begin{theorem}
		\label{thm2}
		For a $P_5$-free chordal bipartite graph $G$, $G$ has a  Hamiltonian cycle if and only if (i) $|A| = |B|$ and (ii) $A_{2}$ has an ordering $(u_{1},u_{2},\ldots,u_{p})$, such that $\forall u_g, d(u_g) > g$, $1 \leq g \leq p$ and $B_{2}$ has an ordering $(v_{1},v_{2},\ldots,v_{q})$, $ \forall v_h, d(v_h) >h$, $1 \leq h \leq q$.
	\end{theorem}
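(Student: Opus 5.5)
The plan is to prove necessity with Chv\'atal's component-count condition applied to the nested neighbourhoods of Theorem~\ref{thm1}, and sufficiency by an explicit zig-zag construction that uses the biclique $(A_1,B_1)$ as a hub. Throughout, take the orderings of $A_2$ and $B_2$ to be the NNO of Theorem~\ref{thm1}, so $N(u_1)\subseteq \cdots \subseteq N(u_p)$ and $N(v_1)\subseteq\cdots\subseteq N(v_q)$. By Lemma~\ref{lem2}, $N(u_g)\subseteq B_1$ and $N(v_h)\subseteq A_1$.

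\emph{Necessity.} Condition (i) holds because any Hamiltonian cycle in a bipartite graph alternates sides, which forces $|A|=|B|$. For (ii), fix $g$ and let $S=N(u_1)\cup\cdots\cup N(u_g)$; by nesting, $S=N(u_g)\subseteq B_1$. In $G-S$ the vertices $u_1,\ldots,u_g$ are isolated. Moreover $A_1\neq\emptyset$ and $A_1\cap S=\emptyset$, so at least one further component survives. Hence $c(G-S)\geq g+1$, and Chv\'atal's condition $c(G-S)\le |S|$ gives $d(u_g)=|S|\geq g+1>g$. The argument for $v_h$ is symmetric, with $S=N(v_h)\subseteq A_1$ and $B_1\neq\emptyset$ surviving.

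\emph{Sufficiency.} First I would order $B_1=(y_1,\ldots,y_{|B_1|})$ so that each $N(u_g)$ is an initial segment $\{y_1,\ldots,y_{d(u_g)}\}$. This is possible because the sets $N(u_g)$ form a chain. In the same way, order $A_1=(x_1,\ldots)$ so that each $N(v_h)$ is a prefix. Since $d(u_g)\ge g+1$, the vertex $u_g$ is adjacent to both $y_g$ and $y_{g+1}$, so
\[
P_1=(y_1,u_1,y_2,u_2,\ldots,u_p,y_{p+1})
\]
is a path; note that $d(u_p)\ge p+1$ guarantees $|B_1|\ge p+1$. Likewise
\[
P_2=(x_1,v_1,x_2,\ldots,v_q,x_{q+1})
\]
is a path. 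Condition (i) gives $|A_1|+p=|B_1|+q$. Therefore the leftover sets $B_1\setminus\{y_1,\ldots,y_{p+1}\}$ and $A_1\setminus\{x_1,\ldots,x_{q+1}\}$ have the same size $r=|B_1|-p-1=|A_1|-q-1$; call their elements $y'_1,\ldots,y'_r$ and $x'_1,\ldots,x'_r$.

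The Hamiltonian cycle is then
\[
P_1,\;P_2,\;y'_1,x'_1,\ldots,y'_r,x'_r,\;y_1 .
\]
Every edge used to join consecutive pieces, including the closing edge $x'_r y_1$ (or $x_{q+1}y_1$ when $r=0$), goes between $A_1$ and $B_1$, so it exists because $(A_1,B_1)$ is a biclique.

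The main obstacle is the bookkeeping rather than any deep step. One needs to justify the prefix ordering of $B_1$ that is compatible with the nested chain. One also has to handle degenerate cases. When $p=0$ or $q=0$, $P_1$ or $P_2$ degenerates to the single vertex $y_1$ or $x_1$. When $r=0$, the cycle closes directly with $x_{q+1}y_1$. The cycle must also have length at least $4$, which needs $|A_1|,|B_1|\ge 2$ after accounting for the paths; I would check separately that the small cases still give a genuine cycle.
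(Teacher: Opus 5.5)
Your proposal is correct and follows the paper's proof. For necessity you apply Chv\'atal's condition to $S=N(u_g)\subseteq B_1$; you do this for every $g$ directly rather than for the first violating index, which is slightly cleaner but the same idea. For sufficiency you use the same zig-zag cycle $(y_1,u_1,\ldots,u_p,y_{p+1},x_1,v_1,\ldots,v_q,x_{q+1},y_{p+2},x_{q+2},\ldots,y_j,x_i,y_1)$ routed through the biclique $(A_1,B_1)$. Your small-case check only fails for $G=K_{1,1}$, where the statement itself is false because a cycle needs at least three vertices; for $|A|=|B|\ge 2$ your construction always gives a genuine cycle.
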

	\begin{proof}
		{\em Necessity:} (i) Any cycle in a bipartite graph is even and alternates between vertices of $A$ and $B$. Since the Hamilton cycle visits all the vertices in $A$ and $B$, it must have $ |A|= |B|$. (ii) On the contrary, $ \exists u_g \in A_{2}$  such that $u_g$ is the first vertex in the ordering with $d(u_g) \leq g$.  That is, for $u_k \in \{u_1,\ldots,u_{g-1}\}$, $d_{G}(u_k) > k$ and $d(u_g) \leq g$.  Since $G$ follows {\em NNO}, $d_{G}(u_g)= g$.  From Theorem \ref{thm1},  we know that $N(u_1) \subseteq N(u_2) \subseteq \ldots \subseteq N(u_{g-1}) \subseteq N(u_g)$.   This implies that $c(G-N(u_g)) = g+1 >g$.  This is a contradiction to Chv{\'a}tal's necessary condition for the Hamiltonian cycle. Similarly, in $B_{2}$, $ \forall v_h,$ $d(v_h)>h$ can be proved. 
		
		{\em Sufficiency:} Let $i=|A_1|$ and $j=|B_1|$. Since $A_{2}$ has an ordering such that $ \forall u_g \in A_2$, $d(u_g)>g$, for clarity purpose, we define $N_G(u_g)$ as follows; $N(u_g)=\{y_{1},y_{2},\ldots,y_{l}\}$, $g<l<j$,  that is, $u_1$ is adjacent to at least two vertices $\{y_{1},y_{2}\}$ and at most $j-1$ vertices $\{y_{1},y_{2},\ldots,y_{j-1}\}$, $u_2$ is adjacent to at least three vertices $\{y_{1},y_{2},y_{3}\}$ and at most $j-1$ vertices $\{y_{1},y_{2},\ldots,y_{j-1}\}$ and similarly $u_p$ is adjacent to at least $p+1$ vertices $\{y_{1},y_{2},\ldots,y_{p+1}\}$ and at most $j-1$ vertices $\{y_{1},y_{2},\ldots,y_{j-1}\}$.  Observe that, due to the maximality of $(A_1,B_1)$, any $u_g$ of $A_2$ can be adjacent to at most $j-1$ vertices of $B_1$. Similarly, in $B_2$, for all $v_h \in B_2$,               $N(v_h)=\{x_{1},x_{2},\ldots,x_{l}\},$ $h<l<i$. 
		
		Let $d(u_p)=r, p+1 \leq r \leq j-1$ and $d(v_q)=s, q+1 \leq s \leq i-1$. The vertices in $A_{1}$ can be ordered as $(x_{1},x_{2},\ldots,x_{q},x_{q+1},\ldots,x_{i})$ and the vertices in $B_{1}$ can be ordered as $(y_{1},y_{2},\ldots,y_{p},y_{p+1},\ldots,y_{j})$.  
		Note that $A_{3}=A_{1}{\setminus}\{x_{1},x_{2},\ldots,x_{q+1}\}=\{x_{q+2},\ldots,x_{i-1},x_{i}\}$ and $B_{3}=B_{1}{\setminus}\{y_{1},y_{2},\ldots,y_{p+1}\}=\{y_{p+2},\ldots,y_{j-1},y_{j}\}$. 
		Further, $|A_{3}| = | A | -(| A_{2} | +q+1)= | A | -(p+q+1)$ and $ | B_{3} | = | B | -( | B_{2} | +p+1)= | B | -(q+p+1)$. 
		Since $ | A | = | B | $, it follows that $ | A_{3} | = | B_{3} | $. 
		In $G$,  $(y_{1},u_{1},y_{2},u_{2},\ldots,y_{p},u_{p},y_{p+1},x_{1},v_{1},x_{2},v_{2},\ldots,x_{q},v_{q},x_{q+1},y_{p+2},x_{q+2},$ $\ldots,y_{j},x_{i},y_{1})$ is a Hamiltonian cycle. 
	\end{proof}
	\begin{theorem}
		\label{thm3}
		For a $P_5$-free chordal bipartite graph $G$, $G$ has a  Hamiltonian path  if and only if one of the following is true \\
		(i)  $|A|=|B|$ and $A_2$ has an ordering,  $\forall u_g, d(u_g) \geq g$, $1 \leq g \leq p$ and $B_{2}$ has an ordering, $ \forall v_h, d(v_h) \geq h$, $1 \leq h \leq q$. \\
		(ii) $|A|=|B|+1$ and $A_2$ has an ordering,$\forall u_g, d(u_g) \geq g$, $1 \leq g \leq p$ and $B_{2}$ has an ordering, $ \forall v_h, d(v_h) >h$, $1 \leq h \leq q$. 
	\end{theorem}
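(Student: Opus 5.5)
I would mirror the proof of Theorem \ref{thm2}, using the nested neighbourhood ordering of Theorem \ref{thm1} for necessity and an explicit alternating path for sufficiency. Throughout, $A_2=(u_1,\ldots,u_p)$ and $B_2=(v_1,\ldots,v_q)$ are in NNO. By Lemma \ref{lem2}, each $N(u_g)$ is a prefix $\{y_1,\ldots,y_{d(u_g)}\}$ of $B_1$ and each $N(v_h)$ is a prefix $\{x_1,\ldots,x_{d(v_h)}\}$ of $A_1$, with $B_1$ and $A_1$ ordered as in Theorem \ref{thm2}. Without loss of generality $|A|\ge |B|$, so the two cases of the statement are the only ones left.

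\emph{Necessity.} A Hamiltonian path in a bipartite graph alternates sides, so $|A|-|B|\in\{0,1\}$. For the degree conditions I use Chv\'atal's path condition $c(G-S)\le |S|+1$.
\begin{itemize}
\item Suppose $u_g$ is the first vertex of $A_2$ with $d(u_g)<g$. Take $S=N(u_g)$. Then $u_1,\ldots,u_g$ are isolated in $G-S$, and since $S\ne B$ the rest of the graph gives at least one more component. So $c(G-S)\ge g+1>|S|+1$, a contradiction. This gives $d(u_g)\ge g$ in both cases, and symmetrically $d(v_h)\ge h$.
\item For the stronger condition $d(v_h)>h$ when $|A|=|B|+1$, counting components does not suffice. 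Instead I count path edges. A Hamiltonian path must have both endpoints in $A$. Suppose $d(v_h)=h$ and take $S=N(v_h)=\{x_1,\ldots,x_h\}$. The vertices $v_1,\ldots,v_h$ lie in $B$ and see only $S$. Each of them is interior to the path, so each uses two path edges into $S$. That is $2h$ edges into $h$ vertices of degree at most $2$ on the path, so every edge at $S$ goes to $\{v_1,\ldots,v_h\}$. Then $S\cup\{v_1,\ldots,v_h\}$ is cut off from the rest of the path, a contradiction.
\end{itemize}
The analogous count in case (i) only gives $d(v_h)\ge h$, since one endpoint may lie in $B$. This is consistent with the statement.

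\emph{Sufficiency.} I would build the path by adapting the cycle
$$(y_1,u_1,y_2,u_2,\ldots,u_p,y_{p+1},x_1,v_1,\ldots,x_q,v_q,x_{q+1},y_{p+2},\ldots)$$
from Theorem \ref{thm2}. With only $d(u_g)\ge g$, I shift the $A_2$-segment and start at $u_1$: $(u_1,y_1,u_2,y_2,\ldots,u_p,y_p,\ldots)$. This uses $u_g\sim y_{g-1},y_g$, which needs $d(u_g)\ge g$ and is always satisfied. Similarly, in case (i) I end with $(\ldots,x_q,v_q,x_{q-1},\ldots,x_1,v_1)$, using $v_h\sim x_h,x_{h-1}$. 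The middle part then joins $y_p$ to the remaining vertices of $A_1$ through the biclique $(A_1,B_1)$, alternating in the leftover sets $A_3,B_3$. In case (ii) the $B_2$-segment keeps the interior form $x_1,v_1,x_2,\ldots,v_q,x_{q+1}$, which needs $d(v_h)>h$, and the path ends at a vertex of $A_1$. The extra $A$-vertex is absorbed because the $u_1$ endpoint contributes one more $A$-vertex than the $B$-vertices it pairs with.

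\emph{Main obstacle.} The main difficulty is the counting in sufficiency: checking that $|A_3|$ and $|B_3|$ differ by exactly the right amount in each case, so that the biclique part closes as a path. The boundary situations $p=0$, $q=0$, $A_3=\emptyset$ or $B_3=\emptyset$ need separate treatment, in particular when $d(u_p)=p$ exhausts $y_p$ so that no $y_{p+1}$ remains to bridge into $A_1$. I would handle these using maximality of the base biclique (Lemma \ref{lem1}), which guarantees $|B_1|>d(u_p)$ and so leaves a spare $B_1$-vertex for the bridge.
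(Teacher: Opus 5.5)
Your proof is correct, and your sufficiency construction is the paper's: $(u_1,y_1,\ldots,u_p,y_p,x_{q+1},y_{p+1},\ldots,x_i,y_j,x_q,v_q,\ldots,x_1,v_1)$ for (i), and for (ii) the same path with tail $x_{q+1},v_q,x_q,\ldots,x_2,v_1,x_1$. Three small corrections to your sketch:
\begin{itemize}
\item The counting you call the main obstacle is just $i+p=j+q$, respectively $i+p=j+q+1$.
\item No spare bridge vertex is needed, because $y_p\in B_1$ is adjacent to all of $A_1$.
\item In (ii) the surplus $A$-vertex is carried by the $B_2$-segment, which has $q+1$ vertices of $A$ and $q$ of $B$. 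It is not carried by the $u_1$ end, since $u_1,y_1,\ldots,u_p,y_p$ is balanced.
\end{itemize}

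The genuine difference is in proving that $d(v_h)>h$ is necessary when $|A|=|B|+1$. Your remark that component counting cannot do this is true only for $S=N(v_h)$. The paper stays within Chv\'atal's condition but uses the larger separator $S=B_1\cup\{v_{h+1},\ldots,v_q\}$. Removing it isolates all $p$ vertices of $A_2$ and the $i-h$ vertices $x_{h+1},\ldots,x_i$, and leaves at least one further component on $\{x_1,\ldots,x_h,v_1,\ldots,v_h\}$. Hence $c(G-S)\ge p+i-h+1=|S|+2$.

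Your saturation argument works differently. Every $B$-vertex is interior, so $v_1,\ldots,v_h$ send $2h$ path edges into the $h$ vertices of $N(v_h)$, which seals that set off from the rest of the path. This is shorter and purely local. However, it refutes traceability by degree counting rather than by exhibiting a set with $c(G-S)>|S|+1$.

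That distinction matters later in the paper. Its choice of separator is exactly what it reuses (Case~2 of the subsequent theorem) to claim that Chv\'atal's condition is also sufficient for Hamiltonian paths in this class. Your route proves the present theorem, but would need the paper's separator to yield that corollary.
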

	\begin{proof}
		{\em Necessity:} (i) Without loss of generality, we assume $|A| \geq |B|$.  Any Hamiltonian path starting at $A$ and alternating between $A$ and $B$ can end at $A$ or $B$. Therefore $|A|=|B|$ or $|A|=|B|+1$. To prove that $A_2$ satisfies the ordering, we assume to the contrary that $ \exists u_g \in A_{2}$ such that $u_g$ is the first vertex in the ordering such that $d(u_g)<g$. Since $G$ follows {\em NNO}, $d_{G}(u_g)= g - 1$. From Theorem \ref{thm1}, we know that $N(u_1) \subseteq N(u_2) \subseteq \ldots \subseteq N(u_{g-1}) \subseteq N(u_g)$.  Note that, as per the ordering of $A_2$, $N(u_{g-1})=N(u_g)$. On removing $N(u_g)$ from $G$ we have $g$ components in $A_2$ and $A_1 \cup B_2 \cup (B_1 -N(u_g))$ forms another component.  This implies that $c(G-N(u_g)) =  g+1$.  Clearly, $g+1 \not \leq g-1$.  Thus, we contradict the Chv{\'a}tal's necessary condition for the Hamiltonian path. Similarly $B_{2}$ has an ordering such that $ \forall v_h, d(v_h) \geq h$, $1 \leq h \leq q$. 
		
		(ii) For $A_2$, the argument is similar to the above.   Suppose ${\exists}v_{r}{\in}B_{2}$ such that $v_{r}$ is the first vertex in the ordering such that $d(v_{r}){\leq}r$. From Theorem \ref{thm1}, $N(v_{1}){\subseteq}N(v_{2}){\subseteq}\ldots{\subseteq}N(v_{r-1}){\subseteq}N(v_{r})$. Consider the set $S=B_{1}{\cup}\{v_{r+1},v_{r+2},\ldots,v_{q}\}$ and $|S|=j+q-r-1+1=j+q-r$. Note that the removal of $S$ disconnects $G$. We shall now count the connected components in $c(G-S)$. Firstly, we count the connected components obtained through $A_2$. Since $A_2$ is an independent set adjacent only to $B_1$, we get $p$ connected components. Secondly, the vertices $\{v_{1},v_{2},\ldots,v_{r}\}$ together with  $\{x_{1},x_{2},\ldots,x_{r}\}$ forms one connected component. Thirdly, the remaining vertices of $A_1$ form $i-r$ connected components. Hence, $c(G-S) {\geq} p+1+i-r$.  Note that $|A|=i+p$ and $|B|=j+q$.  Since $|A|=|B|+1$, $c(G-S) {\geq} p+1+i-r = j+q+1+1-r=j+q-r+2$.  Clearly, $c(G-S) \not \leq |S|+1$, contradicting the Chv{\'a}tal's condition for the Hamiltonian path. 
		
		{\em Sufficiency:} (i) Let $N(u_g)=\{y_{1},\ldots,y_{l}\}$, $g \leq l < j$ and $N(v_h)=\{x_{1},\ldots,x_{l}\},$ $h{\leq}l<i$. Consider $A_{3}=A_{1}{\setminus}\{x_{1},x_{2},\ldots,x_{q}\}=\{x_{q+1},x_{q+2},\ldots,x_{i-1},x_{i}\}$.  $B_{3}=B_{1}{\setminus}\{y_{1},y_{2},\ldots,y_{p}\}=\{y_{p+1},x_{p+2},\ldots,y_{j-1},y_{j}\}$.   Note that $|A_3|=|A|-(p+q)$ and $|B_3|=|B|-(p+q)$.  In $G$, \\$P(u_{1},y_{1},u_{2},y_{2},\ldots,u_{p},y_{p},x_{q+1},y_{p+1},x_{q+2},y_{p+2},\ldots,x_{i},y_{j},x_{q},v_{q},\ldots,x_{1},v_{1})$ is a Hamiltonian path.  
		
		(ii) Consider $A_{3}=A_{1}{\setminus}\{x_{1},x_{2},\ldots,x_{q+1}\}=\{x_{q+2},x_{q+3},\ldots,x_{i-1},x_{i}\}$ and 
		$B_{3}=B_{1}{\setminus}\{y_{1},y_{2},\ldots,y_{p}\}=\{y_{p+1},x_{p+2},\ldots,y_{j-1},y_{j}\}$.
		In $G$, $P(u_{1},y_{1},u_{2},y_{2},\ldots,u_{p},y_{p},x_{q+2},y_{p+1},x_{q+3},y_{p+2},\ldots,x_{i},y_{j},x_{q+1},v_{q},\ldots,x_{2},$ $v_{1},x_{1})$ is a Hamiltonian path.  This completes the proof of this claim. 
	\end{proof}
\noindent	As the proofs of Theorems \ref{thm2} and \ref{thm3} are constructive, the Hamiltonian cycle (Hamiltonian path) for $P_5$-free chordal bipartite graphs can be computed in $O(n+m)$ time.  One of the important contributions of this paper is to make use of the results on Hamiltonicity as a framework to solve its variants and generations, which we shall present next.  We also establish that Chv{\'a}tal's necessary condition is sufficient for $P_5$-free chordal bipartite graphs.
	\subsection{Chv{\'a}tal's Necessary condition is Sufficient on $P_5$-free Chordal Bipartite Graphs}
	\label{chv}
	For the Hamiltonian cycle (path) problem, it is well-known from the literature that there are no necessary and sufficient conditions for general graphs. There are a few graphs for which Chv{\'a}tal necessary condition is sufficient, which includes $P_5$-free chordal bipartite graphs. \\
	\begin{theorem}
		Let $G$ be a $P_5$-free chordal bipartite graph with $|A| = |B|$. If $G$ satisfies $c(G-S)\leq|S|$, for every non-empty subset $S$ $\subseteq V(G)$, then $G$ has a Hamiltonian cycle.  
	\end{theorem}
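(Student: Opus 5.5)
The plan is to reduce the statement to the sufficiency direction of Theorem \ref{thm2}. Since $|A|=|B|$ is assumed, it is enough to check condition (ii) of Theorem \ref{thm2}. That is, with $A_2$ ordered by NNO as $(u_1,\ldots,u_p)$ (non-decreasing degree) and $B_2$ as $(v_1,\ldots,v_q)$, we need $d(u_g)>g$ for every $g$ and $d(v_h)>h$ for every $h$. Once this holds, the explicit cycle built in the sufficiency proof of Theorem \ref{thm2} is Hamiltonian, and we are done. So we argue by contraposition: if some vertex violates the degree condition, we exhibit a nonempty $S$ with $c(G-S)>|S|$.

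Suppose $u_g$ is the first vertex of $A_2$ with $d(u_g)\le g$, and take $S=N(u_g)$. By Theorem \ref{thm1}, $N(u_1)\subseteq\cdots\subseteq N(u_g)=S$, so each of $u_1,\ldots,u_g$ becomes an isolated vertex in $G-S$. By Lemma \ref{lem2}, $S\subseteq B_1$, and by Lemma \ref{lem1}, $S\neq B_1$. Hence some $y\in B_1\setminus S$ survives, and $y$ together with $A_1$ lies in at least one further component. Therefore $c(G-S)\ge g+1>g\ge d(u_g)=|S|$, contradicting the hypothesis. We also need $S$ nonempty, which holds since $G$ is connected. The symmetric argument with $S=N(v_h)\subseteq A_1$ handles $B_2$.

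There is one subtlety: we must make sure that the components counted are distinct. The isolated $u_k$'s are clearly separate from the component containing $y$, because $u_k\notin A_1$ and has no neighbours outside $S$. Degenerate cases, where $A_2$ or $B_2$ is empty, make condition (ii) of Theorem \ref{thm2} vacuous. If the base biclique is all of $G$, then $G=K_{i,i}$ with $i\ge 2$ (or a single edge, which must be excluded). In that case the Hamiltonian cycle exists directly.

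I expect the main obstacle to be that the first violating vertex may have $d(u_g)<g$, not merely $=g$. The count above handles this fine, since $c(G-S)\ge g+1>|S|$ whenever $|S|\le g$. The remaining point to verify is that the construction in Theorem \ref{thm2}'s sufficiency really uses only conditions (i) and (ii), so that invoking it is legitimate.
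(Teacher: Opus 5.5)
Your proposal is correct and follows the paper's proof: you argue by contraposition through Theorem~\ref{thm2}, take the first NNO-violating vertex $u_g$, set $S=N(u_g)$, and count the $g$ isolated vertices $u_1,\ldots,u_g$ plus one further component to get $c(G-S)\ge g+1>|S|$. Two of your details go beyond what the paper writes: you use Lemmas~\ref{lem1} and~\ref{lem2} to produce a surviving vertex of $B_1\setminus S$, which is what guarantees the extra component, and you note that the single edge $K_{1,1}$ must be excluded, since it satisfies the hypothesis but has no Hamiltonian cycle; the paper leaves both points implicit.
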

	\begin{proof}
		On the contrary, assume that $G$ has no Hamiltonian cycle, then there exists $u_g$ or $v_h$ in $A_2 (B_2)$ that violates the degree conditions mentioned in Theorem \ref{thm2}.  Let $u_g$ be the first vertex in the ordering with $d(u_g)\leq g$. By Theorem \ref{thm1}, we know that $G$ follows {\em NNO}, $d_{G}(u_g)= g$.  This implies that $c(G-N(u_g)) = g+1 >g$, which is a contradiction to the premise of the theorem. Similarly, $v_h$ in $B_2$ can be proved. 
	\end{proof}
	\begin{theorem}
		Let $G$ be a $P_5$-free chordal bipartite graph with $|A| = |B|$ or $|A| = |B|+1$. If $G$ satisfies $c(G-S)\leq|S|+1$, for every non-empty subset $S$ $\subseteq V(G)$, then $G$ has a Hamiltonian path.  
	\end{theorem}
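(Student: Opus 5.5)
We argue by contraposition, mirroring the proof of the preceding theorem for the Hamiltonian cycle. Suppose $G$ is a $P_5$-free chordal bipartite graph with $|A|=|B|$ or $|A|=|B|+1$, satisfying $c(G-S)\leq |S|+1$ for every non-empty $S\subseteq V(G)$, yet $G$ has no Hamiltonian path. By Theorem \ref{thm3}, whichever of cases (i) or (ii) applies (determined by $|A|-|B|$), one of its degree conditions on the NNO of $A_2$ or $B_2$ must fail. We exhibit a set $S$ violating Chv{\'a}tal's path condition in each situation. Throughout, we use Theorem \ref{thm1} (nested neighbourhoods) and Lemma \ref{lem2} ($N(A_2)\subseteq B_1$, $N(B_2)\subseteq A_1$).

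\emph{Failure of $d(u_g)\geq g$.} Let $u_g$ be the first vertex of $A_2$ in the NNO with $d(u_g)<g$. Minimality and the nesting force $d(u_g)=g-1$. Take $S=N(u_g)$, so $|S|=g-1$. Then $u_1,\ldots,u_g$ become isolated vertices, since all their neighbours lie in $N(u_g)$. The rest of the graph contains $A_1$, which is nonempty and is not contained in $S$ because $S\subseteq B_1$. Hence $c(G-S)\geq g+1>|S|+1=g$, a contradiction. The same argument applies verbatim to $B_2$ in case (i).

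\emph{Failure of $d(v_h)>h$ in case (ii).} Here $|A|=|B|+1$, and we let $v_r$ be the first vertex of $B_2$ with $d(v_r)\leq r$. We reuse the set from the necessity proof of Theorem \ref{thm3}(ii): $S=B_1\cup\{v_{r+1},\ldots,v_q\}$, with $|S|=j+q-r$. In $G-S$, every vertex of $A_2$ is isolated, since $N(A_2)\subseteq B_1$. This gives $p$ components. The vertices $v_1,\ldots,v_r$ together with $N(v_r)$, which has size at most $r$, form at most $r$ vertices of $A_1$, and these yield at least one component. Each remaining vertex of $A_1$ is isolated, giving at least $i-r$ more components. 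Thus $c(G-S)\geq p+i-r+1$. Substituting $i+p=|A|=j+q+1$ gives $c(G-S)\geq j+q-r+2=|S|+2$, a contradiction.

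The main obstacle is the component count in this last case. We must check that the $v_1,\ldots,v_r$ block really lies within $N(v_r)$ and contributes at least one component even when $d(v_r)<r$. If $d(v_r)<r$, that block actually splits into more components, which only strengthens the bound. We must also check that $S$ is non-empty, which holds since $B_1\neq\emptyset$. Finally, edge cases such as $A_2$ or $B_2$ being empty need to be handled, but the corresponding conditions are then vacuous. With these verified, every failing degree condition produces a violating $S$, so Theorem \ref{thm3} gives a Hamiltonian path.
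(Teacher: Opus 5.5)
Your proposal is correct and follows essentially the same route as the paper. The paper also argues by contradiction through Theorem~\ref{thm3}, using $S=N(u_g)$ when an $A_2$ (or $B_2$) condition fails and $S=B_1\cup\{v_{r+1},\ldots,v_q\}$ when the $B_2$ condition fails with $|A|=|B|+1$. Your version is slightly more complete than the paper's: it explicitly treats an $A_2$ violation in the $|A|=|B|+1$ case, and it states the inequality correctly as $c(G-S)\geq g+1>|S|+1$, where the paper writes $g+1\not\leq g-1$.
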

	\begin{proof}
		\textbf{Case 1: $|A| = |B|$}. Assume on the contrary that $G$ has no Hamiltonian path, then there exists  $u_g$ or $v_h$ in $A_2(B_2)$ that violates degree conditions $(i)$ mentioned in Theorem \ref{thm3}. Let $u_g$ be the first vertex in the ordering with $d(u_g) < g$.  By Theorem \ref{thm1}, we know that $G$ has {\em NNO} and hence $d_{G}(u_g)= g - 1$.  On removing $N(u_g)$ from $G$, we have $g$ components in $A_2$ and $A_1 \cup B_2 \cup (B_1 -N(u_g))$ forms another component.  This implies that $c(G-N(u_g)) =  g+1$.  Clearly, $g+1 \not \leq g-1$, a contradiction.
		
	\textbf{Case 2: $|A| = |B|+1$}. Assume, on the contrary, that $G$ has no Hamiltonian path. Let $v_{h}$ be the first vertex in $B_2$ in the ordering such that $d(v_{h}){\leq}h$. By Theorem \ref{thm1}, $G$ satisfies {\em NNO}.  Consider the set $S=B_{1}{\cup}\{v_{h+1},v_{h+2},\ldots,v_{q}\}$ and $|S|=j+q-h-1+1=j+q-h$.  Further, $c(G-S) {\geq} p+1+i-h$.  Note that $|A|=i+p$ and $|B|=j+q$.  Since $|A|=|B|+1$, $c(G-S) {\geq} p+1+i-h = j+q+1+1-h=j+q-h+2$, contradicting the premise. 
	\end{proof}
	\subsection{Hamiltonicity variants in $P_5$-free Chordal Bipartite Graphs}
	\label{hamilvary}
	In this section, we shall study the computational complexity of some of the variants of Hamiltonicity, such as bipancyclic, homogeneously traceable, EXACTLY 2 SIMPLE PATH COVER,  and hypo-Hamiltonian problems. Interestingly, for all of them,  similar to HCYCLE (HPATH), we do not know the necessary and sufficient conditions in general graphs. In this paper, we establish necessary and sufficient conditions for a few of the variants restricted to $P_5$-free chordal bipartite graphs. To the best of our knowledge, this is the first necessary and sufficient condition for these variants in $P_5$-free chordal bipartite graphs. \\ \\
	{\bf 1. Bipancyclicity} \\
	Pancyclic graphs contain cycles of all possible lengths from three up to the number of vertices in the graph. Since the graph under consideration is bipartite, we study the analogous version of pancyclic graphs, which are bipancyclic graphs.\\
\begin{theorem}
	For a $P_5$-free chordal bipartite graph $G$ of order $2n$, $G$ is Hamiltonian if and only if $G$ is bipancyclic.
\end{theorem}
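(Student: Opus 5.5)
The backward direction is immediate: a bipancyclic graph of order $2n$ contains a cycle of length $2n$, which is a Hamiltonian cycle. For the forward direction, I would use the characterization of Theorem \ref{thm2} together with the nested neighbourhood ordering of Theorem \ref{thm1}. The plan is to show that the hypotheses of Theorem \ref{thm2} are inherited by suitable induced subgraphs on fewer vertices, and then read off a cycle of every even length $2l$, $2\le l\le n$, from the constructive proof.

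Suppose $G$ is Hamiltonian. By Theorem \ref{thm2}, $|A|=|B|=n$, and $A_2=(u_1,\ldots,u_p)$, $B_2=(v_1,\ldots,v_q)$ satisfy $d(u_g)>g$ and $d(v_h)>h$. Let $C$ be the Hamiltonian cycle built in the proof of Theorem \ref{thm2},
\[(y_1,u_1,y_2,\ldots,u_p,y_{p+1},x_1,v_1,\ldots,x_q,v_q,x_{q+1},y_{p+2},x_{q+2},\ldots,y_j,x_i,y_1).\]
I would produce shorter cycles by peeling vertices off in the following order.

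\textbf{Step 1.} Remove the balanced tail $(y_{p+2},x_{q+2}),\ldots,(y_j,x_i)$ one pair at a time, where $x_i$ and $y_j$ are the last elements of $A_1$ and $B_1$. Since $A_1\cup B_1$ is a biclique, closing the cycle with the biclique edge from $x_{q+1}$ (or from the current last $x$) to $y_1$ is always valid. This gives cycles of every length from $2(p+q+1)$ up to $2n$.

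\textbf{Step 2.} Delete the pair $(v_q,x_{q+1})$ and then successively $(v_{h},x_{h+1})$, shortcutting $x_h$ directly to $y_1$. The edge $x_hy_1$ is a biclique edge, so this is valid. After removing all of them, and keeping the path $y_1,u_1,\ldots,u_p,y_{p+1},x_1,y_1$, we obtain lengths down to $2(p+1)$.

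\textbf{Step 3.} Symmetrically remove $(u_p,y_{p+1}),\ldots$, ending with $y_{g+1}$ adjacent to $x_1$; that edge is again a biclique edge. The last cycle obtained is $(y_1,x_1,y_2,x_2)$ or $(y_1,u_1,y_2,x_1)$, which is a $C_4$.

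Each shortened sequence must still be a cycle. This needs only the edges $u_gy_g$ and $u_gy_{g+1}$, which follow from $N(u_g)\supseteq\{y_1,\ldots,y_{g+1}\}$ because $d(u_g)>g$ and the neighbourhoods are nested. By the same argument $v_h$ is adjacent to $x_h$ and $x_{h+1}$, and all remaining edges lie in the biclique.

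The main obstacle is justifying the labelling used above. The proof of Theorem \ref{thm2} assumes that $N(u_g)$ is an initial segment $\{y_1,\ldots,y_l\}$ of a single common ordering of $B_1$, and similarly for $B_2$ into $A_1$. I would state explicitly that Theorem \ref{thm1} yields such an ordering: order $B_1$ so that the nested chain $N(u_1)\subseteq\cdots\subseteq N(u_p)$ consists of prefixes. I also need the degenerate cases $p=0$ or $q=0$, where the relevant steps are simply empty.

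Finally, the length count must hit every even value between $4$ and $2n$ with no gaps. Each removal in Steps 1–3 deletes exactly two vertices, so the lengths decrease by $2$ at each step. It remains to check the boundary cases $|A_1|=1$ or $|B_1|=1$, where the biclique is a star. There Step 1 may be vacuous, but Steps 2 and 3 still apply.
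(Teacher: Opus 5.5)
Your proposal is correct and is essentially the paper's argument. The paper also takes the constructive Hamiltonian cycle of Theorem~\ref{thm2}, reads it as a Hamiltonian path from $y_1$, and closes each even-length prefix back to $y_1$, which is exactly what your Steps 1 and 2 do; the only difference is that for lengths $4,\ldots,2(p+1)$ you close through the biclique edge $y_{g+1}x_1$, whereas the paper closes $(y_1,u_1,\ldots,y_k,u_k)$ through the edge $u_ky_1$, using that $y_1$ lies in every $N(u_g)$.
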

\begin{proof}
	{\em Necessity:} As $G$ is Hamiltonian,  let $C=(y_{1},u_{1},y_{2},u_{2},\ldots,y_{p},u_{p},y_{p+1},x_{1},v_{1},\\x_{2},v_{2},\ldots,x_{q},v_{q},x_{q+1},y_{p+2},x_{q+2},$ $\ldots,y_{j},x_{i},y_{1})$ be the Hamiltonian cycle obtained from Theorem \ref{thm2}. Clearly, $C$ without the edge $\{x_i,y_1\}$ is a Hamiltonian path $P$ in $G$, $P= (y_{1},u_{1},y_{2},u_{2},\ldots,y_{p},u_{p},y_{p+1},x_{1},v_{1},\\x_{2},v_{2},\ldots,x_{q},v_{q},x_{q+1},y_{p+2},x_{q+2},$ $\ldots,y_{j},x_{i})$. To construct a cycle $C_{2k}$, $2\leq k\leq p$, and $p+2\leq k\leq n$, we consider the subpath $P_{2k}$ on first $2k$ vertices from  $P$. Recall that $y_1$ is universal to $A=A_1\cup A_2$. The construction of $C_{2k}$, $2\leq k\leq p$ and $p+2\leq k\leq n$  is as follows.\\
	$k=2,~C_4=(y_1,u_1,y_2,u_2,y_1)$\\		
	$k=3,~C_6=(y_1,u_1,y_2,u_2,y_3,u_3,y_1)$\\
	$k=3,~C_8=(y_1,u_1,y_2,u_2,y_3,u_3,y_4,u_4,y_1)$\\
	\vdots\\
	$k=p,~C_{2p}=(y_1,u_1,y_2,u_2,\ldots, y_{p-1},u_{p-1},y_p,u_p,y_1)$\\
	$k=p+2,~C_{2(p+1)}=(y_1,u_1,y_2,u_2,\ldots, y_{p-1},u_{p-1},y_p,u_p,y_{p+1},x_1,v_1,x_2,y_1)$\\
	$k=p+3,~C_{2(p+1)}=(y_1,u_1,y_2,u_2,\ldots, y_{p-1},u_{p-1},y_p,u_p,y_{p+1},x_1,v_1,x_2,v_2,x_3,y_1)$\\
	\vdots\\
	$k=n,~C_{2n}=(y_{1},u_{1},y_{2},u_{2},\ldots,y_{p},u_{p},y_{p+1},x_{1},v_{1},x_{2},v_{2},\ldots,x_{q},v_{q},x_{q+1},y_{p+2},x_{q+2},\\\ldots,y_{j-1},x_{i-1},y_{j},x_{i},y_{1})$\\
	Finally, we construct $k=p+1,C_{2(p+1}=(y_1,u_1,y_2,u_2,\ldots, y_{p-1},u_{p-1},y_p,u_p,y_{p+1},\\x_1,y_1)$
	
	Thus, we obtain all even cycles,  $C_{2k}$, $2\leq k\leq n$ in $G$. Hence $G$ is a bipancyclic graph. For the sufficient part, since $G$ is bipancyclic, $G$ contains all possible cycles of length $2k$, $2\leq k \leq n$, including  $C_{2n}$, and therefore, $G$ is Hamiltonian.  Note that, as the proof is constructive, we obtain all cycles in linear time.   
	\end{proof}
	\noindent
	{\bf 2. Homogeneously traceable} 
	\begin{theorem}
		\label{thmhomogeneously}
		For a $P_5$-free chordal bipartite graph $G(A,B)$, $G$ is Hamiltonian if and only if $G$ is homogeneously traceable.
	\end{theorem}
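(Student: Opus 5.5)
The plan is to prove the two directions separately. The forward direction (Hamiltonian $\Rightarrow$ homogeneously traceable) is generic. If $C$ is a Hamiltonian cycle of $G$ and $v$ is any vertex, delete one of the two edges of $C$ incident on $v$. This leaves a Hamiltonian path that starts at $v$. Hence every Hamiltonian graph is homogeneously traceable, and nothing about $P_5$-free chordal bipartite graphs is needed here.

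For the converse, assume $G(A,B)$ is homogeneously traceable, and verify conditions (i) and (ii) of Theorem \ref{thm2}.

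\emph{Balance.} A Hamiltonian path in a bipartite graph alternates between the two sides, so $||A|-|B||\leq 1$. Suppose, say, $|A|=|B|+1$. Then every Hamiltonian path must start and end in $A$. This contradicts the existence of a Hamiltonian path beginning at a vertex of $B$, which is nonempty since $G$ is connected. Therefore $|A|=|B|$, which is condition (i).

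\emph{Degree condition.} Suppose condition (ii) fails. Let $u_g$ be the first vertex of $A_2$ in the nested neighbourhood ordering with $d(u_g)\leq g$. By Theorem \ref{thm3}(i), a Hamiltonian path exists at all, so $d(u_g)\geq g$, and hence $d(u_g)=g$. Let $S=N(u_g)$. By Theorem \ref{thm1}, $N(u_1)\subseteq\cdots\subseteq N(u_g)=S$, so $G-S$ has at least the $g$ isolated vertices $u_1,\ldots,u_g$ plus the component containing the rest of $A_1$, giving $c(G-S)\geq g+1=|S|+1$.

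Now take a Hamiltonian path $P$ that starts at $u_1$. Removing $S$ from $P$ leaves at most $|S|+1$ segments. Since $u_1$ is an endpoint of $P$, it is a singleton segment at the very end of $P$, and its single neighbour on $P$ lies in $S$. A counting argument should then finish: the path has $|S|=g$ internal "cuts", and at least $g+1$ components of $G-S$ must each be covered by separate segments, with $u_1$ already consuming the initial end. This forces the other end of $P$ to be one of the isolated $u_k$ as well, and a second homogeneous-traceability choice (a path starting from a vertex of $A_1$, or from some $y\in B$) should give the contradiction.

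I expect this end-counting to be the main obstacle. The cleanest route is probably to choose the start vertex so that neither end of $P$ is an isolated $u_k$. Then all $g+1\geq |S|+1$ components appear as interior segments, which needs at least $g+2$ separators, a contradiction. A natural choice is to start $P$ at a vertex of $B_1\setminus S$, which is nonempty by Lemma \ref{lem1}, or at a vertex of $A_1$; one must make sure the other endpoint cannot be an isolated $u_k$ either, possibly by using parity and $|A|=|B|$ to force the other end into $A$ appropriately.

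The symmetric argument handles $B_2$, and Theorem \ref{thm2} then yields a Hamiltonian cycle.
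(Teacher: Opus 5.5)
Your forward direction, the balance argument, and the reduction to $d(u_g)=g$ with $S=N(u_g)$ and $c(G-S)\geq g+1$ are fine and agree with the paper. The gap is in the converse: it stops exactly at the decisive step, and the two ways you suggest for finishing it do not work.

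\emph{Starting at $u_1$ gives nothing.} Since $|A|=|B|$, every Hamiltonian path has one end in $A$ and one in $B$. So the other end is \emph{not} forced to be an isolated $u_k$; it lies in $B\setminus S$. Indeed, the path constructed in the proof of Theorem~\ref{thm3}(i) starts at $u_1$.

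\emph{Starting at a vertex of $B_1\setminus S$ fails for the same parity reason.} The other end then lies in $A$, where it may well be an isolated $u_k$. Your hint to ``force the other end into $A$'' therefore points the wrong way. For example, take the path $y_2x_1y_1u_1$ with base biclique $(\{x_1\},\{y_1,y_2\})$. Here $d(u_1)=1$ and $S=\{y_1\}$, and $y_2x_1y_1u_1$ is a Hamiltonian path from $y_2\in B_1\setminus S$, yet the graph is not Hamiltonian. Your count ``all $g+1$ components appear as interior segments'' is also the wrong bookkeeping, since the ends of $P$ sit in the large component.

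\emph{What is missing.} You need a start vertex that provably puts an end of $P$ in $S$, or both ends in the large component $R\supseteq A_1\cup(B\setminus S)$.
\begin{itemize}
\item The paper starts at $y_1$, which can be taken in $N(u_1)\subseteq N(u_g)=S$. An endpoint in $S$ means $P-S$ has at most $|S|=g$ segments. But each of the $\geq g+1$ components of $G-S$ needs its own segment, a contradiction.
\item Your $A_1$ option also works, but for the opposite parity reason from the one you give. A path from $x\in A_1$ ends in $B$. If that end is in $S$, you are in the previous case. Otherwise both ends lie in $R$, since $B\setminus S=(B_1\setminus S)\cup B_2\subseteq R$ by Lemmas~\ref{lem1} and~\ref{lem2}. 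Then $R$ supplies two distinct end segments and the singletons $\{u_1\},\ldots,\{u_g\}$ supply $g$ more. That is $g+2>|S|+1$ segments, which is impossible.
\end{itemize}

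With this step supplied, the $B_2$ side follows symmetrically, starting at $x_1\in N(v_1)\subseteq N(v_h)$ or at a vertex of $B_1$. Theorem~\ref{thm2} then finishes the proof as you intended.
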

	\begin{proof}
		As the Hamiltonian graphs are homogeneously traceable, the necessity follows. To prove the converse, we observe that, as $G$ is homogeneously traceable, it must be the case that $|A|=|B|$. On the contrary, assume that $|A|\not=|B|$ (say, $|A|=|B|+1$). It is easy to see that a Hamiltonian path does not exist starting at any vertex in $B$, which is a contradiction to the fact that $G$ is homogeneously traceable. Therefore, $|A|=|B|$.  By the definition of homogeneously traceable, we know that $G$ has a Hamiltonian path beginning at each vertex of $G$. We now consider two Hamiltonian paths: the one that starts at $y_1\in B_1$ and the other that starts at $x_1 \in A_1$. Since $G$ has a Hamiltonian path that starts at $y_1$, the degree of any vertex $u_g$ in $A_2$ must be $d(u_g) > g$. Similarly, the Hamiltonian path that starts at $x_1$ shows that the degree of any vertex $v_h$ in $B_2$ is $d(v_h) > h$. Due to Theorem \ref{thm2}, $G$ is a yes instance of the Hamiltonian cycle problem. Therefore, $G$ is Hamiltonian.     
	\end{proof}
	\noindent
	{\bf 3. Exactly-2-Simple Path Cover}
	\\A simple path cover is a simple path that covers all the vertices of $G$, and the Hamiltonian path is one such example. A connected graph $G$ is said to have Exactly-2-Simple Path Cover \cite{nguyen2018various} if $V(G)$ can be covered by two simple paths but cannot be covered by one simple path. As the Hamiltonian path problem is a special case of this problem, the Exactly-2-Simple Path Cover problem is also NP-complete on graphs, whereas the Hamiltonian path problem is known to be NP-complete.   However, one can take up a complexity study of this problem on graphs where the Hamiltonian path problem is polynomial-time solvable. In this paper, we shall study in $P_5$-free chordal bipartite graphs. The notation $\exists!z R(z)$  refers to there exists unique $z$ that satisfies the predicate $R$.\\
	\begin{theorem}
		For a $P_5$-free chordal bipartite graph $G=(A=A_1\cup A_2, B=B_1\cup B_2)$ with $|A_2|\geq 1$ and $|B_2|\geq 2$, $G$ has Exactly-2-Simple Path Cover if and only if one of the following is true \\
		(i)  $|A|=|B|$ and $A_2$ has an ordering, $\exists!z_r, ~d(z_r) < r$ and $\forall u_g\ne z_r, d(u_g) \geq g$, $1 \leq g \leq p$ and $B_{2}$ has an ordering, $ \forall v_h, d(v_h) \geq h$, $1 \leq h \leq q$. \\
		(ii) $|A|=|B|$ and $A_2$ has an ordering, $\forall u_g\, d(u_g) \geq g$, $1 \leq g \leq p$ and $B_{2}$ has an ordering, $\exists! z_r, ~d(z_r) < r$ and $ \forall v_h\ne z_r, d(v_h) \geq h$, $1 \leq h \leq q$. \\
		(iii) $|A|=|B|+1$ and $A_2$ has an ordering, $\exists! z_r, ~d(z_r) < r$ and $\forall u_g\ne z_r, d(u_g) \geq g$, $1 \leq g \leq p$ and $B_{2}$ has an ordering, $ \forall v_h, d(v_h) >h$, $1 \leq h \leq q$. \\
		(iv) $|A|=|B|+1$ and $A_2$ has an ordering, $\forall u_g,, d(u_g) \geq g$, $1 \leq g \leq p$ and $B_{2}$ has an ordering, $\exists! z_r, ~d(z_r) \leq r$ and $ \forall v_h\ne z_r, d(v_h) > h$, $1 \leq h \leq q$. 	
	\end{theorem}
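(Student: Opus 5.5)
We prove the two directions separately, reusing the NNO machinery of Theorem \ref{thm1} and the counting arguments from Theorem \ref{thm3}. Throughout, $A_2=(u_1,\ldots,u_p)$ and $B_2=(v_1,\ldots,v_q)$ are arranged in non-decreasing degree order. By Theorem \ref{thm1} they are nested, so $N(u_g)=\{y_1,\ldots,y_{d(u_g)}\}$ and $N(v_h)=\{x_1,\ldots,x_{d(v_h)}\}$ with the indexing of $A_1,B_1$ used in Theorem \ref{thm2}.

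\emph{Necessity.} Assume $V(G)$ is covered by two vertex-disjoint paths but $G$ has no Hamiltonian path. A 2-path cover is a Hamiltonian path of the graph obtained by adding one new vertex $w$ adjacent to all of $V(G)$. Hence Chv\'atal's condition gives $c(G-S)\le |S|+2$ for every $S$.

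First, counting alternating vertices on two paths gives $\big||A|-|B|\big|\le 2$. We need to exclude $|A|=|B|+2$ (and symmetrically $|B|=|A|+2$). In that case both paths must start and end in $A$, so the cover behaves like a Hamiltonian cycle of $G$ plus two extra $A$-vertices. I expect this to be ruled out either because it contradicts the stated hypotheses $|A_2|\ge1$, $|B_2|\ge2$ together with the degree structure, or because the statement implicitly restricts to $|A|-|B|\in\{0,1\}$. I will check this first and, if needed, note it as an implicit assumption.

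Next, apply the component count from the proof of Theorem \ref{thm3} with $S=N(u_g)$. If $u_g$ is a violator, i.e.\ $d(u_g)<g$, then $c(G-S)\ge g+1\ge d(u_g)+2$. The weakened bound $c(G-S)\le|S|+2$ then forces $d(u_g)=g-1$ exactly.

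To get uniqueness of the violator, suppose two indices $r<r'$ violate. Take $S=N(u_{r'})$. Then $c(G-S)\ge r'+1$, while $|S|\le r'-1$, which is still compatible with the bound. So instead count more carefully using the two violators: $u_1,\ldots,u_{r'}$ all become isolated, and I expect the count to give $r'+1>|S|+2$ when there are two violations. Making this precise is the main obstacle, because the extra component from $A_1\cup B_2\cup(B_1\setminus S)$ must be shown to be nonempty and disjoint from those isolated vertices. That is where $|B_2|\ge 2$ and $|A_2|\ge1$ will be used.

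The same reasoning applies on the $B_2$ side. For $|A|=|B|+1$, I will reuse the set $S=B_1\cup\{v_{r+1},\ldots,v_q\}$ from Theorem \ref{thm3}(ii). This yields the strict/non-strict thresholds ($d(z_r)\le r$ versus $d(v_h)>h$) in cases (iii) and (iv). Finally, since $G$ has no Hamiltonian path, Theorem \ref{thm3} says at least one violation must exist. So exactly one of (i)--(iv) holds.

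\emph{Sufficiency.} For each case we build two paths explicitly by modifying the sequences in the proof of Theorem \ref{thm3}. In case (i), the unique violator $z_r=u_r$ has $d(u_r)=r-1$, so in the sequence $(u_1,y_1,\ldots,u_p,y_p,\ldots)$ the vertex $u_r$ cannot be threaded. We remove $u_r$ and let it form its own path, or append it at the end of a split. The remaining vertices $u_{r+1},\ldots$ are reindexed down by one, after which the inequalities $d\ge g$ hold for the shifted positions. The rest of the construction of Theorem \ref{thm3}(i) then gives the second path. Cases (ii)--(iv) are symmetric. Non-existence of a single Hamiltonian path follows directly from Theorem \ref{thm3}, since the violator breaks its conditions. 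All steps are constructive and run in linear time.
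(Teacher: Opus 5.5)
\textbf{There is a genuine gap: the necessity direction, as stated, is false, and your sufficiency sketch needs repair.}

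Your necessity argument breaks exactly where you write ``I expect''. It cannot be repaired there, because the ``only if'' direction of the statement is false as written. The bound $c(G-S)\le|S|+2$ is correct, but it only yields weaker thresholds than the ones in (i)--(iv).
\begin{itemize}
\item You already noticed that two violators are compatible with it for $S=N(u_{r'})$. This cannot be sharpened: $d(u_{r'})=r'-1$ makes $|S|+2=r'+1$, which is exactly the number of components you can guarantee.
\item With $S=B_1\cup\{v_{r+1},\dots,v_q\}$ and $|A|=|B|+1$, the count from Theorem \ref{thm3}(ii) now gives only $d(v_r)\ge r$. It does not give the strict inequality demanded in (iii).
\item Nothing in the hypotheses excludes $|A|=|B|+2$.
\end{itemize}

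All three occur. \emph{First example.} Let $A_1=\{x_1,\dots,x_4\}$ and $B_1=\{y_1,y_2,y_3\}$ be completely joined, with $u_1,u_2$ adjacent only to $y_1$, $N(v_1)=\{x_1\}$ and $N(v_2)=\{x_1,x_2\}$. This is a connected chain graph, so it is $P_5$-free chordal bipartite (even bisplit). Its maximal bicliques are $1\times5$, $2\times4$, $4\times3$ and $6\times1$, so $A_2=\{u_1,u_2\}$, $B_2=\{v_1,v_2\}$ and $|A|=|B|+1$. It has no Hamiltonian path, since $u_1,u_2$ are two pendants at $y_1$. Yet the paths $(u_1,y_1,u_2)$ and $(v_1,x_1,v_2,x_2,y_2,x_3,y_3,x_4)$ cover it. Condition (iii) fails because $d(v_1)=1$, and (iv) fails because $A_2$ has degrees $1,1$.

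\emph{Second example.} Keep $A_1$ and take $B_1=\{y_1,\dots,y_4\}$, $N(u_1)=N(u_2)=\{y_1\}$, $N(u_3)=\{y_1,y_2\}$, $N(v_1)=\{x_1,x_2\}$, $N(v_2)=\{x_1,x_2,x_3\}$. The base biclique is still $(A_1,B_1)$, and $B_2$ meets the strict condition of (iii). However, the NNO of $A_2$ has two violators, $u_2$ and $u_3$. Still, $(u_1)$ and $(u_2,y_1,u_3,y_2,x_3,v_2,x_2,v_1,x_1,y_3,x_4,y_4)$ form a cover: a violator with $d(u_g)=g-1$ can lie in the interior of a path.

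\emph{Third example.} In the second example, replace $A_2$ by four vertices $u_1,\dots,u_4$ with neighbourhood $\{y_1,y_2,y_3\}$. Then $|A|=|B|+2$, and $(x_1,v_1,x_2,v_2,x_3,y_4,x_4)$, $(u_1,y_1,u_2,y_2,u_3,y_3,u_4)$ is a cover.

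The paper's own Case 2 argues that each violator must end a separate path, forcing a third path. That argument fails on the second example, so following the paper does not close the gap either. The characterization itself has to be reformulated.

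Your sufficiency sketch can be repaired, but not as written.
\begin{itemize}
\item In case (i), making $u_r$ its own path leaves one more vertex in $B$ than in $A$. The other path would then need both ends in $B$, which is impossible when, for example, $d(u_1)=1$.
\item In any split the remainder is unbalanced, so Theorem \ref{thm3}(i) does not supply the second path.
\item What works is to cut at the violator: take $(u_1,y_1,\dots,u_{r-1},y_{r-1},u_r)$ and $(y_r,u_{r+1},y_{r+1},\dots,u_p,y_p,x_{q+1},y_{p+1},\dots,x_i,y_j,x_q,v_q,\dots,x_1,v_1)$. This uses $d(u_r)=r-1$, $d(u_g)\ge g$ for $g\ne r$, $d(v_h)\ge h$ and $i-q=j-p$.
\item Cases (iii) and (iv) are not mirror images of (i). They need the analogous cut of the path built in Theorem \ref{thm3}(ii).
\end{itemize}
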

	\begin{proof}
		{\em Necessity:} (i) Without loss of generality, on the contrary, assume that there does not exist a vertex $z_r\in A_2$ ($z_h\in B_2$) such that $d(z_r)<r~ (d(z_h)<h)$ or there exist at least two vertices $z_r, z_s\in A_2$ such that $d(z_r)<r$ and $d(z_s)<s$. 
		
		\textbf{Case 1:} There does not exist $z_r$ ($z_h\in B_2$) such that $d(z_r)<r~ (d(z_h)<h)$. By Theorem \ref{thm3}, we observe that $G$ is a yes instance of the Hamiltonian path problem, which is a one simple path cover, a contradiction. 
		
		\textbf{Case 2:} There exist two vertices that violates \emph{NNO} property. Assume that the vertices $z_r$ and $z_s$  are the first two vertices such that $d(z_r)<r$ and $d(z_s)<s$ with respect to \emph{NNO}. Without loss of generality, assume that $d(z_r) \leq d(z_s)$. Let $\{u_1,\ldots, u_r=z_r, \ldots, u_s=z_s\,\ldots,u_p\}$ be the ordering of vertices in $A_2$.  By \emph{NNO} property, we know that $d(z_{r-1})=d(z_r)$ and $N_G(z_{r-1})=N_G(z_{r})$. Similarly, $d(z_{s-1})=d(z_s)$ and $N_G(z_{s-1})=N_G(z_{s})$. By \emph{NNO} property, we observe that $z_r$ ($z_s$) is the end vertex of any maximum path. Since $z_r$ and $z_s$ do not satisfy Theorem \ref{thm3}, these vertices cannot be in one simple path that covers $V(G)$. This shows that there exist at least two simple paths $P^1, ~P^2$ that cover $V(G)$. Let $P^1=(\ldots,u_{r}=z_r)$, $1<r<s$ and $P^2=(\ldots,u_{s}=z_s)$ $1<s\leq p$. Due to \emph{NNO}, $V(P^1)\cap A_2 \subseteq \{u_1,\ldots,u_r=z_r\}$ and $V(P^2)\cap A_2\subseteq  \{u_{r+1},\ldots,u_s=z_s\}$. Now, we analyze the following cases. 
		
		\textbf{Case 2.1:} $u_p\ne z_s$. Observe that the vertices $\{u_{s+1}=z{s+1},\ldots,u_p\}\cup \{y_s,\ldots,y_j\}\cup A_1\cup B_2$ cannot be a part of $P^1$ or $P^2$ and to cover these vertices we need another simple path $P^3= (u_{s+1}, y_{s}, \dots,  u_{p},y_{p},x_{q+1},y_{p+1},x_{q+2},y_{p+2},\ldots,x_{i}, y_{j},x_{q}, v_{q}, \ldots,x_{2},v_{2},x_{1}, v_{1})$. 
		
		\textbf{Case 2.2:} $u_p= z_s$. The following vertices $\{y_s,\ldots,y_j\}\cup A_1\cup B_2$ are not covered by $P^1$ and $P^2$ and these vertices are by another simple path $P^3= (y_{p},x_{q+1},y_{p+1},x_{q+2},y_{p+2},\ldots,x_{i}, y_{j},x_{q}, v_{q}, \ldots,x_{2},v_{2},x_{1}, v_{1})$. It is clear from both cases that $V(G)$ cannot be covered in two simple paths, a contradiction. A similar argument can be given for (ii), (iii), and (iv) of our claim.
		
		{\em Sufficiency:} There exists exactly one vertex $z_r$, such that $~d(z_r) < r$. By Theorem \ref{thm3}, $G$ is a no instance of the Hamiltonian path problem, and thus $G$ cannot be covered by one simple path. Since Theorem \ref{thm3} is constructive, we obtain the following two simple paths $P^1$ and $P^2$ that cover $V(G)$.  \\
		$P^1=(u_{1},y_{1},u_{2},y_{2},\ldots,u_{r-1},y_{r-1},u_{r}=z_r)$,  $1<r\leq p$\\
		$P^2=(y_{r}, u_{r+1}, y_{r+1},u_{r+2}\ldots,y_{p-1},u_{p},y_{p},x_{q+1},y_{p+1},x_{q+2},y_{p+2},\ldots,x_{i}, y_{j},x_{q}, v_{q}, \ldots,$ $x_{2},v_{2},x_{1},v_{1})$\\
		Similarly, (ii), (iii), and (iv) of our claim can be proved. 
	\end{proof}
	\noindent
	{\bf 4. Path-hypo-Hamiltonian} \\
	Recall that a hypo-Hamiltonian graph is a non-Hamiltonian graph $G$ such that $G - v$ is Hamiltonian for every vertex $v \in V (G)$. Clearly, bipartite graphs are non-hypo-Hamiltonian graphs. So, it is natural to ask for the variants of hypo-Hamiltonian. One such variant is path-hypo-Hamiltonian. A graph $G$ is called path-hypo-Hamiltonian if $G$ has no Hamiltonian path and $\forall v \in V(G)$, $G - v$ has a Hamiltonian path. In this section, we show that $P_5$-free chordal bipartite bisplit graphs are non-path-hypohamiltonian.\\
	\begin{theorem}
		Let $G$ be a $P_5$-free chordal bipartite bisplit graph without a Hamiltonian path. Then, $G$ is non-path-hypohamiltonian.
	\end{theorem}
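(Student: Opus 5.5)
We want to show: if $G$ has no Hamiltonian path, then some vertex $v$ exists such that $G-v$ also has no Hamiltonian path. The plan is to use the characterization of Theorem \ref{thm3}, together with the fact that deleting a vertex keeps us inside the class. For a suitable choice of $v$, the graph $G-v$ is again $P_5$-free and bipartite, and it stays connected as long as $v$ is not a cut vertex. So the characterization still applies, and we exhibit $v$ for which either the cardinality condition or the degree-ordering condition of Theorem \ref{thm3} fails in $G-v$.

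\textbf{Step 1 (cardinality).} Assume without loss of generality that $|A|\geq |B|$. If $|A|-|B|\geq 2$, delete a vertex $v\in B$ that is not a cut vertex. For instance, take $v\in B_2$ of minimum degree, or, if $B_2=\emptyset$, some $v\in B_1$, using that a biclique vertex with a twin in $B_1$ does not disconnect $G$. Then $|A|-|B\setminus\{v\}|\geq 3$, so $G-v$ has no Hamiltonian path because a path in a bipartite graph alternates sides. If $G-v$ happens to be disconnected, it trivially has no Hamiltonian path, so the choice of $v$ is really only a convenience. The remaining cases are $|A|=|B|$ and $|A|=|B|+1$, in which Theorem \ref{thm3} fails through its degree condition.

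\textbf{Step 2 (degree violation is preserved).} Suppose the condition fails at the first vertex $u_g\in A_2$ in the NNO with $d(u_g)<g$; the case of some $v_h\in B_2$ is symmetric. From the necessity proof of Theorem \ref{thm3} we have $c(G-N(u_g))=g+1$ with $|N(u_g)|=g-1$, and in fact the Chv\'atal bound fails by at least one.

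Now delete a vertex $v$ lying outside $N(u_g)\cup\{u_1,\ldots,u_g\}$. A good choice is a vertex of $A_1$ or $B_2$ in the ``remaining'' component $A_1\cup B_2\cup(B_1\setminus N(u_g))$. In $G-v$ the set $N(u_g)$ still separates $u_1,\ldots,u_g$ into $g$ singleton components, and the rest forms at least one further component unless it becomes empty. This gives $c((G-v)-N(u_g))\geq g+1>|N(u_g)|+1$, so $G-v$ has no Hamiltonian path by Chv\'atal's necessary condition. Because this argument uses only the necessary condition, it does not need $G-v$ to be connected or to lie in the class. That lets us bypass recomputing the base biclique of $G-v$, which is the main obstacle if one tries to apply Theorem \ref{thm3} to $G-v$ directly.

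\textbf{Step 3 (degenerate cases).} The hardest step is making sure the remaining component has at least two vertices, so that deleting $v$ leaves it nonempty. Equivalently, we need a deletable vertex outside $N(u_g)\cup\{u_1,\ldots,u_g\}$ at all. Since $(A_1,B_1)$ is a biclique and $|N(u_g)|<|B_1|$ by Lemma \ref{lem1}, the remaining part contains $A_1$ together with at least one vertex of $B_1$, so it has at least two vertices. If $|A_1|=1$, I would delete the vertex of $B_1\setminus N(u_g)$ when there are at least two such vertices, and otherwise a vertex of $B_2$ or of $A_2\setminus\{u_1,\ldots,u_g\}$.

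The leftover corner case is when all of these sets are tiny, for example $G$ a star-like chain graph. There I would instead pick $v=u_1$ and recheck the count directly: $N(u_g)$ now isolates $u_2,\ldots,u_g$, plus the remaining component. The case (ii) violation in $B_2$ with $|A|=|B|+1$ is handled the same way, using the separator $S=B_1\cup\{v_{r+1},\ldots,v_q\}$ from the proof of Theorem \ref{thm3} and deleting a vertex of $A_1\setminus\{x_1,\ldots,x_r\}$, which keeps the deficit.
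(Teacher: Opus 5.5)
Steps 1 and 2 are sound. Applying Chv{\'a}tal's necessary condition to $G-v$ is a good way to avoid recomputing the base biclique of $G-v$.

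The gap is in your last sentence. When $|A|=|B|+1$ and the violation is at $v_r\in B_2$, you delete a vertex of $A_1\setminus\{x_1,\ldots,x_r\}$. Take $S=B_1\cup\{v_{r+1},\ldots,v_q\}$. Each of these $i-r$ vertices is a \emph{singleton} component of $G-S$, because its neighbours lie in $B_1$ or in $\{v_{r+1},\ldots,v_q\}$. Deleting one of them lowers the count from $p+1+(i-r)=|S|+2$ to $|S|+1$. That is exactly what Chv{\'a}tal's path condition permits, so the deficit is not kept.

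In fact $G-v$ can be traceable. Let $A_1=\{x_1,x_2,x_3\}$ and $B_1=\{y_1,y_2\}$, let $A_2=\{u_1\}$ with $N(u_1)=\{y_1\}$, and let $B_2=\{v_1\}$ with $N(v_1)=\{x_1\}$. Then $|A|=|B|+1$, $d(u_1)\geq 1$, and $d(v_1)=1$ violates condition (ii). $G$ has no Hamiltonian path: both ends of such a path must lie in $A$, but the leaf $v_1\in B$ would have to be an end. Yet your choice $v=x_2$ gives the Hamiltonian path $(v_1,x_1,y_2,x_3,y_1,u_1)$ in $G-x_2$.

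There are two easy repairs. First, you can extend Step 1 to $|A|-|B|\geq 1$. Deleting any vertex of $B$ then makes the imbalance at least two, which is exactly the paper's case (ii), so only $|A|=|B|$ needs your Step 2. Second, within your own framework, you can delete a vertex of the separator instead of one outside it. For $v\in S$ we have $(G-v)-(S\setminus\{v\})=G-S$, which has at least $|S|+2=|S\setminus\{v\}|+3$ components. This works for any nonempty violating set and makes Step 3 unnecessary.

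Your fallback $v=u_1$ in Step 3 is also wrong as computed. It leaves only $g-1$ isolated vertices, so you get $c\geq g=|N(u_g)|+1$, which is no violation. That corner case never arises, though, since you already showed the remaining part has at least two vertices.
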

	\begin{proof}
		We now exhibit a vertex $u$ such that $G-u$ has no Hamiltonian path. Since $G$ is a no instance of Hamiltonian path problem, by Theorem \ref{thm3}, either (i) or (ii) is true. (i) $|A|=|B|$ and there exists a vertex $u_g\in A_2$ such that $d(u_g)<g$ or $v_r\in B_2$ such that $d(v_r)<r$ (ii) $|A|=|B|+1$ and there exists a vertex $u_g\in A_2$ such that $d(u_g)<g$ or $v_r\in B_2$ such that $d(v_r)\leq r$.
		
		(i) Suppose that $G-u_g$ is a yes instance of the Hamiltonian path problem. Now we choose $v \in B_2$. In $G-v$, the degree $d_G(u_g)=d_{G-v}(u_g)$. By Theorem \ref{thm3},  $G-v$ is a no instance of the Hamiltonian path problem. Therefore, $G$ is non-path-hypo-Hamiltonian.
		
		(ii) We choose $v \in B_1 (B_2)$. In $G-v$, It is easy to see that $|A|>|B|+1$. By Theorem \ref{thm3}, $G-v$ is a no instance of Hamiltonian path problem. Therefore, $G$ is non-path-hypo-Hamiltonian.
	\end{proof}
	\noindent{\bf Remark:} The problems discussed in this section and their proofs are constructive in nature. Therefore, we obtain polynomial-time algorithms for Hamiltonicity variants.
	\subsection{Generalizations of Hamiltonian path(cycle) in $P_5$-free Chordal Bipartite Graphs}
	\label{general}
	It is known that the longest path (cycle) problem is NP-hard on every class of graphs on which HPATH (HCYCLE) is NP-complete. Thus, if someone is interested in investigating the tractability of the longest path (cycle) problem, a generalization of HPATH (HCYCLE), it makes sense to focus on the classes of graphs for which HPATH (HCYCLE) is polynomial-time solvable. It is important to highlight that there are very few known polynomial time algorithms for the longest path (cycle) problem. Interestingly, polynomial algorithms have been proposed that solve the longest path problem on bipartite permutation graphs \cite{uehara2007computing},  interval graphs \cite{ioannidou2011longest}, and cocomparability graphs \cite{mertzios2012simple}. Having obtained a polynomial-time algorithm for HPATH (HYCLE) in $P_5$-free chordal bipartite graphs, it is natural to explore the complexity status of the longest path (cycle) problem in $P_5$-free chordal bipartite graphs. In this section, we use HPATH (HCYCLE) algorithm as a framework to solve other combinatorial problems. For all other problems, we modify the input graph to obtain  $G^*$. The challenge lies in identifying $G^*$ for each combinatorial problem, such as longest cycle and longest path. By calling the appropriate algorithm (Hamiltonian cycle or Hamiltonian path Algorithm) on $G^*$, we obtain a partial result. A suitable modification to our partial result will give us the longest path (cycle), for $G$. We shall see some of the generalizations of Hamiltonicity.
	
	\subsubsection{Longest paths in $P_5$-free Chordal Bipartite Graphs}
	For a connected graph $G$, the longest path is a path of maximum length in $G$.  In this section, we shall investigate the complexity of the longest path problem in $P_{5}$-free chordal bipartite graphs.  \\\\
	{\bf Pruning:} We shall now prune $G$ by removing vertices that will not be part of any longest path in $G$.  Without loss of generality, we assume that $G$ has no Hamiltonian path, and hence $|A|=|B|+1+f$, $f \geq 1$ or there must exist a vertex in $A_2$ ($B_2$) that does not satisfy the conditions mentioned in Theorem \ref{thm3}.  As part of pruning, we prune such vertices from $G$.   Recall that $A_2=(u_{1},u_{2},\ldots,u_{p})$.  Let  $u_r$ be the first vertex in $A_2$ with $d(u_{r})<r$.  Remove $u_r$ and relabel the vertices of $A_2$ so that the sequence is reduced to $(u_{1},u_{2},\ldots,u_{p-1})$.  With respect to the modified sequence, if we find $u_i$ such that $d(u_{i}) < i$, then prune $u_i$ and update the sequence.  If there are no such $u_r$, then $c=0$.  After, say $c$ iterations, $A_2$ becomes $(u_{1},u_{2},\ldots,u_{p-c})$ such that for $\forall u_g, 1 \leq g \leq (p-c), d(u_g) \geq g$.  Similarly, after $d$ iterations, $B_2$ becomes $(v_{1},v_{2},\ldots,v_{q-d})$ such that for $\forall v_h, 1 \leq h \leq (q-d), d(v_h) \geq h$.   After pruning the vertices in $A$ is reduced to the set $A'$, $|A'|= |A| - c$. Similarly, $B$ is reduced to the set $B'$, $|B'|=|B|-d$.  From now on, when we refer to $A_2$ ($B_2$), it refers to the modified $A_2$ ($B_2$).  Let $G^*$ be the modified graph of $G$. $V(G^*) = V(A') \cup V(B')$. \\
	
	\textbf{Case 1: $|A'| = |B'|$}.	We observe that $G^*$ satisfies {\em NNO} and as per Theorem \ref{thm3}, $G^*$ has a Hamiltonian path, which is $ P= (u_1, y_1, u_2, y_2, \ldots, u_{p-c}, y_{p-c}, x_{(q-d)+1}, y_{(p-c)+1}, $ $x_{(q-d)+2}, y_{(p-c)+2}, \ldots, x_i, y_j, x_{q-d}, v_{q-d},$ $x_{(q-d)-1},$ $v_{(q-d)-1},$  $ \ldots, x_1, v_1)$  
	
	\textbf{Case 2: $|A'| = |B'|+1+f$, $f \geq 0 $}. If $f>0$, by Theorem \ref{thm3}, $G^*$ is not a yes instance of the Hamiltonian path problem.  We remove $f$ vertices from $A'_2$.  Let $G^*_1(A'\setminus\{u_{(p-c)},u_{(p-c)-1},u_{(p-c)-2},\ldots, u_{(p-c)-(f-1)}\},B')$ be the modified graph.  If $f=0$, then $G^*_1(A',B')$ be same as $G^*$. Further, if $f>|A'_2|$, then we remove $\{x_i,\ldots,x_{i+(p-c)-f+1}\}$ from $A'_1$. 
	
	\textbf{Case 2.1:} $\exists  v_r$ $\in B'_2 $ in $G^*_1$ such that $d_{G^*_1}(v_{r}) = r$. Since $d_{G^*_1}(v_{r}) = r$, $G^*_1$ is not a yes instance of the Hamiltonian path problem as per Condition (ii) of Theorem \ref{thm3}. We remove the vertex $u_{(p-c)-(f)}$ from $G^*_1$ to obtain $|A'|=|B'|$. Clearly, $G^*_1(A' \setminus \{u_{(p-c)},u_{(p-c)-1},u_{(p-c)-2},\ldots, u_{(p-c)-(f-1)},u_{(p-c)-(f)}\},B')$ has a Hamiltonian path as per Condition (i) of Theorem \ref{thm3}.\\
	$ P= (u_1, y_1, u_2, y_2, \ldots, u_{(p-c)-(f+1)}, y_{(p-c)-(f+1)}, x_{(q-d)+1},y_{((p-c)-(f+1))+1},x_{(q-d)+2},\\ y_{((p-c)-(f+1))+2}, \ldots, x_i,$ $ y_j, x_{q-d}, v_{q-d},$ $ x_{(q-d)-1},$ $ v_{(q-d)-1}, \ldots,$ $ x_1, v_1)$  \
	
	\textbf{Case 2.2:} $\nexists  v_r$ $\in B'_2 $ in $G^*_1$ such that $d_{G^*_1}(v_{r}) = r$. By Theorem \ref{thm3}, $G^*_1$ is a yes instance of the Hamiltonian path problem.\\
	$ P= (u_1, y_1, u_2, y_2, \ldots, u_{(p-c)-f}, y_{(p-c)-f}, x_{(q-d)+2},y_{((p-c)-f)+1},x_{(q-d)+3}, y_{((p-c)-f)+2}, \\\ldots, x_i, y_j, x_{(q-d)+1}, v_{q-d}, x_{(q-d)},$ $ v_{(q-d)-1}, \ldots,$ $ x_2, v_1, x_1)$  \\ 
	
	\noindent\textbf{Claim 1:} $P$ is a longest path in $G$. 
	\begin{proof}
		Let $G^*$ be the graph obtained by pruning the vertices from $A_2$ and $B_2$ in $G$, and thus $G^*$ becomes a yes instance of the Hamiltonian path problem.  Since $G^*$ satisfies {\em (NNO)}, for all the pruned vertices of $A_2$ and $B_2$, their neighborhood is a subset of $\{y_1,\ldots,y_{p-c}\}$ and $\{x_1,\ldots,x_{q-d}\}$ respectively. This shows that the pruned vertices of $A_2$ and $B_2$ cannot be augmented to $P$ to get a longer path in $G$.  As $|i-j|$ is minimum the choice of $|i-j|$ ensures the number of vertices pruned is minimum.  This proves that $P$ is a maximum path in $G$.  We now argue that $P$ is also the longest path.  Let $S$ denote the set of all pruned vertices.  Suppose there exists a path $P'$ which is longer than $P$.  It must be the case that $P'$ is completely different from $P$.  We know that the vertices of $P'$ violate degree constraints mentioned in Theorem \ref{thm3}.  It is important to highlight that the violated vertices cannot be used to obtain a path as it follows \emph{NNO}; otherwise, these vertices can be augmented to $P$ to get a longer path, which contradicts the maximality of $P$.  Hence, the path obtained $P$ is the longest path.
	\end{proof}
	\noindent\textbf{Trace of the longest path algorithm:}
	\begin{figure}
		\begin{center}
			\includegraphics[width=80mm,scale=0.5]{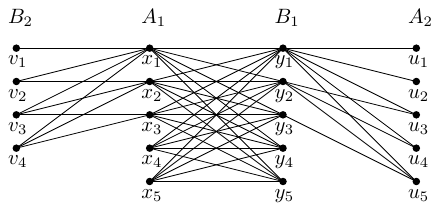}
			\caption{\small \sl   An illustration for the proof of Longest path (Case 1).\label{fig:Lp1}}
		\end{center}
	\end{figure}
	For Figure \ref{fig:Lp1}, we shall trace the longest path algorithm.  Consider the vertices of $A_2$,  $d(u_2)=1$, $u_2$ violates the degree constraint, so we prune $u_2$ and relabel the vertices $u_3$ as $u_2$, $u_4$ as $u_3$, and $u_5$ as $u_4$.  The updated sequence of $A_2$ is $(u_1,u_2,u_3,u_4)$. 
	With respect to the modified sequence, $d_{G^*}(u_3)=2$, prune $u_3$ and relabel the vertex $u_4$ as $u_3$. Now all the vertices of $A'_2$ satisfy the degree constraint, and the sequence is $(u_1,u_2,u_3)$.  
	By applying the procedure to the vertices of $B_2$, we get $B'_2$ as $(v_1,v_2,v_3)$.  
	The resultant graph $G^*$ falls under Case 1.  We obtain the longest path $ P= (u_1, y_1, u_2, y_2, u_3, y_3, x_4, y_4,x_5, y_5,x_3,v_3,x_2,v_2,x_1,v_1)$  \\
	\begin{figure}
		\begin{center}
			\includegraphics[width=80mm,scale=0.5]{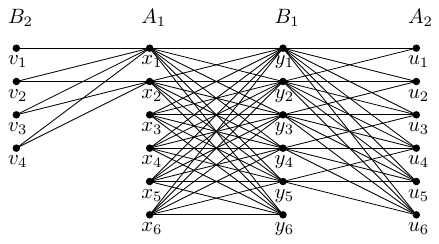}
			\caption{\small \sl  An illustration for the proof of Longest path (Case 2).\label{fig:Lp2}}
		\end{center}
	\end{figure}
	
	 Now consider Figure \ref{fig:Lp2}.  The vertex $u_6$ violates the degree constrains, prune $u_6$ and we obtain $A'_2$, whose sequence reduced to $(u_1,u_2,u_3,u_4,u_5)$.  Similarly, $v_3$ and $v_4$ violate the constraint and thus the sequence of $B'_2$ is reduced to  $(v_1,v_2)$.  $G^*$ follows Case 2 of the procedure. Let $G^*_1$ be the graph obtained by removing $f$,$f=\{u_4,u_5\}$ vertices from $G^*$.  We observe that $d_{G^*}(v_1)=1$, by Case 2.1, remove $u_3$ and the longest path is  $ P= (u_1, y_1, u_2, y_2, x_3, y_3, x_4, y_4,x_5, y_5,x_6, y_6,x_2,v_2,x_1,v_1)$  \\ 
	\begin{theorem}
		\label{thmlong}
		Let $G$ be a $P_5$-free chordal bipartite graph.  Finding the longest path in $G$ is linear-time solvable.
	\end{theorem}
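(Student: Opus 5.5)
The plan is to show that each step of the pruning-based procedure above runs in linear time, and then to rely on Claim 1 for correctness. First, we compute the base biclique $(A_1,B_1)$ and the partition $A=A_1\cup A_2$, $B=B_1\cup B_2$. $G$ is $P_5$-free chordal bipartite, so by Theorem \ref{thm1} neighbourhoods are nested. We can therefore sort all vertices of $A$ (and of $B$) by degree using bucket sort in $O(n)$ time. In this order we scan prefixes, keeping the current $i$ and the size of the common neighbourhood of the first $i$ vertices; by nestedness this common neighbourhood is just $N(x_i)$. This yields the maximal biclique $K_{i,d(x_i)}$ minimising $|i-j|$ in $O(n+m)$ time overall. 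By Lemma \ref{lem2} and Theorem \ref{thm1}, the vertices of $A_2$ and $B_2$ then come already in NNO order, sorted by non-decreasing degree.

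Second, we implement the pruning in a single left-to-right pass instead of restarting after each deletion. Deleting $u_r$ only shifts the indices of later vertices down by one, and degrees in $A_2$ do not change, since $A_2$ is independent and pruned $A_2$ vertices are not adjacent to $B_2$. So we keep a counter $k$ of surviving vertices: when we reach $u_g$, we keep it if $d(u_g)\ge k+1$ (incrementing $k$) and prune it otherwise. This is equivalent to the iterative rule because earlier survivors never lose their property. We do the same on $B_2$; this gives $c$, $d$, $A'$, $B'$ in $O(n)$ time. Computing $f=|A'|-|B'|-1$ and removing the last $f$ vertices of $A'_2$ (or, if needed, vertices of $A'_1$) is also $O(n)$. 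Testing Case 2.1 only requires checking $d(v_r)=r$ along the ordering of $B'_2$, which is $O(n)$.

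Third, the path itself is written out explicitly as in Theorem \ref{thm3}, Case 1, Case 2.1 or Case 2.2. Each vertex is listed once, and every consecutive pair is an edge guaranteed by the degree conditions and NNO, so producing $P$ costs $O(n)$. Correctness, namely that $P$ is a longest path, is exactly Claim 1, which we invoke. Summing the steps gives $O(n+m)$ in total.

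The main obstacle is the one-pass pruning argument. We must justify that degrees relevant to the condition $d(u_g)\ge g$ are measured in $G$ and are unaffected by earlier prunings, and that the relabelling never makes a previously kept vertex fail. Both facts follow from nestedness: a kept vertex's index can only decrease or stay the same, so its inequality is preserved. A secondary point is the linear-time computation of the base biclique, which needs the nested structure so that prefix intersections reduce to single neighbourhoods.
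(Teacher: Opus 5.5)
Your proposal is correct and follows the paper's own route: as in the paper's one-line proof, correctness is delegated to the pruning discussion and Claim 1, and the output path is the explicit one from Theorem \ref{thm3}. What you add is a sound fleshing-out of the linear running time that the paper only asserts: bucket-sorting by degree to obtain the base biclique and the nested orders, and a single left-to-right pass that is equivalent to the iterative pruning. (Citing Theorem \ref{thm1} for nestedness across all of $A$ is slightly premature, since it is stated only for $A_2$, but the $P_5$ argument of Lemma \ref{lem3} works for any two vertices on the same side.)
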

	\begin{proof}
		Follows from the above discussion on pruning and Claim 1. 
	\end{proof}
\noindent	{\bf Remarks:} As an extension of the longest path problem, we naturally obtain a minimum leaf spanning tree of $G$, which is a spanning tree of $G$ with the minimum number of leaves in linear time.  Since $G$ satisfies {\em {\em {\em NNO}}}, the vertices pruned while constructing $G^*$ cannot be included as internal vertices of $P$.  We shall now construct a minimum leaf spanning tree $T$ with $P$ as a subtree.  The pruned vertices are augmented to $P$ as leaves to obtain $T$.  This shows that $T$ has $|V(G) \setminus V(P)| + 2$ leaves.  Maximum leaf spanning tree of $G$ can be constructed by choosing $x_1y_1$ edge and augment all other vertices of $A_1, A_2$ to the vertex $y_1$, similarly $B_1, B_2$ to $x_1$. \\
\begin{corollary}
	The minimum connected dominating set in $P_5$-free chordal bipartite graph $G$ is linear-time solvable.
\end{corollary}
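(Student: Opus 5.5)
The plan is to use the classical identity between minimum connected dominating sets and maximum leaf spanning trees. For a connected graph $G$ on $n\geq 3$ vertices, the minimum size of a connected dominating set equals $n-\ell$, where $\ell$ is the maximum number of leaves over all spanning trees of $G$. One direction deletes the leaves of a spanning tree, leaving a connected dominating set. The other direction takes a BFS/spanning tree of $G[D]$ and hangs every vertex outside $D$ as a leaf on one of its neighbours in $D$. Since the Remarks above already give a linear-time construction of a maximum leaf spanning tree of a $P_5$-free chordal bipartite graph, it then suffices to output the internal vertices of that tree. The real content is therefore checking that this tree is optimal, and that amounts to a direct argument about connected dominating sets.

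Concretely, let $(A_1,B_1)$ be the base biclique. Take $x_1\in A_1$ and $y_1\in B_1$ maximizing the neighbourhoods in the NNO of Theorem \ref{thm1}. By Lemma \ref{lem2}, every vertex of $A_2$ has its neighbourhood inside $B_1$, and by the nested ordering we may assume $y_1$ lies in every such neighbourhood. Likewise $x_1$ lies in the neighbourhood of every vertex of $B_2$. Because $(A_1,B_1)$ is a biclique, $y_1$ dominates $A$ and $x_1$ dominates $B$. Since $x_1y_1$ is an edge, $D=\{x_1,y_1\}$ is a connected dominating set of size $2$, built in linear time after computing degrees. This is exactly the tree described in the Remarks, with the edge $x_1y_1$ as spine and all other vertices attached as leaves.

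For optimality, size $1$ is possible only if some vertex is universal. In a bipartite graph this happens only when one side is a single vertex, i.e.\ $G$ is a star, which is detected by checking whether $|A|=1$ or $|B|=1$. Any nonempty dominating set of a bipartite graph with both sides nonempty and no universal vertex needs at least two vertices, so $2$ is optimal otherwise. The only genuine step to get right is the claim that a single vertex of $B_1$ is adjacent to all of $A_2$. I would justify it by taking the vertex $u_p$ of largest degree in $A_2$, which has a nonempty neighbourhood by connectivity, and any $y\in N(u_p)$. Theorem \ref{thm1} gives $y\in N(u_g)$ for every $g$. Since $N(u_p)\subseteq B_1$ by Lemma \ref{lem2}, $y$ is also adjacent to all of $A_1$. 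The same argument on $B_2$ gives the required $x\in A_1$, and the edge $xy$ exists because $(A_1,B_1)$ is a biclique. Everything involved is degree computation and one scan, so the total running time is $O(n+m)$.
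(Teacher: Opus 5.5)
The step you flag as the crux is argued with the inclusion running the wrong way, so as written it fails. Theorem~\ref{thm1} gives $N(u_1)\subseteq N(u_2)\subseteq\cdots\subseteq N(u_p)$. Thus $u_p$ has the \emph{largest} neighbourhood in $A_2$, and a vertex $y\in N(u_p)$ need not lie in $N(u_g)$ for $g<p$.

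For a concrete failure, take the half graph with $N(a_k)=\{b_1,\ldots,b_{5-k}\}$ for $1\le k\le 4$. The maximal biclique $(\{a_1,a_2\},\{b_1,b_2,b_3\})$ attains the minimum $|i-j|=1$, so it can serve as base biclique. Then $A_2=\{a_3,a_4\}$ with $u_p=a_3$, and your rule allows $y=b_2\in N(a_3)$, which is not adjacent to $a_4$. Since $x$ lies on the same side as $a_4$, the pair $\{x,y\}$ does not dominate $a_4$. So the argument does not yet establish a dominating edge. (Your opening phrase about choosing $x_1,y_1$ "maximizing the neighbourhoods in the NNO of Theorem~\ref{thm1}" is also not well defined: that ordering only covers $A_2$ and $B_2$, not $A_1,B_1$.)

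The repair is immediate. Let $u_1$ be a vertex of minimum degree in $A_2$; its neighbourhood is nonempty by connectivity. Take $y\in N(u_1)$. Then $y\in N(u_1)\subseteq N(u_g)$ for every $g$, and $y\in B_1$ by Lemma~\ref{lem2}, so $y$ is also adjacent to all of $A_1$. Symmetrically, take $x\in N(v_1)$. If $A_2$ (resp.\ $B_2$) is empty, any vertex of $B_1$ (resp.\ $A_1$) will do.

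With this correction your proof is essentially the paper's. The paper simply asserts that, by NNO, $\{x_1,y_1\}$ is a minimum connected dominating set, and your maximum-leaf-spanning-tree framing reduces to exactly that. Your optimality discussion is more careful than the paper's. The paper overlooks that when $|A|=1$ or $|B|=1$ (a star), a single vertex suffices, so $\{x_1,y_1\}$ is not minimum there.
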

\begin{proof}
	Since $G$ has {\em {\em {\em NNO}}} property, $\{x_1,y_1\}$ is a trivial minimum connected dominating set in $G$. 
\end{proof}
	\subsubsection{Longest cycles in $P_5$-free Chordal Bipartite Graphs}
	Similar to the longest path, the longest cycle is an induced cycle of maximum length in $G$. We work with no instances of HCYCLE. It is clear that $A\ne B$ or there must exist a vertex in $A_2 ~(B_2)$ that does not satisfy the condition mentioned in Theorem 2.  We prune the violated vertices from $G$ that are not part of the longest cycle.  Let $u_r$ be the first vertex in $A_2$ with $d(u_r) \leq r$.  Remove $u_r$ and relabel the vertices of $A_2$ until there is no such  $u_r$.  After say $c$ iterations, $A_2$ becomes $(u_1, u_2,\ldots, u_{p-c})$ such that for $\forall$$u_g$, 1$\leq g \leq (p-c)$, $d(u_g) > g$. 
	Similarly, after say $d$ iterations, $B_2$ becomes $(v_1, v_2,\ldots, v_{q-d})$ such that for $\forall$$v_h$, 1$\leq h \leq (q-d)$, $d(u_h) > h$. After pruning, $A$ reduced to the set $A'$, $|A'|= |A|- c$ and $B$ reduced to the set $B'$, $|B'|=|B|-d$.  Let the modified graph be $G^*$.\\ 
	
	\noindent \textbf{Case 1: $|A'| = |B'|$}\\
	$G^*$ satisfies {\em NNO} and by Theorem \ref{thm2}, $G^*$ is a yes instance of the Hamiltonian cycle problem. \\
	$ C= (y_1,u_1, y_2,u_2,  \ldots, y_{p-c},u_{p-c}, y_{(p-c)+1}, x_1, v_1,x_2, v_2, \ldots,$ $  x_{q-d},$ $ v_{q-d}, x_{(q-d)+1},\\y_{(p-c)+2}, x_{(q-d)+2},  \ldots,$ $  y_j,$ $ x_i, y_1 )$ \\ 
	\noindent \textbf{Case 2: $|A'| = |B'|+f$, $f\geq 1$}\\
	By Theorem \ref{thm2}, $|A'| = |B'|$ and hence $G^*$ is a yes instance of the Hamiltonian cycle problem.  We remove $f$, $\{u_{(p-c)},u_{(p-c)-1},u_{(p-c)-2},\ldots, u_{(p-c)-(f-1)}\}$ vertices from $A'_2$ in $G^*$ results  $|A'| = |B'|$ and by Theorem \ref{thm2}, $G^*$ has a Hamiltonian cycle.
	\\ $ C= (y_1,u_1, y_2,u_2,  \ldots, y_{(p-c)-f},u_{(p-c)-f}, y_{((p-c)-f)+1}, x_1, v_1,x_2, v_2,$ $\ldots,x_{q-d}, v_{q-d},\\ x_{(q-d)+1},y_{((p-c)-f)+2},$ $ x_{(q-d)+2},  \ldots,$ $ y_j, x_i, y_1 ) $ \\
	
	\noindent \textbf{Claim 2:} $C$ is a longest cycle in $G$. 
	\begin{proof}
		We prune the vertices from $A_2$ and $B_2$ in $G$ and let the modified graph be $G^*$.  We observe that $G^*$ satisfies the condition mentioned in Theorem 2.  Since $G^*$ satisfies {\em (NNO)}, the neighborhood of pruned vertices of $A_2$ and $B_2$ is a subset of $\{y_1,\ldots,y_{p-c}\}$ and $\{x_1,\ldots,x_{q-d}\}$, respectively.  Therefore, the pruned vertices cannot be augmented to $C$ to get a longer cycle in $G^*$.  Hence $C$ is the longest cycle in $G^*$. 
	\end{proof}
	
	\noindent\textbf{Trace of the Algorithm:}
	Consider the graph $G$ given in Figure \ref{fig:Lp1}, $u_1$ and $u_2$ violates the degree constraint, we prune the vertices, and the sequence becomes $(u_1,u_2,u_3)$.  With respect to the modified sequence $d_{G^*}(u_2)=2$, prune $u_2$ and the sequence becomes $(u_1,u_2)$.  Similarly, in $B_2$, $v_1$ is pruned, $(v_1,v_2,v_3)$.  We find $v_3$ such that $d_{G^*}(v_2)=2$, prune $v_3$, $(v_1,v_2)$.  Graph $G^*$ satisfies case 1, $|A'| = |B'|$ and the longest cycle $C=(y_1,u_1, y_2,u_2,y_3, x_1, v_1,x_2, v_2,x_3,y_4,x_4,y_5,x_5,y_1)$. \\
	Case 2: Consider the graph $G$ given in Figure \ref{fig:Lp2}, vertices $u_5$ and $u_6$ are pruned and $A'$ becomes  $(u_1,u_2,u_3,u_4)$.  Similarly, $v_1$ is pruned and the sequence becomes $(v_1,v_2,v_3)$.  In the modified graph, $v_2$ and $v_3$ violate degree constraint, and thus $B'_2$ has a vertex $v_1$.  $G^*$ satisfies case 2 of Theorem \ref{thm3}, $|A'| = |B'|+f$, $f=3$.  Let $G^*_1$ be the graph obtained by removing $\{u_2,u_3,u_4\}$.  The longest cycle $C=(y_1,u_1, y_2, x_1, v_1,x_2,y_3,x_3,y_4,x_4,y_5,x_5,y_6,x_6,y_1)$. \\
	
	{\bf Remark:} Proofs of the problems discussed in this section are constructive in nature, and thus, we obtain a polynomial-time algorithm for a generalization of the  Hamiltonian cycle (path).\\\\
	\noindent
	\textbf{Time Complexity Analysis:}
	The {\em  Nested Neighbourhood Ordering (NNO)} of $P_5$-free chordal bipartite graphs can be obtained in linear time by using the perfect edge elimination ordering proposed by Fulkerson and Gross. Further, the Hamiltonian cycle (path) can be obtained in linear time. Since bipancyclic, and homogeneously traceable involves order $n$ call to Hamiltonian cycle (path) algorithm, these two incur $\mathit{O(n^2)}$ time. All other algorithmic results run in linear time.
	\section{Conclusions and Further Research}
	In this paper, we performed fine-grained analysis of HCYCLE (HPATH) for bisplit graphs with chordality as the parameter and proved that HCYCLE (HPATH) is polynomial-time solvable on chordal bisplit graphs and NP-complete on chordal bipartite bisplit graphs. Further, we proved HCYCLE is NP-complete on $P_{10}$-free chordal bipartite graphs. On the algorithmic front, we presented a polynomial-time algorithm for HCYCLE (HPATH) on $P_5$-free chordal bipartite graphs, using which we solved its variants and generalizations.   A natural direction for further research is to reduce the complexity gap between $P_5$-free chordal bipartite graphs and $P_{10}$-free chordal bipartite graphs for HCYCLE (HPATH).  \\\\
		\textbf{Declaration} - The authors declare that they have no conflict of interest.
	
	\bibliography{references}
\end{document}